\newif\ifbiblatex
\newif\ifvfull
\newif\ifchecklist
		\DeclareMathSymbol{\shortminus}{\mathbin}{AMSa}{"39}
	\tikzset{is bn/.style={background rectangle/.style={fill=blue!35,opacity=0.3, rounded corners=5},show background rectangle}}
    \theoremstyle{plain}
    \newtheorem{theorem}{Theorem}
    \newtheorem{prop}[theorem]{Proposition}
    \newtheorem{claim}{Claim}
    \newtheorem{conj}[theorem]{Conjecture}
    \newtheorem{lemma}[theorem]{Lemma}
    \theoremstyle{definition}
    \declaretheorem[name=Definition, qed=$\square$]{defn}
    \declaretheorem[name=Construction, sibling=defn]{constr}
    \declaretheorem[name=Example, qed=$\triangle$]{example}
	\crefname{defn}{Definition}{Definitions}
	\crefname{prop}{Proposition}{Propositions}
    \crefname{issue}{Issue}{Issues}
	\let\H\relax
	\DeclareMathOperator{\H}{\mathrm{H}} %
	\DeclareMathOperator{\I}{\mathrm{I}} %
	\DeclareMathOperator*{\Ex}{\mathbb{E}} %
    \newcommand{\mat}[1]{\mathbf{#1}}
    \DeclarePairedDelimiterX{\infdivx}[2]{(}{)}{%
		#1\;\delimsize\|\;#2%
	}
	\newcommand{\thickD}{I\mkern-8muD}
	\newcommand{\kldiv}{\thickD\infdivx}
	\newcommand{\tto}{\rightarrow\mathrel{\mspace{-15mu}}\rightarrow}
    \newcommand{\CI}{\mathbin{\bot\!\!\!\bot}}
	\newcommand\numberthis{\addtocounter{equation}{1}\tag{\theequation}}
	\newcommand{\subalign}[1]{%
	  \vcenter{%
	    \Let@ \restore@math@cr \default@tag
	    \baselineskip\fontdimen10 \scriptfont\tw@
	    \advance\baselineskip\fontdimen12 \scriptfont\tw@
	    \lineskip\thr@@\fontdimen8 \scriptfont\thr@@
	    \lineskiplimit\lineskip
	    \ialign{\hfil$\m@th\scriptstyle##$&$\m@th\scriptstyle{}##$\hfil\crcr
	      #1\crcr
	    }%
	  }%
	}    
    \newcommand{\hathat}[1]{%
    \begingroup%
      \let\macc@kerna\z@%
      \let\macc@kernb\z@%
      \let\macc@nucleus\@empty%
      \hat{\mathchoice%
        {\raisebox{.2ex}{\vphantom{\ensuremath{\displaystyle #1}}}}%
        {\raisebox{.2ex}{\vphantom{\ensuremath{\textstyle #1}}}}%
        {\raisebox{.16ex}{\vphantom{\ensuremath{\scriptstyle #1}}}}%
        {\raisebox{.14ex}{\vphantom{\ensuremath{\scriptscriptstyle #1}}}}%
        \smash{\hat{#1}}}%
    \endgroup%
    }
    \newcommand{\derind}[2]{ {#1}^{+\tilde{#2}(\X^{\U})}}
\newcommand{\TODO}[1][INCOMPLETE]{{\centering\Large\color{red}$\langle$~\texttt{#1}~$\rangle$\par}}
    \newcommand{\Pa}{\mathop{\mathbf{Pa}}}
	\newcommand{\SIMInc}{\mathrm{\SQIM}\mathit{\mskip-2.5muI\mskip-3.5mun\mskip-1.7muc}} %
	\tikzset{AmpRep/.style={ampersand replacement=\&}}
	\tikzset{center base/.style={baseline={([yshift=-.8ex]current bounding box.center)}}}
	\tikzset{paperfig/.style={center base,scale=0.9, every node/.style={transform shape}}}
	\tikzset{dpadded/.style={rounded corners=2, inner sep=0.7em, draw, outer sep=0.3em, fill={black!50}, fill opacity=0.08, text opacity=1}}
	\tikzset{dpad1/.style={outer sep=0.1em, inner sep=0.4em, draw=gray!50, rounded corners=2, fill=black!08, fill opacity=1, align=center}}
	\tikzset{dpad0/.style={outer sep=0.05em, inner sep=0.3em, draw=gray!75, rounded corners=4, fill=black!08, fill opacity=1, align=center}}
	\tikzset{dpadinline/.style={outer sep=0.05em, inner sep=2.5pt, rounded corners=2.5pt, draw=gray!75, fill=black!08, fill opacity=1, align=center, font=\small}}
 	\tikzset{dpad/.style args={#1}{every matrix/.append style={nodes={dpadded, #1}}}}
	\tikzset{light pad/.style={outer sep=0.2em, inner sep=0.5em, draw=gray!50}}
	\tikzset{arr/.style={draw, ->, thick, shorten <=3pt, shorten >=3pt}}
	\tikzset{arr0/.style={draw, ->, thick, shorten <=0pt, shorten >=0pt}}
	\tikzset{arr1/.style={draw, ->, thick, shorten <=1pt, shorten >=1pt}}
	\tikzset{arr2/.style={draw, ->, thick, shorten <=2pt, shorten >=2pt}}
	\newcommand\cmergearr[5][]{
		\draw[arr, #1, -] (#2) -- (#5) -- (#3);
		\draw[arr, #1, shorten <=0] (#5) -- (#4);
		}
	\newcommand\mergearr[4][]{
		\coordinate (center-#2#3#4) at (barycentric cs:#2=1,#3=1,#4=1.2);
		\cmergearr[#1]{#2}{#3}{#4}{center-#2#3#4}
		}
	\newcommand\cunmergearr[5][]{
		\draw[arr, #1, -, shorten >=0] (#2) -- (#5);
		\draw[arr, #1, shorten <=0] (#5) -- (#3);
		\draw[arr, #1, shorten <=0] (#5) -- (#4);
		}
	\newcommand\unmergearr[4][]{
		\coordinate (center-#2#3#4) at (barycentric cs:#2=1.2,#3=1,#4=1);
		\cunmergearr[#1]{#2}{#3}{#4}{center-#2#3#4}
		}
    \newcommand{\nhphantom}[2]{\sbox0{\kern-2%
    \nulldelimiterspace$\left.\delimsize#1\vphantom{#2}\right.$}\hspace{-.97\wd0}}
    \newsavebox{\abcmycontentbox}
    \newcommand\DeclareDoubleDelim[5]{
    \DeclarePairedDelimiterXPP{#1}[1]%
        {%
            \sbox{\abcmycontentbox}{\ensuremath{##1}}%
        }{#2}{#5}{}%
        {%
            \nhphantom{#3}{\usebox\abcmycontentbox}%
            \hspace{1.2pt} \delimsize#3%
            \mathopen{}\usebox{\abcmycontentbox}\mathclose{}%
            \delimsize#4\hspace{1.2pt}%
            \nhphantom{#4}{\usebox\abcmycontentbox}%
        }%
    }
	\newcommand{\X}{\mathcal X}
	\newcommand{\V}{\mathcal V}
	\newcommand{\N}{\mathcal N}
	\newcommand{\Ar}{\mathcal A}
    \newcommand{\balpha}{\boldsymbol\alpha}
    \newcommand{\bbeta}{\boldsymbol\beta}
	\def\p_#1{\mathbb P_{\!#1\mskip-2mu}}
	\newif\ifsuba \subatrue
	\newcommand\Src[1]{\ifsuba S\mskip-2mu\vphantom{|}_{{#1}} \else S \fi}
	\newcommand\Tgt[1]{\ifsuba T\mskip-3mu\vphantom{|}_{{#1}} \else T \fi}
	\DeclareMathAlphabet{\mathdcal}{U}{dutchcal}{m}{n}
	\DeclareMathAlphabet{\mathbdcal}{U}{dutchcal}{b}{n}
	\newcommand{\dg}[1]{\mathbdcal{#1}}
	\newcommand\Inc{\mathit{Inc}}
	\newcommand{\IDef}{\mathit{I\mskip-3.0muD\mskip-2.5mue\mskip-1.5muf}}
	\newcommand\SDef{\mathit{S\mskip-3.5muD\mskip-2.5mue\mskip-1.5muf}} %
	\newcommand{\ed}[3]{#2%
	 {\overset{\smash{\mskip-5mu\raisebox{-1pt}{$\scriptscriptstyle
	        #1$}}}{\rightarrow}} #3}
	\newsavebox{\aar@content}
	\newcommand\aar{\@ifstar\aar@one@star\aar@plain}
	\newcommand\aar@one@star{\@ifstar\aar@resize{\aar@plain*}}
	\newcommand\aar@resize[1]{\sbox{\aar@content}{#1}\scaleleftright[3.8ex]
		{\Biggl\langle\!\!\!\!\Biggl\langle}{\usebox{\aar@content}}
		{\Biggr\rangle\!\!\!\!\Biggr\rangle}}
\renewcommand{\V}{\mathrm V}
\newcommand\U{\mathcal U}
\newcommand\emodels{\mathbin{{\models}^{\mathllap{e\,}}_{\vphantom{l}}}}
\newcommand\dotmodels{\mathbin{{\models}^{\mathllap{\bullet\,}}_{\vphantom{l}}}}
\newcommand\PSEMs{\mathrm{PSEMs}}
\newcommand\PSEMsA{\mathrm{PSEMs}_{\!\Ar}}
\newcommand\Wits{\mathrm{Wits}}
\newcommand\enV{\mathcal V} %
\newcommand\infl{\Pa}
\newcommand\muxor{\mu_{\mathit{xor}}}
\newcommand\hyperarc{hyperarc}
\newcommand\arc{hyperarc}
\newcommand\partl{partitional}
\newcommand\subpartl{sub\partl}
\newcommand\unipartl{uni\partl}
\newcommand\suppartl{super\partl}
\newcommand\SQIM{QIM}%
\newcommand\scibility{\SQIM-compatibility}
\newcommand\Scibility{\SQIM-compatibility}
\newcommand\SCibility{\SQIM-Compatibility}
\newcommand\scible{\SQIM-compatible}
\newcommand\escible{E\scible}
\newcommand\cible{compatible}
\newcommand\cibility{compatibility}
\newcommand\vjoe[1]{{\color{red!80!black}#1}}
\newcommand\voli[1]{{\color{orange!80!black}#1}}
\newenvironment{vnew}{\color{green!80!blue!40!black}}{}
\newcommand\dhygraph{directed hypergraph}
\newcommand\hgraph{hypergraph}
\newcommand\DHygraph{Directed Hypergraph}
\newcommand\commentout[1]{}
\newcommand\vfull[1]{{\color{gray}\hypersetup{linkcolor=blue!75!black!40!white, urlcolor=magenta!40!white,citecolor=green!50!black!40!white}#1}}
\newcommand\vfull[1]{}
    \newsavebox\cyclebox
    \sbox\cyclebox{\begin{tikzpicture}[center base]
        \node[dpad0] (X) at (0:.8) {$X$};
        \node[dpad0] (Y) at (120:.8) {$Y$};
        \node[dpad0] (Z) at (-120:.8) {$Z$};
        \draw[arr2] (X) to 
            (Y);
        \draw[arr2] (Y) to
            (Z);
        \draw[arr2] (Z) to 
            (X);
    \end{tikzpicture}}
\title{
Qualitative Mechanism Independence
}
\author{%
    Oliver E.~Richardson\\
    Dept of Computer Science\\
    Cornell University\\
    Ithaca NY 14853 \\
    \texttt{oli@cs.cornell.edu} \\
\And
    Spencer Peters \\
    Dept of Computer Science\\
    Cornell University\\
    Ithaca NY 14853 \\
    \texttt{speters@cs.cornell.edu}
\And
    Joseph Y.~Halpern \\
    Dept of Computer Science\\
    Cornell University\\
    Ithaca NY 14853 \\
    \texttt{halpern@cs.cornell.edu}
}
\begin{document}
\maketitle

\begin{abstract}
We define what it means for a joint probability distribution
to be 
\emph{(QIM-)compatible}
with a set of independent causal mechanisms,
at a qualitative level---%
or, more precisely, with a directed hypergraph $\Ar$,
which is the qualitative structure of
a probabilistic dependency graph (PDG). 
When $\Ar$ represents a qualitative Bayesian network, 
{\scibility} with $\Ar$
reduces to satisfying the appropriate conditional independencies. 
But giving semantics to hypergraphs using \scibility\ lets us do much more.
For one thing, we can capture functional \emph{dependencies}.
For another, \scibility\ captures important aspects of causality:
we can use compatibility to understand cyclic causal graphs, and
to demonstrate compatibility is essentially to produce a causal model. 
Finally, compatibility has deep connections to 
information theory.
Applying compatibility to cyclic structures helps to clarify a
longstanding conceptual issue 
in information theory.
\end{abstract}

\commentout{
\begin{abstract}
We present a simple yet expressive graphical language for describing qualitative structure
such as dependence and independence in probability distributions. 
At its core is a definition of what it means for a joint probability distribution
to be \emph{(QIM)-compatible} with a set of independent mechanisms, as represented by a (directed) hypergraph $\mathcal A$.
(Our definition is inspired by probabilistic dependency graphs (PDGs), which have hypergraphs as their qualitative structure.)
It is closely related to causality---demonstrating compatibility is essentially equivalent to producing a causal model.
In the special case where $\mathcal A$ represents a qualitative Bayesian network or acyclic causal model, 
QIM-compatibility with $\mathcal A$ reduces to satisfying the appropriate conditional independencies. 
But QIM-compatibility lets us do much more. 
For one thing, it can capture functional dependencies; for another, it gives qualitative meaning to cyclic structures, which has been largely an open problem.
We show how to derive concrete consequences of this qualitative meaning by proving a generalization of the Bayesian network special case: for \emph{any} hypergraph $\mathcal A$, QIM-compatibility with it implies information-theoretic constraints (of a form that generalizes both conditional independencies and functional dependencies). 
As an application, we give a novel formalization of the notion of 
``only pairwise interactions''
 as QIM-compatibility with a certain cyclic structure, that leads us to a surprising resolution of a longstanding conceptual issue with Shannon information.
 \end{abstract}}

\section{Introduction}
The 
structure of a 
(standard)
probabilistic graphical model
(like a Bayesian Network or Markov Random Field)
    encodes a set of conditional independencies among variables.
This is useful because it enables a compact description of
    probability distributions that have those independencies;
it also lets us use graphs as a visual language for
    describing important qualitative
    properties of a probabilistic world.
Yet these kinds of independencies are not the only important
    qualitative aspects of a probability measure.
In this paper, we study a natural generalization of standard graphical model structures that can describe far more than conditional independence. 

For example, another qualitative aspect of a probability distribution is that of
functional \emph{dependence}, 
    which is also exploited across computer science to enable compact representations and simplify probabilistic 
    analysis.
\commentout{
    An acyclic causal model, for example, specifies a distribution  
    via a probability over \emph{contexts} (the values of variables whose
    cause is determined outside the model) 
    together with a collection of equations (i.e., functional dependencies)
    describing how each non-exogenous variable can be determined
        as a function of other variables \cite{pearl:2k}.
}%
Acyclic causal models, for instance, specify a distribution  via a
probability over \emph{contexts}
(the values of variables whose causes are viewed as outside the model),
and a collection of equations (i.e., functional dependencies) \citep{pearl:2k}.
And in deep learning, a popular class of models called 
    \emph{normalizing flows}
    \citep{tabak2010density,Kobyzev_2021}
    specify a distribution
    by composing a fixed
    distribution
    over some latent space,
    say a standard normal distribution,
    with a function (i.e., a functional dependence) fit to observational data.
\vfull{%
Similarly, complexity theorists often regard
    a probabilistic Turing machine as a deterministic function
    that takes as input a uniformly random string \citep{??}.
}%
Functional dependence and independence 
are deeply related and interacting notions.
For instance,
    if $B$ is a function of $A$ (written $A \tto B$) and $A$ is independent of $C$ (written $A \CI C$), then $B$ and $C$ are also independent ($B \CI C$).
\unskip\footnote{This well-known fact (\cref{lem:indep-fun}) is formalized
    and proved in \cref{appendix:proofs},
    where all proofs can be found.}
Moreover, dependence can be written in terms of independence:
    $Y$ is a function of $X$ if and only if
    $Y$ is conditionally independent of itself given $X$ 
    (i.e., $X \tto Y$ iff $Y \CI Y \mid X$).
\commentout{
\begin{equation}
    Y \CI Y \mid X 
    \quad\iff\qquad 
    X \tto Y .
        \label{eq:conditional-self-independence-det}
\end{equation}}
Traditional graph-based languages such as Bayesian Networks (BNs) and
    Markov Random Fields (MRFs) cannot capture 
these relationships.
Indeed, the graphoid axioms (which describe BNs and MRFs) \cite{pearl1987graphoids} and axioms for conditional independence \citep{naumov2013re},
do not even consider statements like $A \CI A$ to be syntactically valid. 
Yet such statements are perfectly meaningful,
and reflect a deep relationship between independence, dependence, and generalizations of both notions
(grounded in information theory, a point we will soon revisit).

\commentout{%
Yet another critical qualitative aspect of a distribution is the
causal connections it describes. As we shall see, this too can be
captured in our framework, and doing so allows us to capture
connections between causality, independence, and interventions.
Indeed, our approach brings causal reasoning to the forefront.
}%

This paper
provides
a simple yet expressive graphical
    language for describing qualitative
    structure 
    such as dependence and independence
    in probability distributions.
\commentout{%
\emph{Probabilistic dependency graphs (PDGs)}, introduced in
\citep{pdg-aaai}), are a 
graphical 
    language for describing structure such as dependence and independence
    in probability distributions.
They are a generalization of
BNs and factor graphs, that can capture inconsistent beliefs \citep{pdg-aaai},
Roughly speaking, a PDG is a graphical model where,
like a Bayesian network (BN), nodes are associated with variables, and
are related by \hyperarc s.  
In a (quantitative) PDG, these \hyperarc s
are associated conditional probability tables, in the spirit of a BN.
In this paper, we focus on qualitative PDGs, which can be viewed as
\emph{\dhygraph s}.
We provide a novel semantics for qualitative PDGs that allows us to
capture dependence and independence in a way that generalizes the
semantics of qualitative BNs and, as we show, has a variety of useful
applications (including, among other things, allowing us to capture
the reasoning dependency described above).
}%
The idea
is to specify 
the inputs and outputs of  
a set of \emph{independent mechanisms}: processes by which
some 
target variables $T$ are determined as a
(possibly randomized) function of  
some
source variables $S$.
This idea generalizes 
    intuition going back to \citet{pearl:2k}
    by allowing, for example, 
    two mechanisms to share a target variable.
So at a qualitative level, 
the modeler specifies not a (directed)
graph, but a 
(directed)
\emph{hypergraph}.
\commentout{%
---which is the structure of
another type of probabilistic graphical model:
a \emph{probabilistic dependency graph (PDG)} \citep{pdg-aaai,one-true-loss,pdg-infer}.
}%

If we were interested in a
    concrete probabilistic model, we would also need to annotate this hypergraph with 
    quantitative information describing the mechanisms.  
For directed acyclic graphs, there are two standard approaches: supply conditional
probability distributions (cpds) to get a BN, or supply equations to get a causal model.
Correspondingly, there are two approaches to probabilistic modeling based on hypergraphs.
The analogue of the first approach---supplying a probability $P(T|S)$ for each mechanism---%
    leads to the notion of a \emph{probabilistic dependency graph (PDG)} \citep{pdg-aaai,one-true-loss,pdg-infer}.
The analogue of the second approach---supplying an equation describing $T$ as a function of $S$ and independent random noise---leads to a novel generalization of a causal model
    (\cref{defn:GRPSEM}).
Models of either kind are of interest to us only insofar as they explain 
    how hypergraphs encode qualitative aspects of probability. 
Qualitative information in a PDG was characterized
by \citet{pdg-aaai}
using a scoring function
that, despite having some attractive  properties, 
lacks justification and has not been fully understood.
In particular, the PDG formalism does not appear to answer 
    a basic question:
\textit{%
    what does it mean for a distribution to be compatible
    with a directed hypergraph structure?
}

We develop precisely such a notion (\cref{defn:scompat})
of compatibility between a distribution $\mu$ and 
a directed hypergraph qualitatively representing a collection of independent mechanisms---%
or, for short, simply {(QIM-)}compatibility.
This 
definition
    allows us to use directed hypergraphs as
    a language for specifying structure in 
    probability distributions,
    of which the semantics of qualitative BNs are a special case (\cref{theorem:bns}).
Yet \scibility\ can do far more than represent conditional independencies in acyclic networks. 
For one thing, it can encode arbitrary 
    functional dependencies (\cref{theorem:func});
for another, it gives meaningful semantics to cyclic models.
\commentout{\Scibility\ satisfies a \emph{monotonicity} property (\cref{sec:monotone}) that helps us begin to understand these cyclic models.}%
Indeed, compatibility lets us go well beyond capturing dependence
and independence. The fact that \citet{pearl:2k} views
causal models as representing independent mechanisms suggests
that there might be a 
connection to causality. 
In fact, there is.
A \emph{witness} that a distribution $\mu$ is \cible\ with a 
\hgraph\ $\Ar$ is an extended distribution $\bar\mu$ that 
is nearly equivalent to (and guarantees the existence of)
a causal model
that explains $\mu$ with dependency structure $\Ar$
(\cref{prop:sc-graph-arise,prop:gen-sim-compat-means-arise,prop:witness-model-properties}). 
As we shall see, thinking in terms of 
witnesses and compatibility
allows us to tie
together causality, dependence, and independence.

Perhaps
surprisingly,
\cibility\ 
also has deep connections with information theory
(\cref{sec:info}).
\commentout{%
    These connections turn out to be quite useful: they yield a generic
    information-theoretic test for complex (e.g., cyclic) dependency
    structures, and enable causality to clarify some important
    misunderstandings in information theory
    (\cref{example:nonneg-ii,example:ditrichotomy}).
}%
The
conditional independencies of a BN can be viewed as a certain kind of information-theoretic constraint.
Our notion of compatibility with a hypergraph $\Ar$ 
    turns out to imply a generalization of this
    constraint
    (closely related to the qualitative PDG scoring function)
    that is meaningful for all hypergraphs
    (\cref{theorem:sdef-le0}).
Applied to cyclic models, it yields a causally inspired notion of pairwise interaction that clarifies some important misunderstandings in information theory (\cref{example:nonneg-ii,example:ditrichotomy}).
\vfull{%
    It also gracefully handles incomplete fragments of a causal picture, as well as ``over-determined'' ones.
}%
\commentout{
Last but far from least, 
    our definition of \scibility\ 
        also has deep ties to the causality literature.
    A witness that a distribution $\mu$ is \scible\ with $\Ar$ 
    is an extended distribution $\hat\mu$ that 
    can be naturally converted to a \emph{structural equations model (SEM)}
        \citep{pearl:2k}.
    \commentout{
        (possibly with constraints \citep{beckers2023causal}).
    }
    We explore this in \cref{sec:causal}.
}

\commentout{%
Clearly the applicaitons of \scibility\ extend far beyond providing a compatibility semantics for the qualitative information in a PDG. 
Yet, on that original front, the situation is only partially resolved.
\commentout{
\citet{pdg-aaai}, for instance, define what it means for a distribution
    to be compatible with the probabilistic information in a PDG,
    but do not provide an analogous definition for the qualitative information.
They do, however, provide a scoring function quantifying the discrepancy between a probability distribution and the qualitative information in a PDG.  
While that scoring function   has some nice properties,
it is not well-justified, and can also behave unintuitively.
(For example, it can be negative, making it difficult to interpret 
    \unskip.) 
We derive an arguably 
    more intuitive scoring function here, 
    based on \scibility.
    Our new scoring function turns out to have deep
connections to the original
one:
the former is both a pointwise upper bound on the latter, 
and also a special case of it (when applied to a transformed PDG).
We explore these connections in \cref{sec:pdgs}. 
}%
Although the two are very closely related, we have found that \scibility\ is not quite the same as the qualitative aspect of a PDG (at least, not with its current scoring function semantics). 
In fact, we will see (\cref{sec:pdgs}) that PDGs are actually more expressive, in the sense that our concept of mechanism independence can be naturally encoded within them. 
So, as far as PDGs are concerned, this paper can be interpreted in two ways: as an alternative, more restrictive but causally grounded semantics for all PDGs, 
or as a case study on just those PDGs in the image of the encoding (under their original semantics).
Either way, the concept has implications far beyond PDGs.
}

Saying that one approach to
qualitative graphical modeling has
connections to so many different notions is a rather bold claim.  We spend the rest of the paper justifying it.

\commentout{
To summarize, our semantics for qualitative PDGs gives us a powerful modeling
tool to capture dependencies and independencies; when combined with the
quantitative information in a full PDG, we can go well beyond standard
graphical models.  
An obvious question is how difficult inference is for such PDGs. 
Although our results show that 
    \scibility\ 
    be viewed as a special case of 
    PDG semantics,
    the inference algorithm provided by
    \citet{pdg-infer}, unfortunately, does not apply.
\commentout{
We describe an alternative approach in \cref{sec:null}
    that works well for small PDGs, 
    but we do not have a proof that it succeeds in all cases, 
    and it scales very poorly.
}
}

\section{Qualitative Independent-Mechanism (QIM) Compatibility}
    \label{sec:scompat}

In this section, we present the central definition of our paper: a way of making precise
Pearl's notion of ``independent mechanisms'', used to motivate Bayesian Networks from a causal perspective. 
\commentout{%
When specialized to hypergraphs $\Ar_G$ that come from directed acyclic graphs (\emph{dag}s), 
we get an intuitive characterization of independencies in BNs that is equivalent to, yet quite different from, the standard characterization.  
}%
\Citet[p.22]{pearl2009causality}
states that
    \textit{%
    ``each parent-child relationship in a causal Bayesian network represents a
    stable and autonomous physical mechanism.''
    }%
But, technically speaking,
a parent-child relationship only partially describes
the mechanism.  Instead, the autonomous mechanism that determines the child is really represented by that child's joint relationship with all its parents.
So, the qualitative aspect of a mechanism is best represented as a directed \emph{\hyperarc}
\citep{gallo-dirhypergraphs1993}
\unskip, that can have multiple sources.
\begin{defn}
    A \emph{directed hypergraph}
    (or simply a {\hgraph}, since all our \hgraph s will be directed)
    consists of a set $\N$ of nodes
    and a set $\Ar$ of 
    directed hyperedges, or \emph{\hyperarc s};
    each \hyperarc\  $a \in \Ar$ is associated with
        a set $\Src a \subseteq \N$ of
    source nodes and a set $\Tgt a \subseteq \N$ of target nodes.    %
    We write $\ed {\scriptstyle a}{S}{T} \in \Ar$ to specify a
    \arc\ $a \in \Ar$ together with its sources $S = \Src a$ and targets $T = \Tgt a$.
Nodes that are neither a source nor a target of any \hyperarc\ will
seldom have any effect on our constructions; the other nodes can
be recovered from the \hyperarc s (by selecting $\N := \bigcup_{a \in
    \Ar} \Src a \cup \Tgt a$). Thus, we often leave $\N$ implicit,
referring to the \hgraph\ simply as $\Ar$.
\commentout{
    We can assume without loss of generality that $\Ar$ contains an identity \arc\ 
    $\ed {\mathrm{id}}NN$
    for each $N \in \N$, 
    in which case $\Ar$ alone contains all the information in the hypergraph.
    For this reason, 
    
    we refer to the \dhygraph\ simply as $\Ar$.
}%
\commentout{
    We sometimes abuse notation and ignore the set of variables and their
    values, referrting to the the \dhygraph\ simply as $\Ar$.
}%
    \expandafter\commentout\vjoe{
    A \emph{directed hypergraph}
    consists of a set $\N$ of nodes
    and a set $\Ar$ of \emph{directed hyperedges}.
    Each node in $\N$ is associated with a variable, just as in other graphical models. 
    A directed hyperedge is a generalization of a directed edge in a
    directed graph; each directed hyperedge  $a \in \Ar$ is associated with
    a set $\Src a \subseteq \N$ of
    source nodes and a set $\Tgt a \subseteq \N$ of target
    nodes.
    (A directed edge in a directed graph can be thought of
    as having a source and targe that are singletons.)
                We write $\ed {\scriptstyle a}{S}{T} \in \Ar$ to specify
    a directed hyperedge
                $a \in \Ar$ together with its sources $S = \Src a$ and targets $T = \Tgt a$.
    We assume that every node in $\N$ is in the source or target of some
    directed hyperedge.  
    This means that we can recover $\N$ from $\Ar$, so
    we often surpress $\N$, referring to the
    \dhygraph\ simply as $\Ar$.
    From here on, for ease of exposition, we refer to  a directed
    hypergraph as just a 
    hypergraph, and a directed hyperedge as a \emph{\hyperarc}.  We
    also identify a node $X \in \N$ with the variable associated with
    it, so talk about a variable $X \in \N$.
    }
\end{defn}

\commentout{
\vjoe{
In  typical graphical models, a node is associated not just with a
variable $X$, but a set $V(X)$ of possible values of X.  We
deliberately do not do that here.  It makes perfect sense to talk
about a variable $X$ being independent of a variable $Y$ without
specifying the set of possible values of $X$ and $Y$.  However, when we talk
about a distribution $\mu$ on a set of variables, we assume that the
the variables are associated with values.  For example, we talk about
a distribution $\mu$ on $\X = (\N,V)$ for some choice of $V$ mapping each
variable $X \in \N$ to a set $V(X)$ of values.
}
\voli{
Because the
nodes that are relevant for our purposes  
be recovered from the \hyperarc s (by selecting $\N := \bigcup_{a \in \Ar} \Src a \cup \Tgt a$), we often surpress $\N$, referring to the \dhygraph\ simply as $\Ar$.
A \dhygraph---henceforth, simply a \hgraph, since it's the only kind we discuss---can be viewed as 
    an \emph{(unweighted) qualitative PDG} \citep{pdg-infer}
    by identifying its nodes with variable names.
To \emph{fully} interpret a node $X \in \N$ as a variable, it must
be
associated with a (finite) set $\V(X)$
of possible values.
Thus, a set of variables $\X$ is equivalent to 
    the pair $(\N, \V \colon \N \to \mathbf{Sets})$.
\commentout{
    We draw a distinction between $\N$ and $\X$ to emphasize that $\X$ can be decomposed into its qualitative part $\N$ the set of variable symbols, and the map $\V$ which gives a quantiative interpretation of each variable as a set of possible values.
    It is convenient (and common practice) to conflate a variable name $X \in \N$ with  its corresponding variable $(X, \V\!X ) \in \X$.
}%
\emph{Structural} properties of a joint distribution $\mu(\X)$, however,
can depend only on $\N$, not on $\V$.
With that in mind, we use the same symbol 
    for a variable and its name,
    and abuse terminology by calling both objects ``variables'' when no ambiguity can result.
For example, 
    ``$X$ determines the variable $Y$''
    is meaningful across distributions in 
        which $V(Y)$ can vary, 
    because ``$Y$'' here refers only to a variable name.
}
}%
\commentout{Following the graphical models literature, we are interested in \hgraph s whose nodes correspond to variables \unskip---or, more precisely, where $\N$ is a set of variable \emph{names}. It is standard to take $\N$ to be a set $\X$ of variables, each of which implicitly comes with a  (measurable) set $\V(X)$ of possible values; we deliberately do not do that here. 
It makes perfect sense to say that $X$ is independent of $Y$ without specifying possible values of $X$ and $Y$. Of course, when we talk concretely about a distribution $\mu$ on a set of variables $\X \cong (\N, \V)$, those variables must have possible values. The key point is that \emph{qualitative} properties of $\mu$, such as (in)dependence are expressible purely in terms of $\N$, not $\V$. }
Following the graphical models literature, we are interested in \hgraph s whose nodes represent variables,
so that each $X \in \N$ will ultimately be associated with a
(for simplicity, finite)
set $\V(X)$ of possible values.
However, one should not think of $\V$ as part of the information carried by the hypergraph. 
{%
It makes perfect sense to say that $X$ and $Y$ are independent without specifying the possible values of $X$ and $Y$.
}%
Of course, when we talk concretely about a distribution $\mu$ on a set of variables $\X \cong (\N, \V)$, those variables must have possible values
\unskip---but the \emph{qualitative} properties of $\mu$, such as independence, can be expressed purely in terms of $\N$, without reference to $\V$. 

Intuitively, we expect a joint distribution $\mu(\X)$ to be 
qualitatively
compatible with a {s}et of {i}ndependent {m}echanisms
(whose structure is given by a hypergraph $\Ar$)
if 
there is
a mechanistic explanation of how each target
arises as a function of the variable(s) on which it depends
and independent random noise.
This is made precise by the following definition.

\begin{defn}[\scibility]
        \label{defn:scompat}
    Let $\X$ and $\mathcal Y$ be (possibly identical) sets of variables, and
    $\Ar = \{ \ed{a}{\Src a}{\Tgt a} \}_{a \in \Ar}$ be a hypergraph with 
    nodes $\X$.    
    We say a distribution $\mu(\mathcal Y)$ is
    \emph{%
        qualitatively
        independent-mechanism compatible}, or (\SQIM-)compatible,
    with $\Ar$
    (symbolically: $\mu \models \Diamond \Ar$)
    iff
    there exists an extended distribution
    $\bar\mu(\mathcal Y \cup \X \cup \,\U_{\Ar})$
    of $\mu(\mathcal Y)$ to 
    $\X$ and
    to\, $\U_{\Ar} = \{ U_a \}_{a \in \Ar}$,
    an additional set of ``noise'' variables 
    (one variable per \arc)
    according to which:
\begin{enumerate}[label=(\alph*),itemsep=0pt,topsep=0.0ex,parsep=0.5ex]
\item 
    the variables $\mathcal Y$ are distributed according to $\mu$
    \hfill(i.e., $\bar\mu(\mathcal Y) = \mu(\mathcal Y)$),
\item
the variables\, $\U_{\Ar} $ are mutually independent
    \hfill (i.e., $\bar\mu(\U_{\Ar} ) = \prod_{a \in \Ar} \bar\mu(U_a)$ ),
    and
\item the target variable(s) $\Tgt a$ of each \arc\ $a \in \Ar$ are\\ determined
    by $U_a$ and the source variable(s) $\Src a$
    \hfill (i.e., $\forall a \in \Ar.~ \bar\mu \models (\Src a, U_a) \tto \Tgt a$)
    \unskip.%
\end{enumerate}

We call such a distribution
$\bar\mu(\X \cup \mathcal Y \cup\, \U_{\!\Ar})$
 a \emph{witness}
    that $\mu$ is \scible\ with $\Ar$.
\end{defn}

While \cref{defn:scompat} requires the noise variables $\{ U_a \}_{a \in \Ar}$ to be independent of one another, note that they need not be independent of any variables in $\X$.
In particular, $U_a$ may not be independent of $\Src a$,
and so 
the situation can diverge from what one would expect from a randomized algorithm, whose
randomness $U$ is assumed to be independent of its input 
$S$. 
Furthermore, the variables in $\U$ may not be independent of one another conditional on the value of some $X \in \X$.

\begin{example}
    $\mu(X,Y)$ is
    \cible\ with
    $\Ar = \{ \ed 1\emptyset{  \{X\}}, \ed2{\emptyset} {\{Y\}} \}$     
    (depicted in PDG notation as
    $\smash{
    \begin{tikzpicture}[center base]
        \node[dpadinline] (X) {$X$};
        \node[dpadinline,right=0.5em of X] (Y) {$Y$};
        \draw[arr1, <-] (X) -- node[above,inner sep=2pt]{} +(-.7,0);
        \draw[arr1, <-] (Y) -- node[above,inner sep=2pt]{} +(.7,0);
    \end{tikzpicture}
    }$
    ) iff
     $X$ and $Y$ are independent, i.e., 
    $\mu(X,Y) = \mu(X)\mu(Y)$.
    For if $U_1$ and $U_2$ are independent and respectively determine
    $X$ and $Y$, then $X$ and $Y$ must also be independent.%
\end{example}

This is a simple illustration of a more general
phenomenon: when $\Ar$ describes the structure of a
Bayesian Network (BN), then 
\scibility\ with $\Ar$ coincides with satisfying the independencies
    of that BN (which are given, equivalently, by the \emph{ordered Markov properties} \citep{lauritzen-dag-indeps}, 
    \emph{factoring} as a product of probability tables,
    or \emph{d-separation} \citep{geiger-pearl-d-separation}).
To state the general result (\cref{theorem:bns}),
    we must first 
    clarify how the graphs of standard graphical and causal models
    give rise to \dhygraph s.

    \label{sec:bns}
Suppose that
$G = (V,E)$ is a graph, whose edges may be directed or undirected.
Given a vertex $u \in V$,
write $\infl_G(u) := \{ v : (v,u) \in E\}$
    for the set of vertices that can ``influence'' $u$.
There is a natural way to interpret the graph $G$ as giving rise to a set of mechanisms:
    one for each variable $u$, 
    which determines the value of $u$ based
    the values of the variables on which $u$ can depend. 
Formally, let
$\Ar_G := 
    \smash{\big\{} 
    ~{\Pa}_G(u) \ed{u}{~}{~} \{ u \} ~
    \smash{\big\}_{u \in V}}$
be the \hgraph\ \emph{corresponding} to the graph $G$.
\commentout{
The \hgraph s generated by \eqref{defn:graph-to-hypergraph}
    form a special class, and share an important property.
    
\begin{defn}
\commentout{
    A \dhygraph\ $(\N,\Ar)$ is called \emph{\partl} if the targets of its \arc s $\Ar$ form a partition of its node set $\N$,
    and \emph{non-\partl} otherwise.
    A \hgraph\ is \emph{\subpartl} if it contains a node 
    that is not a target of any \arc\
    (i.e., $\cup_{a \in \Ar}\Tgt a \subsetneq \N$),
    \emph{\suppartl} if it contains 
        one   that is a target of multiple \arc s
    (i.e., $\Tgt{a} \cap \Tgt{a'} \ne \emptyset$ for $a \ne a' \in \Ar$),
    and \emph{\unipartl} if its \hyperarc\ targets form a singleton partition 
    (i.e., $\cup_{a \in \Ar}\Tgt a \subsetneq \N$).
}%
    A \dhygraph\ $(\N,\Ar)$ is \emph{\subpartl} if the targets of its \arc s $\Ar$ are disjoint sets, and \emph{\partl} if they form a partition of its node set $\N$.
\end{defn}
}

\commentout{
The targets $\{ \Tgt a \}_{a \in \Ar}$ of $\Ar$ form the singleton partition if and only if $\Ar = \Ar_G$ for some graph $G$.
}

\begin{linked}{theorem}{bns}
If $G$ is a directed acyclic graph and $\mathcal I(G)$ consists of the
independencies of its corresponding Bayesian network,
then $\mu \models \Diamond \Ar_G$ if and only if
    $\mu$ satisfies $\mathcal I(G)$. 
\end{linked}

\commentout{
    In \cref{sec:causal}, we will see how
        a witness of compatibility $\mu(\X, \U)$ can be seen
        as a (partially specified) causal model.
    From this perspective, \cref{theorem:bns} is a form of an equivalence
        well-known in the causlity community:
    every acyclic causal model with independent per-variable noise induces a distribution with the independencies of the appropriate Bayesian Network---%
    and, conversely,
    every distribution with those independencies
    arises from such a causal model.
}

\Cref{theorem:bns} shows, for hypergraphs that 
correspond to directed acyclic graphs (dags),
    our definition of
    \cibility\ 
    reduces exactly to the well-understood independencies of 
BNs.
This means that \scibility, a notion based on the independence of
causal mechanisms, 
gives us a very different way of 
characterizing these independencies
\unskip---\unskip
one that can be generalized to a much larger class of graphical models
that includes, for example, cyclic variants \citep{Baier_2022}.
Moreover, \scibility\ can capture properties other than
independence.  
As the next example shows, it can capture determinism.

\commentout{
But not all \hgraph s are of this form.
Among \partl\ \hgraph s, cyclic models \citep{Baier_2022} are
of particular interest. Our definition of \scibility\ 
gives rise to an interesting generalization of BN independencies 
for such \hgraph s, but we will need the tools developed in 
\cref{sec:monotone,sec:info} to better understand what \scibility\ means in these cases.
Non-partitional \hgraph s are also of interest since they allow us to express determinism, as the following example shows.
}

\begin{example}\label{example:two-edge-det} %
If $\Ar = \smash{
    \begin{tikzpicture}[center base]
        \node[dpadinline] (X) {$X$};
        \draw[arr1, <-] (X) -- node[above,inner sep=1.5pt,pos=.65]{$\scriptstyle 1$} +(-.7,0);
        \draw[arr1, <-] (X) -- node[above,inner sep=1.5pt,pos=.65]{$\scriptstyle 2$} +(.7,0);
    \end{tikzpicture}
    }$
 consists of just two \arc s pointing
to a single variable $X$, then a distribution $\mu(X)$ is \scible\ 
with $\Ar$ iff
$\mu$ places all mass on a single value $x \in \V(X)$.
\end{example}

Intuitively, if two independent coins always give the same answer (the value of $X$), then neither coin can be random. 
This simple example shows that we can capture determinism with multiple \arc s pointing to the same variable.
Such hypergraphs do not correspond to graphs; 
    recall that in a BN, two arrows pointing to $X$
    (e.g., $Y \to X$ and $Z \to X$)
    represent a single mechanism by which $X$ is jointly determined
        (by $Y$ and $Z$),
    rather than two distinct mechanisms.
\vfull{
A central thrust of \citeauthor{pdg-aaai}'s original argument for PDGs over BNs is their
    ability to describe two different probabilities describing a single variable,
    such as $\Pr(X|Y)$ and $\Pr(X|Z)$.
The qualitative analogue of that expressiveness is precisely what allows us to capture functional dependence.
}

\commentout{
\begin{linked}{theorem}{func}
    If $\Ar_0$ is a \dhygraph\ and $\Ar := \Ar_0 \sqcup \{ \ed 1XY, \ed2XY \}$ is the same \hgraph\ augmented with two additional \arc s from $X$ to $Y$, then 
    \begin{enumerate}[label={(\alph*)}]
    \item If 
    $\mu \models \Diamond \Ar_0$ and $\mu \models X \tto Y$, then $\mu \models \Diamond \Ar$.
    \item
    If $\Ar_0 = \Ar_G$ for a directed acyclic graph $G$, 
    then
    $\mu \models \Diamond \Ar$ iff $\mu \models \Diamond \Ar_0$ and $\mu \models X \tto Y$. 
        \item 
        $\mu \models \Diamond \Ar_0$ and $\mu \models X\tto Y$
        if and only if $\mu \models \Diamond \Ar_0 \sqcup \{X \to Y\}^n$ for all $n > 0$.
    \end{enumerate}
\end{linked}
}

Given a \hgraph\ $\Ar = (\N, \Ar)$, 
$X, Y \subseteq\N$,  and a natural number $n \ge 0$,
\gdef\ArXY#1{%
        \Ar{\,\sqcup}\genfrac{}{}{0pt}{}{(+#1)}{X{\to}Y}
    }%
    let $\ArXY n$ 
        denote the \hgraph\ that results from augmenting $\Ar$ with $n$
    additional (distinct) \arc s from $X$ to $Y$.
\begin{linked}{theorem}{func}
    \commentout{
    Let $\Ar = (\N, \Ar)$ be any \hgraph, and $\mu$ a distribution over its nodes. 
    Given $X, Y \subseteq \N$ and a natural number $n \ge 0$,
    \gdef\ArXY##1{%
        \Ar{\,\sqcup}\genfrac{}{}{0pt}{}{(+##1)}{X{\to}Y}
        }%
    let $\ArXY n$ denote the \hgraph\ resulting by augmenting $\Ar$ with $n$ additional (distinct) \arc s from $X$ to $Y$. For all $\mu$ and $\Ar$, we have that 
    }
    \begin{enumerate}[label={(\alph*)}, itemsep=0pt,topsep=0pt,parsep=0.4ex]
    \item 
    $\mu \models X {\tto} Y \land \Diamond \Ar$    
    ~~if and only if~~
    $\forall n \ge 0.~ \mu \models \Diamond \ArXY n$ . 
    \item
            if $\Ar = \Ar_G$ for a dag $G$, then
    $\mu \models X {\tto} Y \land \Diamond \Ar$ if and only if
    $\mu \models \Diamond \ArXY 1$.
    
\item 
if $\exists a \in \Ar$ such that $\Src a = \emptyset$ and $X \in \Tgt a$, 
then $\mu \models X {\tto} Y \land \Diamond \Ar$
iff
 $\mu \models \Diamond \ArXY 2$.
\commentout{
    \item When
    $\Ar$ describes an unconditional mechanism for $X$
    (formally, when $\Ar \rightsquigarrow \begin{tikzpicture}[center base]
        \node[dpadinline] (X) {$X$};
        \draw[arr1,<-] (X) to[] +(0.6,0);
    \end{tikzpicture}$, a condition we will define in \cref{sec:monotone}),
    $\mu \models X {\tto} Y \land \Diamond \Ar$
    if and only if
     $\mu \models \Diamond \ArXY 2$.
}
    \end{enumerate}
\end{linked}
\commentout{
Part (c) generalizes \cref{example:two-edge-det}. 
Based only on that example, it may seem unnecessary to ever add more than two arcs, as in part (a). 
Intuitively, if two independent randomized procedures (always) obtain the same value of $Y$ from input $X$, then neither one can be using its randomness, and so $Y$ is a (deterministic) function of $X$.
However, this intuition (based on randomized algorithms) implicitly assumes not only that the randomness $U_1$ and $U_2$ for the two mechanisms are not only independent, but also conditionally independent given $X$. 
Indeed the right hand side of (a) cannot be replaced by any fixed finite number of arcs without similarly strong assumptions on $\Ar$; see \cref{sec:func-counterexamples} for (non-trivial) counterexamples when $\Ar$ comes from a cyclic graph. 
}
Based on the intuition given after \cref{example:two-edge-det}, it may seem unnecessary to ever add more than two parallel \hyperarc s to ensure functional dependence in part (a). 
However, this intuition implicitly assumes that the randomness $U_1$ and $U_2$ of the two mechanisms is independent conditional on $X$,
which may not be the case.
See \cref{sec:func-counterexamples} for counterexamples
\unskip.

Finally, as mentioned above, \scibility\ gives meaning to cyclic structures,
    a topic that we will revisit often in \cref{sec:causal,sec:info}. 
We start with a simple example. 
\begin{example}
        \label{example:xy-cycle}
    Every $\mu(X,Y)$ is \cible\ with
    \begin{tikzpicture}[center base]
        \node[dpadinline] (X) {$X$};
        \node[dpadinline,right=1.2em of X] (Y) {$Y$};
        \draw[arr1] (X) to[bend left=13] (Y);
        \draw[arr1] (Y) to[bend left=13] (X);
    \end{tikzpicture},
    because every distribution is \cible\ with
    \begin{tikzpicture}[center base]
        \node[dpadinline] (X) {$X$};
        \node[dpadinline,right=1.0em of X] (Y) {$Y$};
        \draw[arr1] (X) to (Y);
        \draw[arr1,<-] (X) to +(-0.68,0);
    \end{tikzpicture},
    and a mechanism with no inputs is a special case of one that can depend on $Y$.
    \qedhere
\end{example}
The logic above is an instance of an important reasoning principle,
    which we develop in \cref{sec:monotone}. 
Although the 2-cycle in \cref{example:xy-cycle} is straightforward, generalizing it even slightly to a 3-cycle raises a not-so-straightforward question, 
    whose answer will turn out to have surprisingly broad implications.
     
\newlength{\cycleboxlen}
\settowidth{\cycleboxlen}{\usebox{\cyclebox}}
\begin{wrapfigure}[5]{o}{0.75\cycleboxlen}
    \vspace{-0.8em}
    \!\begin{tikzpicture}[center base, scale=0.76]
        \node[dpad0] (X) at (0:.8) {$X$};
        \node[dpad0] (Y) at (120:.8) {$Y$};
        \node[dpad0] (Z) at (-120:.8) {$Z$};
        \draw[arr2] (X) to 
            (Y);
        \draw[arr2] (Y) to
            (Z);
        \draw[arr2] (Z) to 
            (X);
    \end{tikzpicture}\!\!
\end{wrapfigure}
\refstepcounter{example}
\label{example:xyz-cycle-1}
\textbf{Example {\theexample}.~} 
    What $\mu(X,Y,Z)$ are \cible\ with the 3-cycle shown, on the right?
    By the reasoning above,
     among them must be all distributions consistent with a linear chain ${\to}X{\to}Y{\to}Z$. Thus,  
    any distribution in which two variables are conditionally independent given the third is compatible with the 3-cycle.
    Are there
    distributions that are \emph{not} compatible with 
    this hypergraph? It is not obvious. We return to this
     in \cref{sec:pdgs}. 
\hfill$\triangle$

Because \scibility\ applies to cyclic structures,  one might wonder if
    it also captures the independencies of undirected models 
    \unskip.
Our definition of $\Ar_G$, as is common, implicitly identifies a undirected edge $A {-} B$ with the pair $\{ A{\to}B, B{\to}A\}$ of directed edges;
in this way, it naturally converts even an \emph{undirected} graph $G$ to a (directed) \hgraph. 
Compatibility with $\Ar_G$, however, does not coincide with any of the standard Markov properties
corresponding to $G$ \citep{koller2009probabilistic}.
This may appear to be a flaw in \cref{defn:scompat},
but it is unavoidable (see \cref{sec:monotone}) if we wish to also capture causality, as we do in the next section.

\commentout{In this section, we have seen how \scibility\ with a single graph can represent many (in)dependencies of interest. But we have only scratched the surface here; in the remaining sections, we will come to understand how \scibility\ works in general: its connection to causality (\cref{sec:causal}), to information theory (\cref{sec:info}), and a simple principle for reasoning about it (\cref{sec:monotone}). 
}

\section{%
    \SCibility\ and Causality
}
    \label{sec:causal}

\commentout{
In this section, we spell out the fundemental connection between causality and \scibility. In brief: $\mu \models \Diamond \Ar$ if and only if it is possible for $\mu$ to arise from a causal model with dependency structure $\Ar$ (\cref{ssec:cm-arise}); furthermore, a witness of \scibility\ can be naturally converted to such a causal model. To make this precise, we must review some standard definitions from causality. }

Recall that in the definition of \Scibility, each hyperarc represents an independent mechanism. 
Equations in a causal model are also viewed as representing independent
mechanisms.  
This suggests a possible connection between the two formalisms, which we now explore.
We will show that \scibility\ with $\Ar$ means exactly that
a distribution can be generated by a causal model with the corresponding dependency structure (\cref{ssec:ssec:cm-arise}). Moreover, such causal models and \scibility\ witnesses are themselves closely related (\cref{sec:witness-to-causal-model}).
In this section, we establish a causal grounding for \scibility.
To do so, we must first review some standard definitions.

\begin{defn}[\citet{pearl2009causality}]
        \label{defn:SEM}
    A \emph{structural equations model} (SEM) is a tuple
    $M = (\U, \enV, \mathcal F)$,
    where
    \begin{itemize}[left=0pt,itemsep=0pt,topsep=0pt,parsep=0.3ex]
    \item $\U$ is a set of exogenous variables;
    \item $\enV$ is a set of endogenous variables (disjoint from $\U$);
    \item $\mathcal F = \{ f_Y \}_{Y \in \enV}$ associates to each endogenous variable $Y$ an \emph{equation}
    $f_Y : \V (\U \cup \enV - Y) \to \V(Y)$
     that determines its value as a function of the other variables. 
\commentout{
    \item $\mathcal I$ is a set of allowable interventions,
    typically taken to be the set of all strings
    of the form $X \gets x$
    for $X \subseteq \enV$
    and $x \in \V(X)$.
}
        \qedhere
    \end{itemize}
\end{defn}

\commentout{
Given a SEM $M = (\U, \enV, \mathcal F)$, and a joint setting $\mat u \in \V(\U)$ of exogenous variables, let
\[
    \mathcal F(\mat u)
    := \Big\{
        \mat x \in \V\! \enV ~\Big|~
            \forall Y \,{\in}\, \enV.~
            f_{Y}\big(\mat u,~ \mat x[ \enV {-} Y ]\big) = \mat x[Y]
    \Big\}
\]
be the set of joint endogenous variable settings that are consistent with $\mat u$, according to the equations $\mathcal F$.
}
In a SEM $M$, a variable $X \in \enV$ \emph{does not depend} on $Y \in \enV\cup\U$ if $f_X(\ldots, y, \ldots) =
f_X(\ldots, y', \ldots)$ for all $y, y' \in \V(Y)$.
Let the parents $\Pa_{M}(X)
$
of $X$ be the set of variables on which $X$ depends.
$M$ is \emph{acyclic}
iff $\Pa_M(X) \cap \enV = \Pa_G(X)$ for some dag $G$ with vertices $\enV$. 
\commentout{
    It is easy to see that,
    by determining the value of each exogenous variable according to this order,
    $\mathcal F(\mat u)$ is a singleton for acyclic models, in which case
    $\U$ determine $\enV$ via a function $f_{\enV} : \V (\U) \to \V(\enV)$.
}
In an acyclic SEM, it is easy to see that 
a setting of the exogneous variables determines the
values of the endogenous variables
(symbolically: $M \models \U \tto \enV$).
A \emph{probabilistic SEM} (PSEM)
$\mathcal M = (M, P)$ 
is a SEM,
together with a probability 
$P$ over the exogenous variables.
\commentout{
If $\cal M$ is acyclic, then it
    determines a 
    distribution over the endogenous variables $\enV$,
    by the pushforward of $P(\U)$ through
    $f_{\enV}$.
}%
\commentout{
    In this case, $P$ extends uniquely to a joint distribution
    $\nu(\U, \enV) = P(\U) \delta\! f_{\enV}(\enV \mid \U)$, where the notation $\delta g(B|A)$ describes the (deterministic) cpd
    that corresponds to the function $g : \V (A) \to \V (B)$.
}%
{%
    When $\mathcal M \models \U \tto \enV$ 
    \vfull{%
    (such as when $M$ is acyclic)%
    },
    the distribution
    $P(\U)$
    extends uniquely to a distribution over $\V(\enV \cup \U)$.    
}%
A cylic PSEM, however, may induce more than one such distribution, or none at all.
In general, a PSEM $\cal M$
induces a (possiby empty) convex set of distributions over $\V(\U \cup \enV)$.
This set is defined by two (linear) constraints:
    the equations $\mathcal F$ must hold with probability 1, 
    and
    \vfull{, in the case of a PSEM,}
    the marginal probability over $\U$ must
    equal $P$. 
\vfull{
Formally, for a PSEM $\mathcal M = (M, P)$, define
$\SD{\mathcal M} := $
\vspace{-1.5ex}
\[
    \bigg\{
       \nu {\,\in\,} \Delta\!\V(\enV \cup \, \U)
            \,\bigg|\!
        \begin{array}{l}
            \forall Y \!\in\! \enV.~~
            \nu\Big( f_Y(\mathcal U,
                \enV {-} Y
            ) \,{=}\, Y\Big) \,{=}\, 1
            ,
            \\[0.2ex]
            \nu(\U) = P(\U)
        \end{array}\!\!
        \bigg\}
\vspace{-1ex}
\]
and define $\SD{M}$ for an ``ordinary'' SEM $M$ in the same way,
    except without the constraint involving $P$.
To unpack the other constraint,
$f_{Y}(\U, \enV - Y)$
is a random variable
on the outcome space $\V(\enV,\U)$,
and that it
has the same value as $Y$ is an event
which, according to the equation $f_Y$,
must always occur.
}%
Given a 
PSEM
 $\mathcal M$, let $\SD{\mathcal M}$ consist of all 
joint distributions $\nu(\U, \enV)$ that satisfy the two constraints above
\vfull{%
(or just the first of them, in the case of a non-probabilistic SEM).
$\SD{\cal M}$ can be thought of as the set of distributions compatible
wth $\cal M$. It
}%
\unskip; this set
captures the behavior of $\cal M$ in
the absence of interventions.  
A joint distribution $\mu(\mat X)$ over $\mat X \subseteq \enV \cup \U$ \emph{can arise from} a (P)SEM $\mathcal M$ iff there 
is
some $\nu \in \SD{\mathcal M}$ whose marginal on $\mat X$ is $\mu$.
\commentout{
We remark that a (P)SEMs $\cal M$ is naturally a special case of a PDG, and that $\SD{\mathcal M}$ is the set-of-distributions semantics of that PDG. 
}
\commentout{
    For a SEM $M$, we take $\SD{M}$ to consist of all 
    distributions $\nu \in \SD{M,P}$ for some distribution $P$ on the exogenous variables of $M$; 
    equivalently
    $\SD{M}$ consists of all
    distributions that satisfy
    the first of the two constraints above.
}

\commentout{
    The connection is particularly important in the other direction: witnesses of \scibility\ can be converted to randomized causal models,
    as we now show.
}

We now review the syntax of a language for describing causality. 
A \emph{basic causal formula} is one of the form $[\mat Y {\gets}
    \mat y]\varphi$, where $\varphi$ is a Boolean expression over the
endogenous variables $\enV$, $\mat Y \subseteq \enV$ is a subset of them, and $\mat y \in \V(\mat Y)$.
The language
then consists of all Boolean combinations 
 of basic formulas.
In a causal model $M$ and context $\mat u \in \V(\U)$, 
a Boolean expression $\varphi$ over $\enV$ is true iff it holds
for all $(\mat u, \mat x) \in \V(\U,\enV)$ consistent with the equations of $M$.  
Basic causal formulas are then given semantics by
$
    (M, \mat u) \models [\mat Y {\gets} \mat y]\varphi
$ iff $
    (M_{
    \mat Y \gets \mat y}, \mat u) \models \varphi,
$
where $M_{\mat Y \gets \mat y}$ is the result of changing each $f_Y$, for $Y \in \mat Y$, to
the constant function $\mat s \mapsto \mat y[Y]$, 
which returns (on all inputs $\mat s$) the value of $Y$ in the joint setting $\mat y$.
\vfull{%
From here, the truth relation can be extended to arbitrary causal formualas by structural induction in the usual way.
    \unskip\footnote{
    $M\! \models\! \varphi_1 \land \varphi_2$ iff $M \!\models\! \varphi_1$ and $M \!\models\! \varphi_2$;
    $M \!\models \lnot \varphi$ iff $M \!\not\models \varphi$. 
    }
}%
The dual formula
$\langle \mat Y {\gets} \mat y\rangle \varphi := \lnot [\mat Y{\gets }\mat y]\lnot \varphi$
is equivalent 
to $[\mat Y {\gets} \mat y]\varphi$
in SEMs where each context $\mat u$ induces a unique setting of the endogenous variables \cite{halpern-2000}.
A PSEM $\mathcal M = (M, P)$ assigns probabilities to causal formulas according to $\Pr_{\mathcal M}(\varphi) := P(\{ \mat u \in \V(\U) : (M, \mat u) \models \varphi \})$.

\commentout{
\subsection{Capturing Causal Model Structures with \DHygraph s}
One consequence of \cref{theorem:func} is that \dhygraph s
can capture the structural aspects of causal models
(which are functional dependencies). 
\commentout{
More precisely, given a SEM $M = (\enV, \U, \mathcal F)$,  
define the parents
$\Pa_{M}(X) $
 of endogenous variable $X \in \enV$ to be
the set of variables 
on which $X$ depends 
 (recall that the notion of $X$ not depending on $Y$ was defined above).
}%
\commentout{
can affect $X$ directly, i.e.,
$\Pa_{M}(X) := $
\vspace{-1.6ex}
\[
    \bigg\{
        Y \in \U \cup \enV
     ~\bigg|~
        \begin{array}{@{}l@{}}
            \exists \mat z \in \V(\U {\cup }\enV - \{X,Y\}), 
            \exists y,y' \in \V(Y).\\ 
            f_X(\mat z, y) \ne f_X(\mat z, y')
        \end{array}
    \bigg\}.
\]
}%
More precisely, given a SEM $M$,
we need two parallel \arc s with the dependency structure each equation, to get a \hgraph\ 
$
    \Ar_{M} := 
    \big\{
        \Pa_{M}(X) 
        \mathrel{\raisebox{-2pt}{$\xrightarrow{(\iota,X)}$}}
        X
    \big\}_{(\iota,X) \in \{0,1\} \times \enV}
$
representing precisely the structural information in $M$.  
What this notation is saying is that for each node $X$ in the causal
model $M$, there is a pair of hyperarcs in $\Ar_M$, each one going
from the parents of $X$ to $X$.

\begin{linked}{prop}{causal-model-structure-capture}
    If $M$ is a SEM, then
    $\mu(\enV,\U) \models \Ar_{M}$ 
        iff
        there exists $M'$ such that $\Pa_M = \Pa_{M'}$ and
        $\mu \in \SD{M'}$.
\end{linked}

\vfull{
$\Ar_M$ is \subpartl, because the exogenous variables $\U$ are not targets of any mechanism. 
A PSEM $\mathcal M = (M, P)$, however,
    augments $M$ with a distribution $P$,
which can be viewed as a randomized mechanism 
    explaining where the variables $\U$ come from;
    thus $\Ar_{\cal M} := \Ar_M \cup \{ \emptyset \to \U \}$
    is \partl.
Perhaps surprisingly, \Scibility\ with $\Ar_M$ and $\Ar_{\cal M}$
    pick out the same set of distributions
    (i.e., $\mu \models \Ar_M$ iff $\mu \models \Ar_{\mathcal M}$).
But that does not make the two structures interchangable,
    as we will see in \cref{sec:equivalence}.
}    

}

Some authors assume that 
for each variable $X$, there is a special
``independent noise'' exogenous variable 
$U_X$
\vfull{%
(often written $\epsilon_X$ in the literature)
}%
on which only the equation $f_X$ can depend;
we call a PSEM $(M, P)$
\emph{randomized} if it contains 
such exogenous variables that are mutually independent according to $P$
{%
    \unskip, and \emph{fully randomized} if {all} its exogenous variables are of this form
}%
\unskip.
\commentout{
but one that turns out to be important in
presenting our results. Given a PSEM $\cal M$, there is a randomized
PSEM $\cal M'$ such that $\cal M$ and $\cal M'$ agree on the
probability of all forulas in the causal language.  [[Perhaps add one
    sentence about the construction]]  Thus, in a sense, we do not
lose expressive power if we use randomized PSEMs.  
}%
\commentout{
Randomized PSEMs are a special class of PSEMs%
---yet, at the same time, may also be viewed as a generalization of a PSEM in which each equation is not deterministic but a randomized algorithm, taking as input an independent random seed. 
}%
Randomized PSEMs are clearly a special class of PSEMs,
\commentout{\unskip---yet, at the same time, correspond precisely to a
``generalization'' of PSEMs whose equations need not be deterministic.}%
but note also that every
PSEM can be
converted to an equivalent randomized PSEM by extending it 
with additional dummy variables $\{U_X\}_{X \in \enV}$ that can
take only a single value. 
Thus, we do not lose expressive power by using randomized PSEMs
\unskip. In fact, \emph{qualitatively}, 
randomized PSEMs are more expressive: they can encode independence.
\vfull{%
It should come as no surprise that randomized PSEMs and \scibility\ are related.
}

\subsection{The Equivalence Between \SCibility\ and Randomized PSEMs}
    \label{ssec:ssec:cm-arise}
\commentout{In this section, we give the top-level contour of the relationship between \scibility\ and causality: $\mu \models \Ar$ iff there is a causal model whose dependency structure is $\Ar$ that gives rise to $\mu$. To do this, we must be more precise about what it means to ``have dependency structure $\Ar$''. We do this by considering increasingly general classes of \hgraph s, which turn out to coincide with special classes of causal models. }

We are now equipped to formally describe the connection between \scibility\ and causality.
At a high level, this connection should be unsurprising:
witnesses and causal models both relate dependency structures to distributions, 
but in ``opposite directions''.
\Scibility\ starts with distributions and asks what dependency structures they are compatible with. Causal models, on the other hand, are explicit (quantitative) representations of dependency structures that give rise to sets of distributions.
We now show that the existence of a causal model coincides with the existence of a witness. 
We start by showing this for the hypergraphs generated by graphs (like Bayesian
networks, except possibly cyclic), which we show
correspond to
fully randomized causal models (\cref{prop:sc-graph-arise}).  
\commentout{ We then expand the set of hypergraphs we consider by allowing variables not to be the target of any mechanism (i.e., with additional exogenous variables); these are the dependency structures of PSEMs (\cref{prop:sc-psem-corresp}).}
We then give a natural generalization of a causal model that exactly captures \scibility\ with an arbitrary hypergraph (\cref{prop:gen-sim-compat-means-arise}). 
In both cases,
the high-level result is the same: $\mu \models \Ar$ iff there is a 
causal model that ``has dependency structure $\Ar$''
that gives rise to $\mu$. 

More precisely, a randomized causal model $\cal M$
\emph{has dependency structure $\Ar$}
iff there is a 1-1 correspondence between $a \in \Ar$ and the equations of $\cal M$, such that the equation $f_a$ produces a value of $\Tgt a$ and depends only on $\Src a$ and $U_a$. 
The definition above emphasizes the hypergraph; 
for readers interested in causality,
here is an equivalent one that emphasizes the causal model:
$\cal M$ is of dependency structure $\Ar$ iff 
the targets of $\Ar$ are disjoint singletons corresponding to the elements of $\enV$ (so $\Ar = \{ \Src Y \to \{ Y \} \}_{Y \in \enV}$),
and $\Pa_{\cal M}(Y) \subseteq \Src Y \cup \{ U_Y \}$ for all $Y \in \enV$. 
We start by presenting the result in 
the case where $\Ar$ corresponds to a directed graph. 

\commentout{
We start with an important special case:
\scibility\ with a \hgraph\ $\Ar_G$ coming from a graph $G$
means precisely that a distribution can arise from a fully randomized causal model in which each variable depends only on its parents and random noise.
}

\commentout{
\begin{linked}{prop}{sc-graph-arise}
    If $G$ is a graph whose nodes correspond to variables $\X$,
    and $\mu$ is a joint distribution over $\X$, then
    $\mu \models \Diamond \Ar_G$ iff there exists a randomized PSEM
    $(M, P)$
    from which $\mu$ can arise
    \unskip,
    satisfying $\Pa_M(Y) \subseteq \Pa_G(Y) \cup \{ U_Y \}$ for all $Y \in \X$.
\end{linked}}
\begin{linked}
        {prop}{sc-graph-arise}
    \commentout{
        If $G$ is a graph whose nodes correspond to variables $\enV$,
        then $\mu \models \Diamond \Ar_G$ iff there exists a fully randomized PSEM $\mathcal M = (\{U_Y\}_{Y \in \enV}, \enV, \mathcal F, P)$ from which $\mu$ can arise, satisfying $\Pa_M(Y) \subseteq \Pa_G(Y) \cup \{ U_Y \}$ for all $Y \in \enV$.
    }
    Given a graph $G$ and a distribution $\mu$, 
    $\mu \models \Diamond \Ar_G$ iff there exists a fully randomized PSEM
    of
    dependency structure $\Ar_G$ from which $\mu$ can arise.
\end{linked}

In other words, compatibility with a hypergraph corresponding to a graph
means arising from a fully randomized PSEM of the appropriate dependency structure.
In light of this, \cref{theorem:bns} can be viewed as formalizing
a phenomenon that seems to be
almost universally implicitly understood:
every acyclic fully randomized SEM
induces a distribution with the independencies of the corresponding
Bayesian Network.  Conversely,
every distribution with those independencies
arises from such a causal model.
    Both halves have been recognized before.
    \Citet[Theorem 1]{Druzdzel1993CausalityIB} arguably establish one direction of the correspondence (turning a BN into a causal model), but their statement of the result obscures the possibility of a converse.%
        \footnote{Indeed, \citet{Druzdzel1993CausalityIB} state that ``a causal structure does not necessarily imply independences'', suggesting that they did not realize that their result could be used to characterize BN independencies.}
    Pearl's \emph{causal Markov condition} \cite[Theorem 1]{pearl2009causaloverview}, on the other hand, is closely related to that converse 
        (as will be made explicit by our \cref{prop:sc-graph-arise}).
    Yet, to the best of our knowledge, the two results have not before 
    been combined and recognized as an equivalent characterization  
    of a BN's conditional independencies.

\commentout{
What about causal models that are not fully randomized, and \hgraph s that are not induced by graphs?

\begin{prop} \label{prop:sc-psem-corresp}
     If $(\N, \Ar)$ is a \hgraph\ whose targets are distinct singletons, then $\mu \models \Diamond \Ar$ iff $\mu$ can arise from a randomized PSEM of dependency structure $\Ar$. 
 \end{prop}
 \commentout{%
     \begin{prop}
         If $\Ar = \Ar_{\cal M}$ for some PSEM $\cal M$, then $\mu \models \Ar$ iff $\mu$ can arise from a (possibly different) randomized PSEM of the same dependency structure $\Ar$. 
     \end{prop}
    }%

This is just \cref{prop:sc-graph-arise} 
for the graph of the PSEM's ``\emph{depends on}'' relation,
except for a small difference: since the exogenous variables $\U$ have no equations, they are not associated with \arc s.
}

Like before, \scibility\ allows us to go much futher.
It is easy to extend 
\cref{prop:sc-graph-arise}
to the dependency structures of all randomized PSEMs.  But what happens if $\Ar$ contains
\arc s with overlapping targets?
Here the correspondence starts to break down for a simple reason:
by definition, there is at most one equation per variable in a
(P)SEM;
thus, no PSEM can have dependency structure $\Ar$. 
Nevertheless, 
    the correspondence between witnesses and causal models persists if we simply drop the
(traditional)
requirement that $\mathcal F$ is indexed by $\enV$.
This leads us to consider
a natural generalization of a 
(randomized)
PSEM that has an arbitrary set of equations---not just one per variable. 

\gdef\GRPSEM{generalized randomized PSEM}
\begin{defn}\label{defn:GRPSEM}
\commentout{
  A \emph{generalized PSEM} is a tuple $M = ((\U,\enV, \mathcal F), P)$,
  where $U$, $\enV$, and $P$ are just as in a PSEM, and $\mathcal F$
  is [[PLEASE FILL IN]].   $\SD{\dg M}$ and \emph{can arise} are
  defined for generalized PSEMs just as for PSEMs.
  A generalized PSEM $M = ((\U,\enV,{\mathcal F}),P)$
  \emph{corresponds} to a \hgraph   $(\N,\Ar)$ if (a) $M$ is
  randomized,
  (b) $\U = \{ U_a \}_{a \in \Ar}$, (c) $\X = \N$, and (d)
      $ \mathcal F = \{
  f_a : \V(\Src a) \times \V(U_a) \to \V(\Tgt a)    \}$.
}
    Let $(\N, \Ar)$ be a \hgraph. 
    A \emph{generalized randomized PSEM}
     $\mathcal M = (\X, \U, \mathcal F, P)$
    \emph{with
    structure
    $\Ar$} consists of
    sets of variables $\X$ 
    and $\U = \{ U_a \}_{a \in \Ar}$, 
    together with a set of functions
    $ \mathcal F \!=\! \{
            f_a : \V(\Src a) \times \V(U_a) \to \V(\Tgt a)
    \}_{a \in \Ar},
    $
    and a probability $P_a$ over each independent noise variable $U_a$.
    \commentout{
        Let $\SD{\mathcal M}$ consist of all joint distributions $\nu(\X,\U)$ satisfying the equations with probability 1, and whose marginal on $\U$ is $\nu(\U) = \prod_{a \in \Ar} P_a(U_a)$. 
        Just like in an (ordinary) SEM, 
        a joint distribution $\mu(\mat X)$ over $\mat X \subseteq \X \cup \U$ \emph{can arise from} $\mathcal M$ iff there exists some $\nu \in \SD{\mathcal M}$ whose marginal on $\mat X$ is $\mu$.
    }%
    The meanings of $\SD{\cal M}$ and \emph{can arise} are the same as for a PSEM.
\end{defn}
\commentout{
If $\Ar = \Ar_G$ is a directed graph, a
\GRPSEM\ 
of shape $\Ar_G$ is 
a
fully randomized PSEM with endogenous variables represented by the nodes of $G$;
\commentout{In particular, if $G$ is a dag, then a GRPSEM of shape $\Ar_G$ is a causal Bayesian Network with underlying graph $G$. }
if the targets of $\Ar$ are disjoint singletons, 
then a \GRPSEM\ of shape $\Ar$ is a randomized PSEM 
\commentout{
    with the same sets of endogenous and exogenous variables,
    in which the equation $f_Y$ can depend only on the variables on which it depends in $M$ (and also independent noise $U_a$).
    }%
of dependency structure $\Ar$.
\GRPSEM s have a simple and direct relationship with \scibility\ with an arbitrary \hgraph.
}

\begin{linked}{prop}{gen-sim-compat-means-arise}
    $\mu \models \Diamond \Ar$ iff
    there exists 
    a \GRPSEM\
    with
     structure $\Ar$\,
    from which $\mu$ can arise.
    \vspace{-1ex}
\end{linked}

\commentout{
We now describe
    the same connection from another perspective:
    like causal models, witnesses of \scibility\ can be used to ascribe probabilities to causal formalas.
    A basic causal formula $[\mat Y {\gets} \mat y]\varphi$ is true in 
    $\nu$ for a context $\mat u \in \V(\U)$, written
    $(\nu, \mat u) \models [\mat Y {\gets} \mat y]\varphi$, iff
    \[
        \nu(\varphi \mid \U{=}\mat u[\hat U_{\mat Y} \gets 
            \Lambda Y. \lambda\textunderscore.\mat y[Y]]) = 1,
    \]
    where $\Lambda Y.\lambda \textunderscore.\mat y[Y]$ is just system F syntax for the indexed set of functions (one for each $Y \in \mat Y$) that ignore their inputs and return components of the constant $\mat y$. 

    \begin{linked}{prop}{all-models-formula-equiv}
        Let $\varphi \in \mathcal L(\U,\X)$ be causal formula, and $\nu(\X,\U)$ a witness to $\mu \models \Ar$.
            Then
        \[
        (\nu, \mat u) \models \varphi
        \quad\iff\quad
        \forall \mathcal M \in \PSEMs(\nu).~~
        (\mathcal M, \mat u) \models \varphi.
        \]
    \end{linked}
}

Generalized randomized PSEMs
can capture functional dependencies, and constraints. 
For instance, an equality (say $X = Y$) 
can be encoded in a
\GRPSEM\
with a second equation for $X$.
Indeed, we believe that 
\GRPSEM s
can capture a wide class of constraints, 
and are closely related to
\emph{causal models with constraints} \cite{beckers2023causal},
a discussion we defer to future work.
\commentout{
GRPSEMs, and hence \scibility\ in its most general form, 
are closely related to several existing generalizations of SEMs. 
They can be viewed as causal models with constraints 
\citep{beckers2023causal} by splitting each variable that is a target of multiple \hyperarc s into multiple copies (so that each copy is only a target of a single \hyperarc), and then add a constraint requiring that all copoes take the same value.
They can be viewed as GSEMs \citep{peters-halpern-GSEMs} by selecting the defining function from $\V(\U)$ to subsets of $\V(\X)$ to be the set of states that satisfy the equations in a given context.
}

\subsection{Interventions and the Correspondence Between Witnesses and Causal Models}
    \label{sec:witness-to-causal-model}

\commentout{
The relationship shown in the previous section, culminating in \cref{prop:gen-sim-compat-means-arise}, is simple and exact, but it does not fully capture the depth of the connection between \scibility\ and causality.
One unsatisfying aspect has been the focus the behavior of these causal models in the absence of interventions. 
This is by necessity: 
in the discussion so far, both the witness of \scibility\ $\bar\mu$
(implicit in the assertion $\mu \models \Ar$) and the causal model $\mathcal M$ (implicit in the statement that $\mu$ can arise from some causal model) have been obfuscated behind existential quantifiers. 
Yet $\bar\mu$ and $\mathcal M$, previously left implicit, are themselves closely related. 
A closer look at the details will reveal 
a clean relationship between \scibility\ witnesses and causal models
(\cref{prop:witness-model-properties}),
and, moreover,
extend the correspondence to say something interesting about
    causal formulas with interventions
 (\cref{theorem:condition-intervention-alignment}).
}
We have seen that \scibility\ with $\Ar$ (i.e., the existence of a witness $\bar\mu$) coincides exactly with the existence of a causal model $\cal M$ from which a distribution can arise. But which witnesses correspond to which causal models? The answer to this 
question will be critical to extend the correspondence we have given
so that it can deal with interventions.  
Different causal models may give rise to the
same distribution, yet handle interventions differently.

There are two directions of the correspondence. 
Given a randomized PSEM $\cal M$, distributions arising from it are
\cible\
with its dependency structure, and the corresponding
witnesses are exactly 
the distributions in $\SD{\mathcal M}$ (see \cref{appendix:sem2witness}).
In particular, if $\cal M$ is
acyclic, there is a unique witness.  
The converse is more interesting: how can we turn a witness into a causal model?

\commentout{
\begin{itemize}[wide]
    \item 
    When $\Ar = \Ar_G$ for some is a directed acyclic graph $G$ (i.e., a qualitative BN), then a witness to $\mu \models \Ar$ is
         essentially an acyclic causal model
        in which the equation for $X$ depends on $\Pa_G(X)$ and
        independent noise, and which ultimately gives rise to $\mu$ in the absence of intervention.
        
        From this perspective, \cref{theorem:bns} can be viewed as result 
            that implicitly undergirds much of the work on causality:
        every acyclic causal model with independent per-variable noise induces a distribution with the independencies of the appropriate Bayesian Network---%
        and, conversely,
        every distribution with those independencies
        arises from such a causal model.
    \item
    More generally, if $\Ar$ is a \partl\ hypergraph,
        then a witness to 
        $\mu \models \Ar$ 
        contains a (possibly cyclic) causal model 
        with respect to which $\mu$ is a possible 
        (distribution over) fixed-point solution(s).    
    
    \item 
    Finally, a witness of \scibility\ with an arbitrary \dhygraph\ $\Ar$
    corresponds to a causal model with logical constraints \citep{beckers2023causal}.
\end{itemize}
}

\commentout{
    In the section we show that if $\nu(\U,\X)$ is a witness to a distribution
    $\mu(\X)$ be \scibile\ with a \dhygraph $(\X,\Ar)$, then there is a
    unique causal model $M$ with endogenous variables $\X$ and exogenous
    variables $\U$ such that \ldots.
}%
\commentout{\color{gray!80}
\paragraph{Causal models from \scibility\ witnesses.}
Suppose that $\mu(\X)$ is \cible\ with a \hgraph\ $\Ar$, 
and for the sake of more closely matching the standard causality literature, assume that $\Ar$ is \subpartl. 
A witness $\bar\mu(\X,\U_{\Ar})$ 
to this \scibility\ 
can be converted to a 
randomized SEM
with endogenous variables $\enV := \cup_{a \in \Ar} \Tgt a$ and exogenous variables $\U := \U_{\Ar} \cup (\X{-}\enV)$
\unskip, as follows.
For the distriubution $P(\U)$ over the exogenous variables, 
    select the marginal $\bar\mu(\U)$. 
\commentout{
The equations, for the most part, fall out of \cref{defn:scompat} (c). 
For each variable $X \in \enV$, use the following procedure to define $f_X$.}%
For all $X \in \enV$, define the equation $f_X$ as follows.
Because of the definition of $\enV$ and the fact that $\Ar$ is \subpartl, 
    there is a unique
    \arc\ $a_{\!X} \in \Ar$ whose targets $\Tgt {a_{\!X}}$ contain $X$.
Since
$\bar\mu \models ( U_{\!a_{\!X}}, \Src{a_{\!X}}) \tto \Tgt {a_{\!X}}$
(by \cref{defn:scompat}(c)),
$X \in \Tgt {a_{\!X}}$ must also be 
a function of $\Src{a_{\!X}}$ and $U_{\!a_{\!X}}$.
Roughly speaking, we take $f_X$ to be this function. 
More precisely,
for each $u \in \V (U_{\!a_{\!X}})$ and $\mat s \in \V (\Src {a_{\!X}})$
for which $\bar\mu(U_{\!a_{\!X}} {=}\, u, \Src {a_{\!X}} {=}\, \mat s) > 0$,
there is a unique
 $t \in \V (\Tgt {a_{\!X}})$ such that $\bar\mu(u,\mat s, t) > 0$.
In this case, choose $f_X(u, \mat s, \ldots) := t[X]$.
On the other hand, if $\bar\mu(U_{\!a_{\!X}} {=}\, u, \Src {a_{\!X}} {=}\, \mat s) = 0$,
then any choice of $f_X(u, \mat s, \ldots)$ that depends only on $\Src a \cup \{U_a\}$ will do.
Let 
$\PSEMsA(\bar\mu)$
be the set of PSEMs that can can be obtained from $\bar\mu$
    in this way.
Even if $\Ar$ is not \subpartl, our construction of $\PSEMsA(\bar\mu)$ still makes sense, except the models will have more than one equation for some variable and thus be GRPSEMs, rather than causal models in the traditional sense. 
}

\begin{constr}
        \label{constr:PSEMs}
    Given a witness $\bar\mu(\X)$ to \cibility\ with
    a \hgraph\ $\Ar$ with disjoint targets,
    construct a PSEM according to the 
    following (non-deterministic) procedure.  
    Take $\enV := \cup_{a \in \Ar} \Tgt a$,  $\U := \U_{\Ar} \cup (\X{-}\enV)$, and $P(\U) := \bar\mu(\U)$. 
For each $X \in \enV$, there is a unique $a_{\!X} \in \Ar$ whose targets $\Tgt {a_{\!X}}$ contain $X$. 
Since
$\bar\mu \models ( U_{\!a_{\!X}}, \Src{a_{\!X}}) \tto \Tgt {a_{\!X}}$ 
(this is just property (c) in
\cref{defn:scompat}%
 ),
$X \in \Tgt {a_{\!X}}$ must also be 
a function of $\Src{a_{\!X}}$ and $U_{\!a_{\!X}}$;
take $f_X$ to be such a function. 
More precisely,
for each $u \in \V (U_{\!a_{\!X}})$ and $\mat s \in \V (\Src {a_{\!X}})$
for which $\bar\mu(U_{\!a_{\!X}} {=}\, u, \Src {a_{\!X}} {=}\, \mat s) > 0$,
there is a unique
$t \in \V (\Tgt {a_{\!X}})$ such that $\bar\mu(u,\mat s, t) > 0$.
In this case, set $f_X(u, \mat s, \ldots) := t[X]$.
If $\bar\mu(U_{\!a_{\!X}} {=}\, u, \Src {a_{\!X}} {=}\, \mat s) = 0$, 
$f_X(u, \mat s, \ldots)$ can be an arbitrary function of $u$ and $\mat s$.
Let $\PSEMsA(\bar\mu)$ denote the set of PSEMs that can result.
\end{constr}

It's clear from \cref{constr:PSEMs} that
$\PSEMsA(\bar\mu)$ 
is always nonempty, and is a singleton
iff $\bar\mu(u,s) > 0$ for all $(a,u,s) \in \sqcup_{a \in \Ar} \V(U_a, \Src a)$
\unskip.
A witness with this property exists when
$\mu$ is positive (i.e.,
$\mu(\X{=}\mat x) > 0$ for all $\mat x \in \V(\X)$
\unskip), in which case the construction gives a unique causal model.
Conversely, we have seen that an acylic model $\mathcal M$ gives rise to a unique witness.
So, in the simplest cases, models
$\cal M$ with structure $\Ar$ and
 witnesses $\bar \mu$ to \cibility\ with $\Ar$
 are equivalent.
But there are two important caveats.
\begin{enumerate}[topsep=0pt,itemsep=0pt]
    \item A causal model $\cal M$ can contain more information than a witness $\bar\mu$ if some events have probability zero. For instance, $\bar\mu$ could be a point mass on a single joint outcome $\omega$ of all variables that satisfies the equations of $\cal M$. But $\cal M$ cannot be reconstructed uniquely from $\bar\mu$ because there may be many causal models for which $\omega$ is a solution. 
    \item A witness $\bar\mu$ can contain more information than a causal model 
    $\cal M$ if $\cal M$ is cyclic. For example, 
    suppose that
    $\cal M$ consists
    of two variables, $X$ and $X'$, and equations $f_X(X') = X'$ and
      $f_{X'}(X) = X$
    \unskip. In this case, $\bar\mu$ cannot be reconstructed from $\cal M$, because $\cal M$ does not contain information about the distribution of $X$. 
\end{enumerate}
These two caveats appear to be very different, but they fit
together in a surprisingly elegant way.
\commentout{
But even when the correspondence is not unique, the causal models ${\cal M} \in \PSEMsA(\bar\mu)$ still contain much of the same information as the witness $\bar\mu$ from which they are derived. 
}

\begin{linked}{prop}{witness-model-properties}
    \commentout{
    Suppose $\Ar$ is \subpartl\ and $\bar\mu(\X,\U)$ is a witness
    to $\mu \models \Diamond \Ar$. Then:
    \begin{enumerate}[label=(\alph*), nosep]
    \item 
    $\bar\mu$ can arise from every causal model in $\PSEMsA(\bar\mu)$.
        \hfill
        $\Big(\text{i.e., }\displaystyle
        \bar\mu \in~
        \bigcap_{\mathrlap{\!\!\!\!\mathcal M \in \PSEMs(\bar\mu)} } ~~\SD[\big]{\cal M}
        .
        \Big)
        $
    \item Moreover,
    $\PSEMsA(\bar\mu)$
        is the set of all PSEMs 
        from which $\bar\mu$ can arise, with equations consistent with a derandomization of the dependency structure $\Ar$.
    That is, 
    \[
        \PSEMsA(\bar\mu) = \{ 
        \mathcal M 
        ~|~ \bar\mu \in \SD{\mathcal M}
        ~~\land~~ \forall X {\in }\X.\, {\Pa}_{\mathcal M}(X) \subseteq \Src {a_{\!X}} \cup \{U_{a_{\!X}}\} \}.
    \]
    \end{enumerate}
    }
    If $\bar\mu(\X,\U_{\Ar})$ is a witness for \scibility\ with $\Ar$ and $\mathcal M$ is a PSEM with dependency structure $\Ar$,
    then $\bar\mu \in \SD{\cal M}$ if and only if $\mathcal M \in \PSEMsA(\bar\mu)$. 
\end{linked}

\commentout{
In other words, if you convert a distribution $\bar\mu$ witnessing $\mu \models \Ar$ to a causal model $\mathcal M \in \PSEMs(\nu)$, then 
$\bar\mu$ can arise from $\mathcal M$; moreover,
$\PSEMs(\bar\mu)$ is the set of all SEMs from which $\bar\mu$ can arise,
with equations consistent with the dependency structure of the \hgraph\ $\Ar$. 
}

Equivalently, this means that $\PSEMsA(\bar\mu)$, the possible outputs of \cref{constr:PSEMs}, are precisely the randomized PSEMs 
of dependency structure $\Ar$ that can give rise to $\bar\mu$. 
This is already substantial evidence that causal models $\mathcal M \in \PSEMsA(\bar\mu)$ are closely related to the \scibility\ witness 
$\bar\mu$. But everything we have seen so far describes only the
 correspondence in the absence of intervention, a setting in which many causal models are indistinguishable.  
Yet the correspondence goes deeper; it extends to interventions. 
This claim may seem dubious, as the obvous distinction between observing and doing is a fundemental principle of causality.
What could a witness, which is purely probabilistic, have to say about intervention? 
In 
a randomized PSEM $\mathcal M$, we can define an event 
\begin{equation}
    \mathrm{do}_{\mathcal M}(\mat X{=}\mat x) := 
        ~~
        \bigcap_{X \in \mat X} \bigcap_{\mathrlap{\mat s \in \V(\Pa(X))}~~}~~~
            \big(
            f_X(U_{\!{X}}, \mat s) = \mat x[X]
            \big)
            ,
    \qquad\parbox{3cm}{\centering
    where $\mat x[X]$ is the value of $X$ in $\mat x$.
    }
\end{equation}
This is intuitively the event in which the randomness is such that $\mat X = \mat x$ regardless of the values of the parent variables.
\vfull{%
\unskip\footnote{\vfull{This is essentially the event in which, for each $X \in \mat X$, the \emph{response variable} 
$\hat U_X := \lambda \mat s. f_{X}(\mat s, U_X)$,
whose possible values $\smash{\V(\hat U_X)}$ are functions from $\V(\Pa_{M}(X))$ to $\V(X)$ \citep{Rubin74,BP94}, takes on the constant function
$\lambda \mat p. ~x$.%
}}%
}%
As we now show,
conditioning
$\bar \mu$
on 
$\mathrm{do}_{\cal M}(\mat X{=}\mat x)$ has the effect of intervention
\unskip---at least
as long as the noise variables $\U_{\Ar} = \{\mathcal U_a\}_{a \in 
\Ar}$ are independent of the other exogenous variables $\U \setminus \U_{\Ar}$ in addition to one another (e.g., when $\mathcal M$ is fully randomized).

\begin{linked}{theorem}{condition-intervention-alignment}
Suppose that
$\bar\mu$ is a QIM witness for $\Ar$, that
$\mathcal M = (\U, \enV, \mathcal F, P) \in \PSEMsA(\bar\mu)$
is a corresponding PSEM,
and that the noise variables $\U_{\Ar} = \{ U_X \}_{X \in \enV}$ are independent of the other exogenous variables $\U \setminus \U_{\Ar}$.
For all $\mat X \subseteq \enV$ and $\mat x \in \V(\mat X)$,
if $\bar\mu(\mathrm{do}_{\cal M}(\mat X{=}\mat x)) > 0$, then
\begin{enumerate}[label={(\alph*)}, topsep=0pt,itemsep=0pt]
\item
    $\bar\mu(\enV \,|\, \mathrm{do}_{\cal M}(\mat X{=}\mat x))$ can arise from $\mathcal M_{\mat X {\gets} \mat x}$;
    
    \item
    for all events $\varphi \subseteq \V(\enV)$, 
    $ \displaystyle
        \Pr\nolimits_{\mathcal M}
        \big([\mat X {\gets} \mat x]\varphi\big)
    \le
        \bar\mu\big(\varphi \,\big|\, \mathrm{do}_{\cal M}(\mat X{=}\mat x)\big)
    \le
        \Pr\nolimits_{\mathcal M}
        \big(\langle\mat X {\gets} \mat x\rangle\varphi\big)
    $ \\
    and all three are equal when $\mathcal M \models \U \tto \enV$
    (such as when $\mathcal M$ is acyclic). 
\end{enumerate}
\end{linked}

\cref{theorem:condition-intervention-alignment} shows that the relationship between witnesses and causal models extends to interventions. 
\commentout{%
To those familiar with the causality literature, it may be surprising that $\mathrm{do}_M(\mat X{=}\mat x)$ is actually an event \citep{sep-causal-models,halpern-2000}, and even moreso that 
ordinary conditioning on this event has the effect of intervention, since conditioning and intervention 
are usually thought of as fundmentally different. 
(Indeed, intervening on $\mat X{=}\mat x$ remains fundementally different from conditioning on $\mat X{=}\mat x$.)
Nevertheless, 
intervention on $\mat X{=}\mat x$ can be viewed as conditioning on $\mathrm{do}_M(\mat X{=}\mat x)$, a state of affairs that has allowed us to demonstrate that the relationship between causality and witnesses extends to interventions.
}%
\commentout{%
To those familiar with the causality literature, it may be surprising that $\mathrm{do}_M(\mat X{=}\mat x)$ is actually an event \citep{sep-causal-models,halpern-2000}, and that conditioning on 
    it has the effect of intervention, since conditioning and intervention are 
    conceptually very different. 
}%
\commentout{%
(We remark that although $\mathrm{do}_M(X{=}x)$ is often an event of
probability 0, \cref{theorem:condition-intervention-alignment} can be
extended to events ``close'' to 
$\mathrm{do}_{\mathcal M}(X{=}x)$ that have positive probability; conditioning on
them has an effect close to that of intervention.  We defer formal
datails to the full paper.)
}%
Even when $\mathrm{do}_{\mathcal M}(\mat X{=}\mat x)$ has probability zero, 
    it is always possible to find a nearly equivalent
    setting where the bounds of the theorem apply.
    \unskip\footnote{More precisely, 
    for all $\epsilon > 0$, there exists some $\mathcal M'$ that
    differs from $\mathcal M$ on the probabilities of all causal formulas
    by at most $\epsilon$, and a distribution $\bar\mu'$ that is $\epsilon$-close to $\bar\mu$, such that $\bar\mu'(\mathrm{do}_{\mathcal M'}(\mat X{=}\mat x))>0$.
    \vfull{%
     As a result, \cref{theorem:condition-intervention-alignment} places bounds on the conditional probabilities that are possible limits of sequences of distributions $(\nu_k)_{k\ge 0}$ where $\nu_k(\mathrm{do}_M(\mat X{=}\mat x)) > 0$, i.e., the 
     possible outcomes of conditioning a \emph{non-standard} probability measure
        \cite{halpern-RAU}
     on this probability-zero event.}%
    } %
\commentout{
The connection between conditioning and intervension here may seem
surprising, because they are viewed as being conceptually very different.
}%
Intervention and conditioning are conceptually very different, so
it may seem surprising that
conditioning can have the effect of intervention
    (and also that the Pearl's $\mathrm{do}(\,\cdot\,)$ notation actually corresponds to an event \citep{sep-causal-models}). 
We emphasize that the conditioning (on $\mathrm{do}_{\cal M}(\mat X{=}\mat x)$) is on the randomness 
$\{ U_X \}_{X \in \mat X}$ and not $\mat X$ itself; 
intervening on $\mat X{=}\mat x$ is indeed fundamentally different
from conditioning on $\mat X{=}\mat x$.
\commentout{%
    Yet, as \cref{theorem:condition-intervention-alignment}
    shows, even the former can be encoded as 
    probabilistic conditioning in an extended distribution (a witness), 
    deepening the connection between witnesses and causal models.
}%

\commentout{
\subsubsection{Acyclic Hypergraphs}
When $\Ar = \Ar_G$ for some is a directed acyclic graph $G$ (i.e., a qualitative BN), then a witness to $\mu \models \Ar$ is
essentially an acyclic causal model
in which the equation for $X$ depends on $\Pa_G(X)$ and
independent noise, 
that uniquely determines the distribution $\mu(\X)$ in the absence of intervention.
    In light of this,
    \cref{theorem:bns} can be viewed as result 
        that implicitly undergirds much of the work on causality:
    every acyclic randomized SEM
    induces a distribution with the independencies of the appropriate Bayesian Network---%
    and, conversely,
    every distribution with those independencies
    arises from such a causal model.

    \commentout{
    \begin{prop}
        If $\Ar = \Ar_G$ is acyclic,
        $\mu \models \Ar_G$, and $G$ is minimal 
            (i.e.,$\mu \not\models \Ar_{G'}$ for all $G' \subsetneq G$),
        then for all $\nu \in \Wits(\mu, \Ar)$ and $\mathcal M \in \PSEMs(\nu)$, $\Ar_{G} = \Ar_{\mathcal M}$.
    \end{prop}
    }
}

\commentout{
\subsubsection{Non-partitional Hypergraphs}
In the case where $\Ar$ is \subpartl\ (but not \suppartl), 
the situation is very similar. 
The only difference is that the variables that are not targets of any \hyperarc\ do not naturally correspond to an equation, and hence should be treated as exogenous.
Thus, a witness $\bar\mu$ to \scibility\ with a \subpartl\ \hgraph\  $(\X, \Ar)$ corresponds to causal model(s) $\PSEMs(\bar\mu)$ with 
endogenous variables $\bigcup_{a \in \Ar} \Tgt a$ and exogenous variables $\{ U_a \}_{a \in \Ar} \cup (\X \setminus \bigcup_{a \in \Ar}\Tgt a)$.
}

\commentout{
\begin{conj}
Every PSEM $\mathcal M$  with structure $\Ar$ determining a distribution
    $\mu(\X)$ in the absence of interventions, satisfies
\[
    \Wits(\mu, \Ar\mid D_x)_*( B )
    \le
    \mathcal M(B \mid \mathrm{do}(X{=}x))
    \le
    \Wits(\mu, \Ar \mid D_x)^*( B )
\]
for every event $B \subseteq \V(\X)$.
Here $\Wits(\mu, \Ar \mid D_x)$ denotes the set of witnesses
that $\mu$ is compatible with $\Ar$ conditioned on the event $D_x$,
while $\Wits(\mu, \Ar)_* := \sup_{\nu \in \Wits(\mu,\Ar)} \nu(B \mid D_x)$,
$\Wits(\mu, \Ar)^* := \inf_{\nu \in \Wits(\mu,\Ar)} \nu(B \mid D_x)$ 
are the corresponding lower and upper probabilities, respectively.
\end{conj}
}

\section{\SCibility\ and Information Theory}
    \label{sec:pdgs}
    \label{sec:info}

\commentout{
In this section, we investigate the rich relationship between \scibility\ and information theory, which is mediated by the theory of probabilistic dependency graphs. 
One aspect, which we will develop in \cref{sec:scoring-funs}, is an entropy-based scoring function that measures a distribution's degree of \SQIM-incompatibility. That function is quite different from (although can be viewed as essentially a special case of), the original qualitative PDG scoring function \citep{one-true-loss,pdg-infer}. 
Before that (\cref{sec:sc-infobound}), we present a more useful and fundmental
 aspect of the relationship, based on the same scoring function: a generic and highly non-trivial information-theoretic consequence of \scibility.
}
The fact that the dependency structure of a (causal) Bayesian network describes the independencies of the distribution it induces is fundamental to both causality and probability. 
It makes explicit the distributional consequences of BN structure.
Yet, despite
substantial interest \cite{Baier_2022},
generalizing the BN case to more
complex (e.g., cyclic) dependency structures remains largely an open
problem.  
In \cref{sec:sc-infobound}, we 
generalize the BN case by
providing an information-theoretic constraint,
capable of capturing conditional independence, functional dependence, and more,
on the distributions that can arise from an \emph{arbitrary} 
dependency structure.
This connection between causality and information theory has implications for both fields. It grounds the cyclic dependency structures found in causality in concrete constraints on the distributions they represent. 
At the same time, it allows us to resolve longstanding confusion about structure in information theory, clarifying the meaning of the so-called ``interaction information'', and
recasting a standard counterexample to substantiate the claim it was intended to oppose.
In \cref{sec:scoring-funs}, we strengthen this connection. Using entropy to measure distance to (in)dependence, we develop a scoring function to measure how far a distribution is from being QIM-compatible with a given dependency 
structure.  
This function turns out to have an intimate relationship with the qualitative PDG scoring fucntion $\IDef$, which we use to 
show that our information-theoretic constraints degrade gracefully on ``near-compatible'' distributions.

\commentout{
\begin{figure}
    \centering
    \def\circsize{1.6}
    \def\radsize{0.8}
    \edef\outerrad{1.6}
    \def\innerrad{1.00}
    \scalebox{0.7}{
    \begin{tikzpicture}[center base]
        \node at ({-\circsize-\radsize+0.1},{\circsize+\radsize-0.2}) {\underline{\textbf{(a)}}};

        \path[fill=white!90!black] (210:\radsize) circle (\circsize) ++(225:\circsize) node[label={[label distance=2pt,inner sep=0pt]left:\Large $X$}]{};
        \path[fill=white!90!black] (90:\radsize) circle (\circsize)
            ++(65:\circsize) node[label={[label distance=1ex]right:\Large $Y$}]{};
        \path[fill=white!90!black] (-30:\radsize) circle (\circsize) ++(-45:\circsize) node[label={[label distance=1pt,inner sep=0pt]right:\Large$Z$\!}]{};

        \begin{scope}[gray]
            \draw[] (-30:\radsize) circle (\circsize);
            \draw[] (210:\radsize) circle (\circsize);
            \draw[] (90:\radsize) circle (\circsize);
        \end{scope}
        
        \begin{pgfinterruptboundingbox} %
            \begin{scope}[even odd rule]
                \clip (210:\radsize) circle ({\circsize-0.03}) (-3,-3) rectangle (3,3);
                \clip (-30:\radsize) circle ({\circsize+0.03});
                \draw[ultra thick,blue,fill=blue, fill opacity=0.3] (-30:\radsize) circle (\circsize);
                \draw[ultra thick,blue] (210:\radsize) circle (\circsize);
            \end{scope}
        \end{pgfinterruptboundingbox}
        \begin{pgfinterruptboundingbox} %
            \begin{scope}[even odd rule]
                \clip (90:\radsize) circle ({\circsize-0.03}) (-3,-3) rectangle (3,3);
                \clip (210:\radsize) circle ({\circsize+0.03});
                \draw[ultra thick,purple,fill=purple, fill opacity=0.3] (210:\radsize) circle (\circsize);
                \draw[ultra thick,purple] (90:\radsize) circle (\circsize);
            \end{scope}
        \end{pgfinterruptboundingbox}
        \begin{pgfinterruptboundingbox} %
            \begin{scope}[even odd rule]
                \clip (-30:\radsize) circle ({\circsize-0.03}) (-3,-3) rectangle (3,3);
                \clip (90:\radsize) circle ({\circsize+0.03});
                \draw[ultra thick,violet!50!black,fill=violet, fill opacity=0.2] (90:\radsize) circle (\circsize);
                \draw[ultra thick,violet!50!black] (-30:\radsize) circle (\circsize);
            \end{scope}
        \end{pgfinterruptboundingbox}
        \node at (-90:\innerrad) {\small$\I(X;Z|Y)$};
        \node[rotate=-60] at (30:\innerrad) {\small$\I(Y;Z|X)$};
        \node[rotate=60] at (150:\innerrad) {\small$\I(X;Y|Z)$};
        \node at (0:0){\small$\I(X;Y;Z)$};
        \node at (90:\outerrad) {\small$\H(Y|X,Z)$};
        \node[rotate=60] at (-30:\outerrad) {\small$\H(Z|Y,X)$};
        \node[rotate=-60] at (-150:\outerrad) {\small$\H(X|Y,Z)$};
        
        \node[purple!50!black,text opacity=1,fill=purple!15!white,fill opacity=1,
                rotate=0] (HAB) at (-110:2.0) 
            {\small$\H(X|Y)$};
        \node[violet!20!black,text opacity=1,fill=violet!15!white,fill opacity=1,
                rotate=0] (HBC) at (145:2.10) 
            {\small$\H(Y|Z)$};
        \node[blue!50!black,text opacity=1,fill=blue!15!white,fill opacity=1,
            rotate=0] (IBC) at (15:2.1)
            {\small$\H(Z|X)$};
    \end{tikzpicture}
    }
    \!   
    \def\circsize{0.7}
    \def\radsize{0.42}
    \def\twosize{0.5}
    \begin{tabular}{@{}c@{}c@{}}
    \begin{tikzpicture}[center base,scale=0.9]
        \node at ({-\circsize-\radsize+0.1},{\circsize+\radsize-0.2}) {\textbf{\underline{(b)}}};
        \begin{scope}
            \path[fill=green!60!black,fill opacity=0.25]
                (210:\radsize) circle (\circsize) ++(210:\circsize)
                (210:\radsize) circle (\circsize) ++(210:\circsize)
                (-30:\radsize) circle (\circsize) ++(-30:\circsize);
        \end{scope}
        \begin{scope}[even odd rule]
            \clip (-30:\radsize) circle (\circsize)
                    (210:\radsize) circle (\circsize)
                    (90:\radsize) circle (\circsize);
            \fill[fill=white!90!black] 
                    (-30:\radsize) circle (\circsize)
                    (210:\radsize) circle (\circsize)
                    (90:\radsize) circle (\circsize);
         \end{scope}
         \begin{scope}
            \clip (-30:\radsize) circle (\circsize);
            \clip (210:\radsize) circle (\circsize);
            \clip (90:\radsize) circle (\circsize);
            \fill[fill=orange!80!black!40!white] 
                (90:\radsize) circle (\circsize);
          \end{scope}
        \begin{scope}[gray]
            \draw[] (-30:\radsize) circle (\circsize);
            \draw[] (210:\radsize) circle (\circsize);
            \draw[] (90:\radsize) circle (\circsize);
        \end{scope}
        \node at (-90:\twosize) {\small +1};
        \node at (30:\twosize) {\small +1};
        \node at (150:\twosize) {\small +1};
        \node at (0,0) {\small -1};
    \end{tikzpicture}
    &
    \begin{tikzpicture}[center base,scale=0.9]
        \node at ({-\circsize-\radsize+0.1},{\circsize+\radsize-0.2}) {\underline{\textbf{(c)}}};
        \begin{scope}
            \path[fill=green!60!black,fill opacity=0.25]
                (210:\radsize) circle (\circsize) ++(210:\circsize)
                (210:\radsize) circle (\circsize) ++(210:\circsize)
                (-30:\radsize) circle (\circsize) ++(-30:\circsize);
        \end{scope}
        \begin{scope}[even odd rule]
             \clip (-30:\radsize) circle (\circsize)
                    (210:\radsize) circle (\circsize)
                    (90:\radsize) circle (\circsize);
            \fill[fill=white!90!black] 
                    (-30:\radsize) circle (\circsize)
                    (210:\radsize) circle (\circsize)
                    (90:\radsize) circle (\circsize);
         \end{scope}
        \begin{scope}[gray]
            \draw[] (-30:\radsize) circle (\circsize);
            \draw[] (210:\radsize) circle (\circsize);
            \draw[] (90:\radsize) circle (\circsize);
        \end{scope}
        \node at (-90:\twosize) {\small +1};
        \node at (30:\twosize) {\small +1};
        \node at (150:\twosize) {\small +1};
    \end{tikzpicture}
    \end{tabular}
    \caption{
        \textbf{(a)} 
            The components of the information profile over three variables;
        \textbf{(b)} the information profile of $\muxor$;
        \textbf{(c)} the information profile of $P$ and $Q$ from \cref{example:ditrichotomy}.
    }
        \label{fig:info-diag}
\end{figure}
}

\begin{wrapfigure}[11]{o}{3.55cm}
    \def\circsize{1.6}
    \def\radsize{0.8}
    \edef\outerrad{1.6}
    \def\innerrad{1.00}
    \scalebox{0.7}{\!\!\!
    \begin{tikzpicture}[center base]

        \path[fill=white!90!black] (210:\radsize) circle (\circsize) ++(225:\circsize) node[label={[label distance=2pt,inner sep=0pt]left:\Large $X$}]{};
        \path[fill=white!90!black] (90:\radsize) circle (\circsize)
            ++(65:\circsize) node[label={[label distance=1ex]right:\Large $Y$}]{};
        \path[fill=white!90!black] (-30:\radsize) circle (\circsize) ++(-45:\circsize) node[label={[label distance=1pt,inner sep=0pt]right:\Large$Z$\!}]{};

        \begin{scope}[gray]
            \draw[] (-30:\radsize) circle (\circsize);
            \draw[] (210:\radsize) circle (\circsize);
            \draw[] (90:\radsize) circle (\circsize);
        \end{scope}
        
        \begin{pgfinterruptboundingbox} %
            \begin{scope}[even odd rule]
                \clip (210:\radsize) circle ({\circsize-0.03}) (-3,-3) rectangle (3,3);
                \clip (-30:\radsize) circle ({\circsize+0.03});
                \draw[ultra thick,blue,fill=blue, fill opacity=0.3] (-30:\radsize) circle (\circsize);
                \draw[ultra thick,blue] (210:\radsize) circle (\circsize);
            \end{scope}
        \end{pgfinterruptboundingbox}
        \begin{pgfinterruptboundingbox} %
            \begin{scope}[even odd rule]
                \clip (90:\radsize) circle ({\circsize-0.03}) (-3,-3) rectangle (3,3);
                \clip (210:\radsize) circle ({\circsize+0.03});
                \draw[ultra thick,purple,fill=purple, fill opacity=0.3] (210:\radsize) circle (\circsize);
                \draw[ultra thick,purple] (90:\radsize) circle (\circsize);
            \end{scope}
        \end{pgfinterruptboundingbox}
        \begin{pgfinterruptboundingbox} %
            \begin{scope}[even odd rule]
                \clip (-30:\radsize) circle ({\circsize-0.03}) (-3,-3) rectangle (3,3);
                \clip (90:\radsize) circle ({\circsize+0.03});
                \draw[ultra thick,violet!50!black,fill=violet, fill opacity=0.2] (90:\radsize) circle (\circsize);
                \draw[ultra thick,violet!50!black] (-30:\radsize) circle (\circsize);
            \end{scope}
        \end{pgfinterruptboundingbox}
        \node at (-90:\innerrad) {\small$\I(X;Z|Y)$};
        \node[rotate=-60] at (30:\innerrad) {\small$\I(Y;Z|X)$};
        \node[rotate=60] at (150:\innerrad) {\small$\I(X;Y|Z)$};
        \node at (0:0){\small$\I(X;Y;Z)$};
        \node at (90:\outerrad) {\small$\H(Y|X,Z)$};
        \node[rotate=60] at (-30:\outerrad) {\small$\H(Z|Y,X)$};
        \node[rotate=-60] at (-150:\outerrad) {\small$\H(X|Y,Z)$};
        
        \node[purple!50!black,text opacity=1,fill=purple!15!white,fill opacity=1,
                rotate=0] (HAB) at (-110:2.0) 
            {\small$\H(X|Y)$};
        \node[violet!20!black,text opacity=1,fill=violet!15!white,fill opacity=1,
                rotate=0] (HBC) at (145:2.10) 
            {\small$\H(Y|Z)$};
        \node[blue!50!black,text opacity=1,fill=blue!15!white,fill opacity=1,
            rotate=0] (IBC) at (15:2.1)
            {\small$\H(Z|X)$};
    \end{tikzpicture}\!\!\!%
    }
    \caption{$\mat I_\mu$.
     }
        \label{fig:info-diag-a}
\end{wrapfigure}
We now review the critical information theoretic concepts and their relationships to
    (in)dependence
    (see \cref{appendix:info-theory-primer} for a full primer).
Conditional entropy $\H_\mu(Y|X)$ measures how far $\mu$ is from satisfying the functional dependency $X \tto Y$
\unskip. Conditional
mutual information $\I_\mu(Y;Z| X) $ measures how far $\mu$ is from satisfying the conditional independence $Y \CI Z \mid X$. 
\commentout{
The two kinds of quantities and their unconditional variants are all interdefinable
(indeed, $\H(Y|X) = \I(Y;Y|X)$, a strengthening of \eqref{eq:conditional-self-independence-det}) \unskip, and generated by linear combinations of one another; moreover, they are related by an inclusion-exclusion rule (illustrated for three variables in 
\cref{fig:info-diag-a}
). Thus, to specify all standard information-theoretic quantities for a joint distribution $\mu(\X)$,
it suffices to give a vector of $2^{|\X|}-1$ numbers, which we will call the \emph{information proile} of $\mu$, and denote $\mat I_\mu$. 
}
Linear combinations of these quantities 
(for $X,Y,Z \subseteq \X$)
can be viewed as the inner product between a coefficient vector $\mat v$
and a $2^{|\X|}-1$ dimensional vector $\mat I_\mu$ that we will call the \emph{information profile} of $\mu$. For three variables, the components of this vector are illustrated in 
\cref{fig:info-diag-a} (right).
It is not hard to see that an arbitrary conjunction of (conditional) (in)dependencies 
can be expressed as
a constraint $\mat I_\mu \cdot \mat v \ge 0$, for 
some appropriate choice of vector $\mat v$
\unskip.

We now formally introduce the qualitative PDG scoring function $\IDef$, which interprets a hypergraph structure $\Ar$ as a function of the form $\mat I_\mu \cdot \mat v_{\!\Ar}$.
This \emph{information deficiency}, given by
\begin{equation}
    \IDef_{\!\Ar}(\mu) 
    = \mat I_\mu \cdot \mat v_{\!\Ar} 
    := - \H_\mu(\X)  + \sum_{a \in \Ar} \H_\mu(\Tgt a \mid \Src a)
    ,
        \label{eq:idef}
\end{equation}
is the difference between the number of bits needed to (independently) specify the randomness in $\mu$ along the \arc s of $\Ar$, and
the number of bits needed to specify a sample of $\mu$ according to its own structure 
($\emptyset \to \X$).
While $\IDef$ has some nice properties\footnote{%
\commentout{
These properties include:
\begin{itemize}[nosep]
    \item capturing BN independencies,
    \item capturing the dependencies in of \cref{theorem:func},
    \item nice interactions with the qualitative information in a PDG given by a sum of relative entropies,
    \item capturing factor graphs and their associated exponential families (as a result of the previous point),
    \item reducing to maximum entropy in the absence of structural information, 
    \item being easy to calculate from $\mu$ in closed form, and
    \item being the opposite of 
    the corresponding quantitative term in $\Inc$,
     in the sense of having the opposite vector field.
\end{itemize}}%
It 
captures BN independencies and the dependencies of \cref{theorem:func}, 
reduces to maximum entropy for the empty \hgraph, and
combines with the quantitative PDG scoring function \cite{pdg-aaai} to capture factor graphs.
},
it can also behave unintuitively in some cases; for instance, it can be negative.
Clearly, it does not measure how close $\mu$ is to being structurally compatible with $\Ar$, in general. 
\commentout{%
Nevertheless, if $\mu\models \Ar$, we prove that $- \IDef_{\!\Ar}(\mu) = \mat v_{\!\Ar} \cdot \mat I_\mu \ge 0$. 
This is quite useful; by inspecting $\mat v_{\!\Ar}$, one can often directly interpret the constraints imposed by \scibility\  in terms of well-studied information-theoretic primitives. The same connection also allows \scibility\ to elucidate some of the less inuitive information-theoretic quantities. 
}%
Nevertheless, there is still a fundamental relationship 
between $\IDef$ and \scibility, as we now show.
\commentout{
Still, the quantity $\IDef_{\!\Ar}(\mu)$ does not determine whether or not $\mu \models \Ar$. 
In \cref{sec:scoring-funs}, we derive an entropy-based scoring function that does, by directly translating conditions (a-c) to  scoring functions on witnesses, and then adding an infemum to take care of the existential quantifier (i.e. the choice of witness) implicit in \cref{defn:scompat}.
Perhaps surprisingly, without the infemum, this function is
an instance of $\IDef$---yet not for the original \hgraph, but a transformed one.
Thus the new scoring function is essentially an instance of the old one, tailored for \scibility.  When combined with a modest extension of \cref{theorem:sdef-le0}, this observation also gives us natural easy-to-compute bounds 
(\cref{theorem:siminc-idef-bounds})
for
this new \scibility-based
scoring function.
}

\commentout{
\clearpage
\subsection*{Previous \S5 Intro}
\commentout{    
\citet{pdg-aaai} provided a scoring function that 
    was intended to measure 
    how far a probability distribution was from being structurally  compatible with a PDG, viewed as a hypergraph.
    We discuss this
    scoring function below; for now, we just mention that the scoring
    function was rather difficult to interpret as a degree of
    incompatibility since, for example, it could be negative.
    Fortunately, the definition of structural incompatibility we have
    given here leads to a natural scoring function, which we now describe.
}
\commentout{
    \citet{pdg-aaai} defined an entropy-based scoring function ($\IDef$) that 
        was intended to measure discrepancy between a probability distribution
        and a PDG's underlying (weighted) \dhygraph\ structure.
    Although that scoring function has many nice properties,
        it is rather difficult to interpret directly as a ``degree of
        incompatibility'' since, for example, it could be negative.
    In \cref{sec:sc-infobound}, we discuss an important
     connection between \scibility\ and the information theory behind $\IDef$, which works in two directions.
    On one hand, it gives us an entropy-based criterion to conclude that a distribution cannot be \scible\ with a \dhygraph; on the other, our definition of \scibility\ can be used to shine light on a (seeming) counterexample in the information theory literature.
    Nevertheless, $\IDef$ still does not exactly coincide with \scibility.
        In \cref{sec:scoring-funs}, we derive a new scoring function that
        directly measures how far a distribution is being \scible\ with a 
        given \dhygraph---%
        which, perhaps surprisnigly, turns out to be an instance of the original scoring function, applied to a transformed \dhygraph.
    To understand these information theoretic issues, we begin by reviewing some standard definitions.
}
Just as \scibility\ has 
an important relationship with causal models
 (\cref{sec:causal}), so too does it have 
an important relationpship with information theory.
We have seen how \scibility\ can capture 
certain conditional indepependencies (\cref{theorem:bns})
and dependencies (\cref{theorem:func}),
by restricting to special classes of \hgraph s. 
But, for \hgraph s outside these classes (such as those generated by cyclic graphs), \scibility\ can 
impose much subtler constraints on a
    distribution
\unskip, which require information theory to describe
     \unskip.
The connection also useful in the other direction:
 \scibility\ can clarify important misunderstandings in information theory
    \unskip.
It is also possible to directly capture \scibility\ with 
an entropy-based scoring function and an infemum.
\commentout{
    We detail both phenomena in \cref{sec:sc-infobound}.
    In \cref{sec:scoring-funs}, we derive an entropy-based scoring function that measures how far a distribution is from being \scible\ with a given \hgraph. 
    But, perhaps surprisingly,
    this new scoring function coincides with the original PDG scoring function
    if the underlying hypergraph is first transformed appropriately
        \unskip.
    Thus PDGs as originally defined, difficult as they may be to interperet in general, capture \scibility\ in a natural way.
}

In some sense, information theory is a continuous relaxation of probabilitic (in)dependence:
the conditional forms of mutual information $\I_\mu(Y;Z|X)$
and entropy $\H_\mu(Y|X)$%
\footnote{
Either quantity can be derived from the other. 
In particular, $\H_\mu(Y|X) = \I_{\mu}(Y;Y|X)$, deepening 
the observation made in \cref{eq:conditional-self-independence-det}.
See \cref{appendix:info-theory-primer} for definitions and a short review.}---the most important quantities in the field---measure
how far a distribution $\mu$ is from satisfying
a conditional independence $Y \CI Z \mid X$ and
a functional dependence $X \tto Y$,
respectively. 
Multivariate information theory promises
to distill \emph{all} {structural} relationships between a 
set $\X$ of variables 
(distributed according to $\mu$)
into a vector of $2^{|\X|}-1$ 
such
numbers 
that fit together with an inclusion-exclusion rule,
called
    the
    \emph{information profile}
    of $\mu$.
This is depicted for $\X = \{X,Y,Z\}$ in \cref{fig:info-diag}(a).
\commentout{ 
This representation is made possible by the fact that
    these quantities are related (defined, in fact)
    by an inclusion-exclusion rule, 
which allows us to read off
identities like
\commentout{
    $\I(X;Y;Z) = \H(X,Y,Z) -  {\color{purple!60!black}\H(X|Y)} + {\color{violet!65!black}\H(Y|Z)} + {\color{blue!50!black}\H(Z|X)}$ 
        (which will soon play an important role)
        off such a diagram.
}
$\H(X|Y) = \H(X|Y,Z)+ \I(X;Z|Y)$. 
}%
Does multivariate information theory deliver on its promises? 
Many have argued that the answer is no.
Yet one of the most intuitively powerful arguments rests on 
a poorly motivated notion
 of structure in distributions
\unskip. When
we replace that notion with \scibility, the
argument
is not only defused, but 
 supports the opposite of its intended purpose (\cref{example:ditrichotomy}).

\commentout{
\begin{defn}[Qualitative PDG]
    A Qualitative Dependency Graph (QDG) $\dg Q = (\N, \Ar, \balpha)$,
    is a \dhygraph\ $(\N, \Ar)$, together with a vector
    $\balpha = \{ \alpha_a \}_{a \in \Ar} \in \mathbb R^{\Ar}$ of
    weights for each \arc.
\end{defn}

For completeness, a (purely) quantitative probabilistic
dependency graph is a tuple $(\X, \mathbb P, \bbeta)$,
in which each node $N \in \N$ is interpreted a variable $X_N \in \X$
by associating it with a set $\V(N)$ of possible values,
and each \arc\ $\ed aST \in \Ar$ is 
    interpreted as a mechanism by associating it
with a conditional probability distribution
$\p_a(\Tgt a | \Src a)$ on its target given its source.
}

\commentout{
With multivariate information theory, we can describe the ``subtler constraints'' imposed by \scibility:
they are linear inequalities
of the form $\mat{i}_\mu \cdot \mat v \ge 0$, 
    where $\mat i_{\mu}$ is $\mu$'s information profile, and $\mat v$ is a vector of integers.
}
SIM-compatibility with a hypergraph imposes a fundemental constraint on a distribution's information profile,
\unskip\footnote{
Specifically, a constraint of the form $\mat{i}_\mu \cdot \mat v \ge 0$, where $\mat i_{\mu}$ is $\mu$'s information profile, and $\mat v$ is a vector of integers.
It is not hard to see that arbitrary collections of (conditional) (in)dependencies can be captured by specifying $\mat v$ appropriately. 
    Given this, the reader might wonder: why bother with \scibility\ at all? 
    Here are two reasons: the dimension of the information profile is exponential in $|\X|$, and its components become increasingly unintuitive as the number of variables grows. 
    Our goal is not only to capture (in)dependence 
    \unskip, but to do so in 
    a compact, graphical, and causally motivated manner. 
}
\commentout{
To this end, we would like to compile a \hgraph\ to a function of an information profile, which we do directly in \cref{sec:scoring-funs}. 
But there is a simpler and more fundemental scoring function in the literature that has a more intricate and useful relationship with \scibility, an even closer relationship with \scibility, and ultimately will be able to capture both notions indirectly. 
}

\citet{pdg-aaai} define
a 
scoring function 
that converts a \hgraph\ to 
such
a linear function of $\mu$'s information profile, which was intended to quantify discrepancy between $\mu$ and a PDG's underlying structure.
\commentout{
    \Citet{pdg-aaai}
    define the \emph{information deficiency} of
    a distribution $\mu$ with respect to a \dhygraph\ $\Ar$
    as:
}
This \emph{information deficiency}, given by
\begin{equation}
    \IDef_{\!\Ar}(\mu) := - \H_\mu(\X)  + \sum_{a \in \Ar} \H_\mu(\Tgt a \mid \Src a),
\end{equation}
\commentout{
    The first term is the number of bits needed to specify an outcome of
    $\mu$
    according to $\mu$'s own structure $(\to \X)$
    while the second
    is the number of bits needed to (independently) specify randomness along the \arc s of $\Ar$.
    The structural deficiency $\IDef_{\!\Ar}(\mu)$ is
    therefore the difference between the cost of a sample of
        $\mu$ along the structure $\Ar$, and describing it directly.
}
is the difference between the number of bits needed to (independently) specify the randomness in $\mu$ along the \arc s of $\Ar$, and the number of bits needed to specify a sample of $\mu$ according to its own structure $(\to \X)$.
$\IDef$ has many nice properties.%
\footnote{
When $G$ is a dag, then $\IDef_{\!\Ar_G}(\mu)$
is non-negative, and
is zero iff $\mu$ has the independencies of the corresponding BN, 
which is a direct
analogue of our \cref{theorem:bns}.
$\IDef$ also
\ifvfull\vfull{
also satisfies analogues of
\cref{prop:equiv-factorizations-cnd,prop:strong-mono}.%
    \footnote{
    Specifically, if $\Ar_1, \Ar_2, \Ar_3$ are the three \hgraph s in \cref{prop:equiv-factorizations-cnd}, then it is easy to prove that $\IDef_{\!\Ar_1} = \IDef_{\!\Ar_2} = \IDef_{\!\Ar_3}$.
    In addition,
    If $\Ar \rightsquigarrow \Ar'$, then it is easy to show that $\IDef_{\!\Ar} \ge  \IDef_{\!\Ar'}$.
    See appendix for details.}
}
\else
satisfies an analogue of \cref{prop:strong-mono}: if $\Ar \rightsquigarrow \Ar'$, then $\IDef_{\!\Ar}(\mu) \ge \IDef_{\!\Ar'}(\mu)$.
\fi
}
However $\IDef_{\!\Ar}(\mu)$ can be
negative in some cases (e.g., when $\Ar = \emptyset$)),
and so clearly it does not measure ``how far a $\mu$ is from being compatible with $\Ar$''
\unskip, in general.
Nevertheless,
\scibility\ and $\IDef$ are
deeply connected.
We will demonstrate a different aspect of that connection in each of the coming subsections.

\commentout{
In \cref{sec:sc-infobound}, we describe a deep and nontrivial connection between \scibility\ and
information deficiency ($\IDef$),
which works in two directions.  In one direction, it gives us an entropy-based criterion to conclude that a distribution cannot be \scible\ with a \hgraph; in the other, 
it enables
 our definition of \scibility\ to clarify some important
 misunderstandings
in information theory. 
We then (\cref{sec:scoring-funs}) derive a new entropy-based scoring function to directly measure how far a distribution is being \scible\ with a given \hgraph.
But, perhaps surpringly, the new function turns out to be an instance of 
$\IDef$ applied to a transformed \hgraph.
}

\commentout{
In the PDG formalism
the idea is to find a distribution $\mu$ that minimizes $\IDef_{\!\Ar}(\mu)$
plus an observational incompatibility $\Inc_{\dg M}(\mu)$ with the qualitative information.
}
}

\subsection{A Necessary Condition for \SCibility}
    \label{sec:sc-infobound}
What constraints does \scibility\ with $\Ar$ place on a distribution $\mu$?
When $G$ is a dag, we have seen that if $\mu \models \Diamond \Ar_G$, then $\mu$ must satisfy the independencies of the corresponding Bayesian network (\cref{theorem:bns}); we have also seen that additional \arc s impose functional dependencies (\cref{theorem:func}). But these
results apply only when $\Ar$ is of a very special form.  
\commentout{%
    In \cref{sec:monotone} we saw how we can weaken \scibility\ relationships, but information is lost in the process, and the resulting \scibility\ statement often no easier to interperet.
}%
More generally, $\mu \models \Diamond \Ar$ implies that $\mu$ can arise from some 
randomized causal model whose equations
have dependency structure $\Ar$
(\cref{prop:sc-graph-arise,prop:gen-sim-compat-means-arise}).
Still,
unless $\Ar$ has a particularly special form, 
it is not obvious 
whether or not
this says something about $\mu$.
The primary result of this section is an
information-theoretic bound (\cref{theorem:sdef-le0}) that
generalizes most of the concrete consequences of \scibility\ 
we have seen so far (\cref{theorem:bns,theorem:func}).
\commentout{%
The result is a bridge between information theory and causality---%
in one direction, giving us a useful information-theoretic consequence of \scibility,
and in the other, enabling \scibility\ to dispell misconceptions in information theory.
}%
The result is a connection between information theory and causality; it yields an information-theoretic test for complex causal dependency structures, and enables causal notions of structure to dispel misconceptions in information theory.

\commentout{
    We start by stating a useful result that relates \scibility\ and the original PDG scoring function. 
}
\begin{linked}{theorem}{sdef-le0}
    If $\mu \models \Diamond\Ar$, then
    $\IDef_{\!\Ar}(\mu) \le 0$.
\end{linked}

\commentout{
\Cref{theorem:sdef-le0} gives  
a direct information-theoretic test  
for \scibility
\ with an arbitrary hypergraph.
Moreover, it subsumses every
}
\Cref{theorem:sdef-le0} applies to all hypergraphs, and subsumes every general-purpose technique we know of for proving that $\mu \not\models \Diamond\Ar$. 
Indeed, the negative directions of \cref{theorem:bns,theorem:func} are
 immediate consequences of it.
To illustrate
some of its subtler implications,
we return to the 3-cycle in \cref{example:xyz-cycle-1}.

\begin{wrapfigure}[5]{o}{1.7cm}
    \vspace{-2ex}
    \def\circsize{0.7}
    \def\radsize{0.42}
    \def\twosize{0.5}
    
    \scalebox{0.85}{
    \begin{tikzpicture}[center base,scale=0.9]
        \begin{scope}
            \path[fill=green!60!black,fill opacity=0.25]
                (210:\radsize) circle (\circsize) ++(210:\circsize)
                (210:\radsize) circle (\circsize) ++(210:\circsize)
                (-30:\radsize) circle (\circsize) ++(-30:\circsize);
        \end{scope}
        \begin{scope}[even odd rule]
            \clip (-30:\radsize) circle (\circsize)
                    (210:\radsize) circle (\circsize)
                    (90:\radsize) circle (\circsize);
            \fill[fill=white!90!black] 
                    (-30:\radsize) circle (\circsize)
                    (210:\radsize) circle (\circsize)
                    (90:\radsize) circle (\circsize);
         \end{scope}
         \begin{scope}
            \clip (-30:\radsize) circle (\circsize);
            \clip (210:\radsize) circle (\circsize);
            \clip (90:\radsize) circle (\circsize);
            \fill[fill=orange!80!black!40!white] 
                (90:\radsize) circle (\circsize);
          \end{scope}
        \begin{scope}[gray]
            \draw[] (-30:\radsize) circle (\circsize);
            \draw[] (210:\radsize) circle (\circsize);
            \draw[] (90:\radsize) circle (\circsize);
        \end{scope}
        \node at (-90:\twosize) {\small +1};
        \node at (30:\twosize) {\small +1};
        \node at (150:\twosize) {\small +1};
        \node at (0,0) {\small -1};
    \end{tikzpicture}}
\end{wrapfigure}
\refstepcounter{example}
    \label{example:nonneg-ii}
\textbf{Example \theexample.~}
It is easy to see 
(e.g., by inspecting \cref{fig:info-diag-a}) 
that
$\IDef_{\text{3-cycle}}(\mu) = 
    \H_\mu(Y|X) + \H_\mu(Z|Y) + \H_\mu(X|Z) - \H_\mu(XYZ)
    =
    -\I_\mu(X;Y;Z)$.
\Cref{theorem:sdef-le0} therefore tells us that 
a
distribution
    $\mu$ that is \scible\ 
    with the 3-cycle 
    cannot have negative interaction information
$
    \I_\mu(X; Y; Z)
        . 
$
What does this mean?
\vfull{
    Overall, 
    conditioning on
    the value of one variable can only reduce the amount of 
    remaining information in other 
    variables (in expectation).
}
When $\I(X;Y;Z) < 0$, 
conditioning on
one variable causes the other two to share more information than they did before. 
The most extreme instance is $\muxor$, the distribution in which two variables are independent and the third is their parity 
(illustrated on the right).
It seems intuitively clear that $\muxor$ cannot arise from 
the 3-cycle, a causal model with only pairwise dependencies.
This is difficult to prove directly, but is an immediate consequence of \cref{theorem:sdef-le0}.
\hfill$\triangle$

\commentout{
This may seem to require a fundementally ``3-way'' interaction between the variables, rather than pairwise relationships
---but, according to the traditional ways of making ``pairwise relationships'' precise (e.g.,  maximum entropy subject to pairwise marginal constraints, factoring over pairwise factors), this is incorrect. 
However, if we take it to mean there is a causal model without any joint dependencies, then the 3-cycle is the most expressive structure, and we have already seen that \scibility\ with it implies non-negative interaction information.
\commentout{
to mean \scibility\ with the cycle, then, by \cref{theorem:sdef-le0}, this intuition is well-founded. 
But historically, that intuition has been a sticking point for interpreting interaction information, because other ways of making ``2-way interactions'' precise, such as matching pairwise marginals, or factoring as a product of 2-variable factors, do not rule out negative interaction information $\I(X;Y;Z)$. 
}
\commentout{
    More concretely, \cref{theorem:sdef-le0} means that $\muxor$,
    the distribution in which two variables are independent and the third is their parity, cannot come from a causal model without a joint dependence
    (as one would expect).
    It is worth noting that $\muxor$ is
    the most extreme example of negative interaction information,
    because $\I(X;Y)$ is non-negative 
    (see \cref{fig:info-diag}(b)).
}
Our next example builds on this one.
}

For many, there is an
intuition that
$\I(X;Y;Z) < 0$
should require
a fundementally ``3-way'' interaction between the variables,
and should not arise through pairwise interactions alone \cite{dit-stumble}.
This has been a source of conflict 
\cite{williams2010nonnegative,mackay2003information,1219753,CoverThomas},
because traditional ways of making 
precise ``pairwise interactions'' 
(e.g., maximum entropy subject to pairwise marginal constraints 
and pairwise factorization)
 do not ensure that $\I(X;Y;Z) \ge 0$. 
But \scibility\ does. 
One can verify by enumeration that the 3-cycle is the most expressive causal structure with no joint dependencies, 
and we have already proven that \scibility\ with that \hgraph\ 
implies non-negative interaction information.
\Scibility\ has another even more noteworthy clarifying effect on information theory. 

There is a school of thought
that contends that \emph{all} structural information in $\mu(\X)$ is captured by its information profile $\mat I_\mu$.
This position has fallen out of favor
in some communities
due to standard counterexamples: distributions that have intuitively different structures yet share an information profile
\citep{multivar-beyondshannon17}. 
However, with ``structure''  explicated by compatibility,
the prototypical counterexample of this kind
suddenly
supports the very notion it was meant to challenge, suggesting in an unexpected way that the information profile may yet capture the essence of probabilistic structure.

\begin{wrapfigure}[5]{o}{1.8cm}
    \vspace{-2ex}
    \def\circsize{0.7}
    \def\radsize{0.42}
    \def\twosize{0.5}
    \scalebox{0.85}{%
    \begin{tikzpicture}[center base,scale=0.9]
        \begin{scope}
            \path[fill=green!60!black,fill opacity=0.25]
                (210:\radsize) circle (\circsize) ++(210:\circsize)
                (210:\radsize) circle (\circsize) ++(210:\circsize)
                (-30:\radsize) circle (\circsize) ++(-30:\circsize);
        \end{scope}
        \begin{scope}[even odd rule]
             \clip (-30:\radsize) circle (\circsize)
                    (210:\radsize) circle (\circsize)
                    (90:\radsize) circle (\circsize);
            \fill[fill=white!90!black] 
                    (-30:\radsize) circle (\circsize)
                    (210:\radsize) circle (\circsize)
                    (90:\radsize) circle (\circsize);
         \end{scope}
        \begin{scope}[gray]
            \draw[] (-30:\radsize) circle (\circsize);
            \draw[] (210:\radsize) circle (\circsize);
            \draw[] (90:\radsize) circle (\circsize);
        \end{scope}
        \node at (-90:\twosize) {\small +1};
        \node at (30:\twosize) {\small +1};
        \node at (150:\twosize) {\small +1};
    \end{tikzpicture}}
\end{wrapfigure}
\refstepcounter{example} \label{example:ditrichotomy}
\textbf{Example {\theexample}.~}         
    Let $A, B$, and $C$ be variables
    with $\V(A), \V(B), \V(C) = 
    \{ 0,1\}^2$. 
    Using independent fair coin flips $X_1$, $X_2$, and $X_3$, define
        two joint distributions, $P$ and $Q$, 
        over $A,B,C$ as follows. 
    Define $P$ by selecting $A := (X_1, X_2)$, $B := (X_2, X_3)$, and $C := (X_3, X_1)$.
    Define $Q$ by selecting
    $A := (X_1, X_2)$, $B := (X_1, X_3)$, and $C := (X_1, X_2 \oplus X_3)$.
    Structurally, $P$ and $Q$ appear to be very different. 
    According to $P$, the first components of the three variables ($A,B,C$) are independent, yet they are identical according to $Q$. 
    Moreover, $P$ has only simple pairwise interactions between
        the variables, while $Q$ has $\muxor$ 
        (a clear 3-way interaction) embedded within it. 
    Yet $P$ and $Q$ have identical information profiles 
    (see right):
    in both cases, 
    each of $\{A,B,C\}$
    is
    determined by the values of the other two,
    each pair share one bit of information given the third,
    and $\I(A;B;C) = 0$.
    
\commentout{
\begin{center}
    \def\circsize{0.7}
    \def\radsize{0.42}
    \def\twosize{0.5}
    \begin{tikzpicture}[center base]
        \path[fill=white!70!black] (210:\radsize) circle (\circsize) ++(210:\circsize) node[label={[label distance=0pt,inner sep=0pt]left:\small$A$}]{};
        \path[fill=white!70!black] (90:\radsize) circle (\circsize) ++(135:\circsize) node[label={left:\small$B$}]{};
        \path[fill=white!70!black] (-30:\radsize) circle (\circsize) ++(-30:\circsize) node[label={[label distance=0pt,inner sep=0pt]right:\small$C$}]{};
        \begin{scope}
            \draw[] (-30:\radsize) circle (\circsize);
            \draw[] (210:\radsize) circle (\circsize);
            \draw[] (90:\radsize) circle (\circsize);
        \end{scope}
        \node at (-90:\twosize) {\small +1};
        \node at (30:\twosize) {\small +1};
        \node at (150:\twosize) {\small +1};
    \end{tikzpicture}
\end{center}}

This example has been used to argue that multivariate Shannon
    information does not take into account important structural
    differences between distributions \citep{multivar-beyondshannon17}.
We are now in a position to give a novel and particularly persuasive response, by appealing to \scibility.%
\vfull{%
\footnote{%
    Note that $P$ and $Q$ no longer have the same profile if we split each variable into its two components.  
    Since the notion of ``component'' is based on
    the assignment $\V$ of variables to possible values,
    our view that $\V$ is {not} structural information
    diffuses this counterexample by assumption---%
    but the present argument is a deeper one.
}
}
    Unsurprisingly, $P$ is \cible\ with the 3-cycle;
    it is clearly consists of ``2-way'' interactions, as each pair of variables shares a bit.
But, 
counterintuitively, 
the distribution $Q$ is \emph{also} \cible\ with the 3-cycle!
(The reader is encouraged to verify that 
    $U_1 = X_3 \oplus X_1$, $U_2 = X_2$, and $U_3 = X_3$
    serves as a witness.) 
To emphasize: this is despite the fact that $Q$ 
is just 
    $\muxor$
    (which is certainly not compatible with the 3-cycle)
together with a seemingly irrelevant random bit $X_1$.
By the results of \cref{sec:causal},
this means there is a causal model without joint dependence giving rise to $Q$---%
    so,
    despite appearances,
    $Q$ does not 
    require a 3-way interaction.
    Indeed, $P$ and $Q$ are \scible\ with precisely the same \hgraph s over $\{A,B,C\}$, 
    suggesting that they don't have a structural difference after all. 
\hfill$\triangle$

\commentout{%
The shared bit $X_1$ between the three variables, suggestively,
contains precisely the magnitude of negative interaction information in $\muxor$. One might therefore wonder if the converse of \cref{theorem:sdef-le0} holds. 
}%
In light of \cref{example:ditrichotomy}, one might reasonably conjecture that the converse of \cref{theorem:sdef-le0} holds.
Unfortunately, it does not (see \cref{appendix:converse-sdef-le0});
the quantity $\IDef_{\!\Ar}(\mu)$ does not 
completely determine whether or not 
$\mu \models \Diamond \Ar$. We now pursue a new (entropy-based) scoring function that does.
This will allow us to generalize \cref{theorem:sdef-le0} to distributions that are only ``near-compatible'' with $\Ar$. 

\subsection{A Scoring Function for \SCibility}
    \label{sec:scoring-funs}
Here is a function that measures how far a distribution $\mu$ is from being \scible\ with 
$\Ar$.
\begin{align*}
    \SIMInc_{\Ar}(\mu) :=
    \inf_{\substack{
            \nu(\mathcal U\!,\, \X)\\
            \mathclap{\nu(\X) = \mu(\X)}}}
            \;
    - \H_\nu (\mathcal U) + 
    \sum_{a \in \Ar} \H_\nu (U_a)
    +
        \sum_{a \in \Ar}
        \H_\nu(\Tgt a | \Src a , U_a)
    .  \numberthis\label{eq:siminc}
    \vspace{-1ex}
\end{align*}

\commentout{
This may look complicated, 
but each line is just a generic (entropy-based) measure of how far an extended joint distribution $\nu$ is from satisfying one of the conditions (a-c) of being a \scibility\ witness (\cref{defn:scompat}). 
Because each is non-negative and zero iff the appropriate condition is satisfied, 
$\SIMInc_{\!\Ar}$
overall is non-negative, and zero iff all three conditions are satisfied.
}
$\SIMInc$ is a direct translation of \cref{defn:scompat} (a-c); 
it measures the (optimal) quality of an extended distribution $\nu$ as a witness.
The infimum restricts the search to $\nu$ satisfying (a), the first two terms measure $\nu$'s discrepancy of with (b), and the last term measures $\nu$'s discrepancy with (c). 
Therefore:

\begin{linked}{prop}{sinc-nonneg-s2}
\commentout{
    For all $\boldsymbol\lambda > 0$, we have
    $\SIMInc_{(\Ar,\boldsymbol\lambda)}(\mu) \ge 0$, and this holds
    with equality if and only if
    $\mu \models \Ar$.
}
    $\SIMInc_{\!\Ar}(\mu) \ge 0$, 
    with equality iff
    $\mu \models \Diamond \Ar$.
\end{linked}

\commentout{
Although it is built from a sum of entropies, $\SIMInc_{\Ar}$ is not a linear function of its argument's information profile; rather, because of the infemum, it is an optimization problem over the information profiles of extended distributions.
This makes it difficult to compute, and obfuscates the connection to information theoretic primitives. 
}%
Although they seem to be very different, $\SIMInc$ and $\IDef$ turn out to be closely related.  
\commentout{
    We now want to compare the scoring function $\SIMInc$ to that introduced
    by \citet{pdg-aaai}.  We focus on qualitative PDGs (i.e., \dhygraphs)
    here.%
    \footnote{In \citet{pdg-aaai}, a qualitative PDG also includes a
    weight function $\alpha$ that associates with each \hyperarc\ $a$ a
    weight that intuitively represents the  modeler's degreee of
    confidence in the dependence represented by $a$.
    For simplicitly, we
    ignore the weights here (implicitly taking them all to be 1).
    }
}%
In fact, modulo the infimum, $\SIMInc_{\Ar}$ is a special case of
$\IDef$---not for the \hgraph\ $\Ar$, but rather for a transformed one $\Ar^\dagger$ that models the noise variables explcitly. 
To construct $\Ar^{\dagger}$ from $\Ar$,
add new nodes $\U = \{ U_a \}_{a \in \Ar}$, and
replace each \hyperarc\
\vspace{-1em}
\[
\begin{tikzpicture}[center base]
    \node[dpad0] (S) at (0,0) {$\Src a$};
    \node[dpad0] (T) at (1.3,0) {$\Tgt a$};
    \draw[arr2] (S) to node[above,pos=0.4]{$a$}
        (T);
\end{tikzpicture}
\quad 
\text{with the pair of \hyperarc s}
\quad
\begin{tikzpicture}
        [center base]
    \node[dpad0] (S) at (0,0) {$\Src a$};
    \node[dpad0] (T) at (1.4,0) {$\Tgt a$};
    \node[dpad0] (U) at (0.6,0.7) {$U_a$};
    \draw[arr,<-] (U) to node[above,pos=0.6]
        {$a_0$}
         +(-1.2,0);
    \mergearr[arr1] SUT
    \node[below=2pt of center-SUT,xshift=-0.2em] 
        {$a_1$};
\end{tikzpicture}~.
\]
\commentout{
    In words, construct $\Ar^{\dagger}$ by starting 
    and two \hyperarc s
    one generating $U_a$ without input, and another
    from $\Src a \cup \{ U_a \}$ to $\Tgt a$.
}
Finally, add one additional
\hyperarc\ $\mathcal U \to \X$
\unskip. (Intuitively, this \hyperarc\ creates functional dependencies in the spirit of \cref{theorem:func}.)
\commentout{
In total, it has \arc s
\[
    \Ar^\dagger :=
        \Bigg \{~~
        \begin{tikzpicture}[center base]
            \node[dpad0] (S) at (0,0) {$\Src a$};
            \node[dpad0] (T) at (1.5,0) {$\Tgt a$};
            \node[dpad0] (U) at (0.8,1) {$U_a$};
            \draw[arr,<-] (U) -- +(-1,0);
            \mergearr SUT
    \end{tikzpicture}~~
    \Bigg \}_{a \in \Ar} ~~\cup~~ \Big\{ \ed {}{\mathcal U}{\X} \Big \}.
\]
}%
\commentout{
    In a moment, we will see how $\IDef$ for $\Ar^\dagger$ corresponds to
    $\SIMInc$ for $\Ar$. But first, notice a potential issue with these assertions
    at face value: the final \arc\ seems to be double-counting
    causal information that was already given in the previous set of \arc s.
    How is it that the randomness variables $\mathcal U$ determine $\X$ up to noise?
    Via the mechanisms that have already been described!
    }%
With these definitions in place,
we can state a theorem that bounds $\SIMInc$ above and below with information deficiencies
(\cref{theorem:siminc-idef-bounds})%
. 
The lower bound generalizes \cref{theorem:sdef-le0} by giving an upper limit on $\IDef_{\!\Ar}(\mu)$ even for distributions $\mu$ that are 
not \scible\ with $\Ar$.
The upper bound is tight in general, and shows that $\SIMInc_{\!\Ar}$ can be equivalently defined as a minimization over 
    $\IDef_{\!\Ar^\dagger}$.

\begin{linked}{theorem}{siminc-idef-bounds}
    \begin{enumerate}[label={\normalfont(\alph*)},wide,topsep=0pt,itemsep=0pt,parsep=0pt] \item
    \commentout{
    If $\Ar$ is a \hgraph\ whose nodes are the names of variables 
    $\X$, $\U$ is a collection of variables indexed by
    $\Ar$,
    $\mu$ is a distribution over $\X$, and $\nu(\X,\U)$ is an extension of $\mu$ to $\U$, then
    }
    If $(\X,\Ar)$ is a hypergraph,
     $\mu(\X)$ is a distribution, and $\nu(\X,\U)$ is an extension of $\nu$ to additional variables $\U = \{U_a \}_{a \in \Ar}$ indexed by $\Ar$, then:
    \[
        \IDef_{\!\Ar}(\mu) \le
        \SIMInc_{\Ar}(\mu)
        \le \IDef_{\!\Ar^\dagger}(\nu).
    \]
    \item 
    For all $\mu$ and $\Ar$, there is a choice of $\nu$ that achieves the upper bound.
    That is, 
    \[%
        \SIMInc_{\Ar}(\mu)
        = \min \Big\{\ \IDef_{\!\Ar^\dagger}(\nu) : 
            \begin{array}{c}
            \nu \in  \Delta\V(\X,
                \U
            ) \\ \nu(\X) = \mu(\X)
            \end{array}
         \Big\}
         .
    \]
\commentout{
    where $\hat \U_\Ar^\X$ is the set of response variables along the \arc s of $\Ar$ for variables $\X$,
        defined in \cref{sec:responsvars}.
}
\vfull{
where the minimization is over all possible ways of assigning values to 
the variables in
$\U$. The
minimum is achieved when $|\V(U_a)| \le |\V(\Tgt a)|^{|\V(\Src a)|}$.
}
    \end{enumerate}
\end{linked}

The semantics of PDGs are based on the idea of measuring (and resolving)
 \emph{inconsistency}, 
    which is defined as a minimization over $\IDef$ 
(plus a term that captures relevant concrete probabilistic information)
\unskip. Thus,     
\cref{theorem:siminc-idef-bounds} (b)
    tells us that QIM-compatibility (with $\Ar$) can be captured with a 
        qualitative PDG (namely, $\Ar^\dagger$). 
It follows that
 our notion of QIM-compatibility 
can be viewed as a special case of the semantics of PDGs---one that, 
as we have shown, has a causal interpretation.
\commentout{
The first result gives us bounds on $\SIMInc$ in terms of $\IDef$, making it easier to compute, and the second shows that, in the context of a minimization where it usually is \citep{pdg-aaai}, $\SIMInc$ is a special case of $\IDef$, for a graph that models the noise variables explicitly. 

The first inequality of \cref{theorem:siminc-idef-bounds}(a) is a slight strengthening of \cref{theorem:sdef-le0} that applies even when $\mu \not\models \Diamond \Ar$. 
But \cref{theorem:sdef-le0} is not based on some choice of scoring function---it describes a direct relationship between the information profile and \scibility.
}

\commentout{
\begin{conj}
    Suppose $\mu \models \Ar$, $\mu' \models \Ar'$, and $\alpha \in [0,1]$.
    Then $\IDef_{\!\Ar{:}\alpha{:}\Ar'}((1-\alpha) \mu + (\alpha) \mu) \le 0$,
    where $\Ar{:}\alpha{:}\Ar'$ is the QDG that contains the \arc s $\Ar$
    with weight $\alpha$
\end{conj}
\begin{proof}
    $\IDef_{\!\Ar}(\mu) \le 0$, and $\IDef_{\!\Ar'}(\mu') \le 0$,
    so...
\end{proof}
}

\section{Discussion and Conclusions}

We have shown how \dhygraph s can be used to represent 
structural aspects of distributions.  
Moreover, they can do so in a way that 
generalizes conditional independencies and functional dependencies and
has deep connections to causality and information theory. 
This notion of QIM-compatibility can be captured with PDGs,
    and also partially explains the qualitative foundations of these models.
Still, many questions remain open.
\commentout{
    A simple sounding one is the following: are there any distributions that are compatible with the 3-cycle in one direction, but not in the opposite direction? 
    We suspect of the symmetry, and because $\IDef$ is the same for both cycles%
        ---but we have not been able to resolve the question one way or another (see \cref{sec:equivalence} for a partial answer).
    Another obvious question is the decision procedure for deciding whether or not $\mu \models \Ar$.
}%

Perhaps the most important open problem is that of computing whether or not 
a given distribution $\mu$ is QIM compatible with a directed hypergraph $\Ar$.
We have implemented a rudimentary approach (based on solving problem \eqref{eq:siminc} to calculate $\SIMInc$) that works in practice for 
small examples, but that approach scales poorly, and its correctness
has not yet been proved.
Even representing a distribution $\mu$ over $n$ variables requires $\Omega(2^n)$ space in general,
and a candidate witness $\bar\mu$ is even bigger:
    if all variables are binary, 
    $|\Ar| = m$,
    and $|\Src a|, |\Tgt a| \le k$ for all $a \in \Ar$, then 
    a direct implementation of \eqref{eq:siminc} is a non-convex optimization problem with 
at most
$2^{n+m k (2^k)}$ variables.
Even accepting the (substantial) cost of representing
extended distributions, we do not have a bound on the time needed to solve the optimization problem. 
There are more compact ways of representing the joint distributions $\mu$ used in practice (by assuming (in)dependencies), but we do not 
know if such independence assumptions make it easier to determine whether
$\mu \models \Diamond \Ar$
    for arbitrary $\Ar$.
But computing $\IDef_{\!\Ar}(\mu)$ can be much easier.%
    \footnote{The complexity of calculating $\IDef_{\!\Ar}(\mu)$ is 
        typically dominated by the difficulty of calculating the joint entropy $\H(\mu)$. 
    It can be difficult to compute $\H(\mu)$ in some cases (e.g., for undirected models),
    but in others (e.g., for Bayesian Networks or clique trees) 
    the same assumptions that enable a compact representation of $\mu$ also 
        make it easy to calculate $\H(\mu)$.
    }
We suspect that \cref{theorem:sdef-le0}, 
    a nontrivial condition for QIM-compatibility that requires only computing $\IDef_{\!\Ar}(\mu)$,
could play a critical role in designing such an inference procedure.

Another major 
    open problem is that of
    more precisely
    understanding
    the implications of \scibility\ in cyclic models.  
We do not yet know, for example, whether the same set of distributions are \scible\ with the
clockwise and counter-clockwise 3-cycles.
\commentout{ 
Inference for PDGs was discussed in \citep{pdg-infer}, but the focus
was on the quantative aspects.
Getting analogous algorithms
that dealt with the qualitative aspects and \scibility\ would be of
great interest.  Finally, it would be useful to get more axiomatic
approaches for reasoning about \scibility.  
We believe that ideas of
monotonicity, together with axioms that relate independence and
depence, in the spirit of
(\ref{eq:conditional-self-independence-det}), might take us a long way.
}%

As mentioned in \cref{sec:causal}, our notion of QIM-compatibility
has led us to a generalization of a standard causal model
(\cref{defn:GRPSEM}).
A proper investigation of this novel modeling tool
(which we have not attempted in this paper)
    would include concrete motivating examples, a careful account of interventions and counterfactuals in this general setting, 
    and results situating these causal models among other generalizations of causal models in the literature. 

We hope to address these questions in future work.

\newpage
\begin{ack}
    We would like to thank the reviewers for 
        useful discussion and helpful feedback, 
            such as the pointer to \citet{Druzdzel1993CausalityIB},
        and for asking us to expand on the complexity of inference. 
    Thank you to 
    Matt MacDermott for identifying a bug in a prior version of \cref{theorem:condition-intervention-alignment},
    and
    to Matthias Georg Mayer for catching several low-level issues with the presentation. 
    The work of Halpern and Richardson was supported in part by
    AFOSR grant FA23862114029, MURI grant W911NF-19-1-0217, ARO grant
    W911NF-22-1-0061, and NSF grant FMitF-2319186.
    S.P. is supported in part by the NSF under Grants Nos.~CCF-2122230 and CCF-2312296, a Packard Foundation Fellowship, and a generous gift from Google.
\end{ack}

\bibliography{qdg-refs,joe}

\begin{thebibliography}{26}
\providecommand{\natexlab}[1]{#1}
\providecommand{\url}[1]{\texttt{#1}}
\expandafter\ifx\csname urlstyle\endcsname\relax
  \providecommand{\doi}[1]{doi: #1}\else
  \providecommand{\doi}{doi: \begingroup \urlstyle{rm}\Url}\fi

\bibitem[Baier et~al.(2022)Baier, Dubslaff, Hermanns, and Käfer]{Baier_2022}
C.~Baier, C.~Dubslaff, H.~Hermanns, and N.~Käfer.
\newblock On the foundations of cycles in bayesian networks.
\newblock In \emph{Lecture Notes in Computer Science}, pages 343--363. Springer
  Nature Switzerland, 2022.
\newblock \doi{10.1007/978-3-031-22337-2_17}.
\newblock URL \url{https://doi.org/10.1007%2F978-3-031-22337-2_17}.

\bibitem[Beckers et~al.(2023)Beckers, Halpern, and
  Hitchcock]{beckers2023causal}
S.~Beckers, J.~Y. Halpern, and C.~Hitchcock.
\newblock Causal models with constraints, 2023.

\bibitem[Cover and Thomas(1991)]{CoverThomas}
T.~M. Cover and J.~A. Thomas.
\newblock \emph{Elements of Information Theory}.
\newblock Wiley, New York, 1991.

\bibitem[Druzdzel and Simon(1993)]{Druzdzel1993CausalityIB}
M.~J. Druzdzel and H.~A. Simon.
\newblock Causality in bayesian belief networks.
\newblock In \emph{Conference on Uncertainty in Artificial Intelligence}, 1993.
\newblock URL \url{https://api.semanticscholar.org/CorpusID:14801431}.

\bibitem[Gallo et~al.(1993)Gallo, Longo, Pallottino, and
  Nguyen]{gallo-dirhypergraphs1993}
G.~Gallo, G.~Longo, S.~Pallottino, and S.~Nguyen.
\newblock Directed hypergraphs and applications.
\newblock \emph{Discrete Applied Mathematics}, 42\penalty0 (2):\penalty0
  177--201, 1993.
\newblock ISSN 0166-218X.
\newblock \doi{https://doi.org/10.1016/0166-218X(93)90045-P}.
\newblock URL
  \url{https://www.sciencedirect.com/science/article/pii/0166218X9390045P}.

\bibitem[Geiger et~al.(1990)Geiger, Verma, and
  Pearl]{geiger-pearl-d-separation}
D.~Geiger, T.~Verma, and J.~Pearl.
\newblock Identifying independence in bayesian networks.
\newblock \emph{Networks}, 20\penalty0 (5):\penalty0 507--534, 1990.
\newblock \doi{https://doi.org/10.1002/net.3230200504}.
\newblock URL
  \url{https://onlinelibrary.wiley.com/doi/abs/10.1002/net.3230200504}.

\bibitem[Halpern(2000)]{halpern-2000}
J.~Y. Halpern.
\newblock Axiomatizing causal reasoning.
\newblock \emph{Journal of Artificial Intelligence Research}, 12:\penalty0
  317--337, 2000.

\bibitem[Heckerman et~al.(2000)Heckerman, Chickering, Meek, Rounthwaite, and
  Kadie]{heckerman2000dependency}
D.~Heckerman, D.~M. Chickering, C.~Meek, R.~Rounthwaite, and C.~Kadie.
\newblock Dependency networks for inference, collaborative filtering, and data
  visualization.
\newblock \emph{Journal of Machine Learning Research}, 1\penalty0
  (Oct):\penalty0 49--75, 2000.

\bibitem[Hitchcock(2024)]{sep-causal-models}
C.~Hitchcock.
\newblock {Causal Models}.
\newblock In E.~N. Zalta and U.~Nodelman, editors, \emph{The {Stanford}
  Encyclopedia of Philosophy}. Metaphysics Research Lab, Stanford University,
  {S}ummer 2024 edition, 2024.

\bibitem[James(2018)]{dit-stumble}
R.~James.
\newblock (stumbling blocks) on the road to understanding multivariate
  information theory.
\newblock Discrete Information Theory package documentation, 2018.
\newblock URL \url{https://dit.readthedocs.io/en/latest/stumbling.html}.

\bibitem[James and Crutchfield(2017)]{multivar-beyondshannon17}
R.~G. James and J.~P. Crutchfield.
\newblock Multivariate dependence beyond shannon information.
\newblock \emph{Entropy}, 19\penalty0 (10), 2017.
\newblock ISSN 1099-4300.
\newblock \doi{10.3390/e19100531}.
\newblock URL \url{https://www.mdpi.com/1099-4300/19/10/531}.

\bibitem[Kobyzev et~al.(2021)Kobyzev, Prince, and Brubaker]{Kobyzev_2021}
I.~Kobyzev, S.~J. Prince, and M.~A. Brubaker.
\newblock Normalizing flows: An introduction and review of current methods.
\newblock \emph{{IEEE} Transactions on Pattern Analysis and Machine
  Intelligence}, 43\penalty0 (11):\penalty0 3964--3979, nov 2021.
\newblock \doi{10.1109/tpami.2020.2992934}.
\newblock URL \url{https://doi.org/10.1109%2Ftpami.2020.2992934}.

\bibitem[Koller and Friedman(2009)]{koller2009probabilistic}
D.~Koller and N.~Friedman.
\newblock \emph{Probabilistic Graphical Models: Principles and Techniques}.
\newblock MIT press, 2009.

\bibitem[Lauritzen et~al.(1990)Lauritzen, Dawid, Larsen, and
  Leimer]{lauritzen-dag-indeps}
S.~L. Lauritzen, A.~P. Dawid, B.~N. Larsen, and H.-G. Leimer.
\newblock Independence properties of directed markov fields.
\newblock \emph{Networks}, 20\penalty0 (5):\penalty0 491--505, 1990.
\newblock \doi{https://doi.org/10.1002/net.3230200503}.
\newblock URL
  \url{https://onlinelibrary.wiley.com/doi/abs/10.1002/net.3230200503}.

\bibitem[leonbloy
  (https://math.stackexchange.com/users/312/leonbloy)(2015)]{1219753}
leonbloy (https://math.stackexchange.com/users/312/leonbloy).
\newblock conditioning reduces mutual information.
\newblock Mathematics Stack Exchange, 2015.
\newblock URL \url{https://math.stackexchange.com/q/1219753}.
\newblock URL:https://math.stackexchange.com/q/1219753 (version: 2015-04-04).

\bibitem[MacKay(2003)]{mackay2003information}
D.~J.~C. MacKay.
\newblock \emph{Information Theory, Inference and Learning Algorithms}.
\newblock Cambridge University Press, 2003.

\bibitem[Naumov and Nicholls(2013)]{naumov2013re}
P.~Naumov and B.~Nicholls.
\newblock R.e. axiomatization of conditional independence, 2013.

\bibitem[Pearl(2000)]{pearl:2k}
J.~Pearl.
\newblock \emph{Causality: Models, Reasoning, and Inference}.
\newblock Cambridge University Press, New York, 2000.

\bibitem[Pearl(2009{\natexlab{a}})]{pearl2009causality}
J.~Pearl.
\newblock \emph{Causality}.
\newblock Cambridge university press, 2009{\natexlab{a}}.

\bibitem[Pearl(2009{\natexlab{b}})]{pearl2009causaloverview}
J.~Pearl.
\newblock {Causal inference in statistics: An overview}.
\newblock \emph{Statistics Surveys}, 3\penalty0 (none):\penalty0 96 -- 146,
  2009{\natexlab{b}}.
\newblock \doi{10.1214/09-SS057}.
\newblock URL \url{https://doi.org/10.1214/09-SS057}.

\bibitem[Pearl and Paz(1987)]{pearl1987graphoids}
J.~Pearl and A.~Paz.
\newblock Graphoids: A graphbased logic for reasoning about relevance
  relations. advances in artificial intelligence, vol. ii, 1987.

\bibitem[Richardson(2022)]{one-true-loss}
O.~E. Richardson.
\newblock Loss as the inconsistency of a probabilistic dependency graph: Choose
  your model, not your loss function.
\newblock \emph{AISTATS '22}, 151, 2022.

\bibitem[Richardson and Halpern(2021)]{pdg-aaai}
O.~E. Richardson and J.~Y. Halpern.
\newblock Probabilistic dependency graphs.
\newblock In \emph{Proc.~Thirty-Fifth AAAI Conference on Artificial
  Intelligence (AAAI-21)}, pages 12174--12181, 2021.

\bibitem[Richardson et~al.(2023)Richardson, Halpern, and De~Sa]{pdg-infer}
O.~E. Richardson, J.~Y. Halpern, and C.~De~Sa.
\newblock Inference for probabilistic dependency graphs.
\newblock In \emph{Uncertainty in Artificial Intelligence}, pages 1741--1751.
  PMLR, 2023.

\bibitem[Tabak and Vanden-Eijnden(2010)]{tabak2010density}
E.~G. Tabak and E.~Vanden-Eijnden.
\newblock Density estimation by dual ascent of the log-likelihood.
\newblock \emph{Communications in Mathematical Sciences}, 8\penalty0
  (1):\penalty0 217--233, 2010.

\bibitem[Williams and Beer(2010)]{williams2010nonnegative}
P.~L. Williams and R.~D. Beer.
\newblock Nonnegative decomposition of multivariate information, 2010.

\end{thebibliography}

\clearpage
\appendix

\ifchecklist
\newpage
\section*{NeurIPS Paper Checklist}

\begin{enumerate}

\item {\bf Claims}
    \item[] Question: Do the main claims made in the abstract and introduction accurately reflect the paper's contributions and scope?
    \item[] Answer: \answerYes{}  %
    \item[] Justification: All claims are substantiated with precise theorem statements in Sections 2-4, and their implications are carefully discussed.
    \item[] Guidelines:
    \begin{itemize}
        \item The answer NA means that the abstract and introduction do not include the claims made in the paper.
        \item The abstract and/or introduction should clearly state the claims made, including the contributions made in the paper and important assumptions and limitations. A No or NA answer to this question will not be perceived well by the reviewers. 
        \item The claims made should match theoretical and experimental results, and reflect how much the results can be expected to generalize to other settings. 
        \item It is fine to include aspirational goals as motivation as long as it is clear that these goals are not attained by the paper. 
    \end{itemize}

\item {\bf Limitations}
    \item[] Question: Does the paper discuss the limitations of the work performed by the authors?
    \item[] Answer: \answerYes{}
    \item[] Justification: For example, we discuss the inherent limitations of our central notion with respect to undirected graphical models, and we pose several open problems regarding the distributional implications of cyclic dependency structures that we were not able to solve in this work. We also explain that we only partially characterize the distributions compatible with general dependency structures.
    \item[] Guidelines:
    \begin{itemize}
        \item The answer NA means that the paper has no limitation while the answer No means that the paper has limitations, but those are not discussed in the paper. 
        \item The authors are encouraged to create a separate "Limitations" section in their paper.
        \item The paper should point out any strong assumptions and how robust the results are to violations of these assumptions (e.g., independence assumptions, noiseless settings, model well-specification, asymptotic approximations only holding locally). The authors should reflect on how these assumptions might be violated in practice and what the implications would be.
        \item The authors should reflect on the scope of the claims made, e.g., if the approach was only tested on a few datasets or with a few runs. In general, empirical results often depend on implicit assumptions, which should be articulated.
        \item The authors should reflect on the factors that influence the performance of the approach. For example, a facial recognition algorithm may perform poorly when image resolution is low or images are taken in low lighting. Or a speech-to-text system might not be used reliably to provide closed captions for online lectures because it fails to handle technical jargon.
        \item The authors should discuss the computational efficiency of the proposed algorithms and how they scale with dataset size.
        \item If applicable, the authors should discuss possible limitations of their approach to address problems of privacy and fairness.
        \item While the authors might fear that complete honesty about limitations might be used by reviewers as grounds for rejection, a worse outcome might be that reviewers discover limitations that aren't acknowledged in the paper. The authors should use their best judgment and recognize that individual actions in favor of transparency play an important role in developing norms that preserve the integrity of the community. Reviewers will be specifically instructed to not penalize honesty concerning limitations.
    \end{itemize}

\item {\bf Theory Assumptions and Proofs}
    \item[] Question: For each theoretical result, does the paper provide the full set of assumptions and a complete (and correct) proof?
    \item[] Answer: \answerYes{}
    \item[] Justification: This is a theoretical work and as such we take mathematical precision very seriously.
    \item[] Guidelines:
    \begin{itemize}
        \item The answer NA means that the paper does not include theoretical results. 
        \item All the theorems, formulas, and proofs in the paper should be numbered and cross-referenced.
        \item All assumptions should be clearly stated or referenced in the statement of any theorems.
        \item The proofs can either appear in the main paper or the supplemental material, but if they appear in the supplemental material, the authors are encouraged to provide a short proof sketch to provide intuition. 
        \item Inversely, any informal proof provided in the core of the paper should be complemented by formal proofs provided in appendix or supplemental material.
        \item Theorems and Lemmas that the proof relies upon should be properly referenced. 
    \end{itemize}

    \item {\bf Experimental Result Reproducibility}
    \item[] Question: Does the paper fully disclose all the information needed to reproduce the main experimental results of the paper to the extent that it affects the main claims and/or conclusions of the paper (regardless of whether the code and data are provided or not)?
    \item[] Answer: \answerNA{}
    \item[] Justification: This paper is theoretical in nature and does not include experimental results.
    \item[] Guidelines:
    \begin{itemize}
        \item The answer NA means that the paper does not include experiments.
        \item If the paper includes experiments, a No answer to this question will not be perceived well by the reviewers: Making the paper reproducible is important, regardless of whether the code and data are provided or not.
        \item If the contribution is a dataset and/or model, the authors should describe the steps taken to make their results reproducible or verifiable. 
        \item Depending on the contribution, reproducibility can be accomplished in various ways. For example, if the contribution is a novel architecture, describing the architecture fully might suffice, or if the contribution is a specific model and empirical evaluation, it may be necessary to either make it possible for others to replicate the model with the same dataset, or provide access to the model. In general. releasing code and data is often one good way to accomplish this, but reproducibility can also be provided via detailed instructions for how to replicate the results, access to a hosted model (e.g., in the case of a large language model), releasing of a model checkpoint, or other means that are appropriate to the research performed.
        \item While NeurIPS does not require releasing code, the conference does require all submissions to provide some reasonable avenue for reproducibility, which may depend on the nature of the contribution. For example
        \begin{enumerate}
            \item If the contribution is primarily a new algorithm, the paper should make it clear how to reproduce that algorithm.
            \item If the contribution is primarily a new model architecture, the paper should describe the architecture clearly and fully.
            \item If the contribution is a new model (e.g., a large language model), then there should either be a way to access this model for reproducing the results or a way to reproduce the model (e.g., with an open-source dataset or instructions for how to construct the dataset).
            \item We recognize that reproducibility may be tricky in some cases, in which case authors are welcome to describe the particular way they provide for reproducibility. In the case of closed-source models, it may be that access to the model is limited in some way (e.g., to registered users), but it should be possible for other researchers to have some path to reproducing or verifying the results.
        \end{enumerate}
    \end{itemize}

\item {\bf Open access to data and code}
    \item[] Question: Does the paper provide open access to the data and code, with sufficient instructions to faithfully reproduce the main experimental results, as described in supplemental material?
    \item[] Answer: \answerNA{}
    \item[] Justification: This paper is theoretical in nature and there is no associated data or code.
    \item[] Guidelines:
    \begin{itemize}
        \item The answer NA means that paper does not include experiments requiring code.
        \item Please see the NeurIPS code and data submission guidelines (\url{https://nips.cc/public/guides/CodeSubmissionPolicy}) for more details.
        \item While we encourage the release of code and data, we understand that this might not be possible, so “No” is an acceptable answer. Papers cannot be rejected simply for not including code, unless this is central to the contribution (e.g., for a new open-source benchmark).
        \item The instructions should contain the exact command and environment needed to run to reproduce the results. See the NeurIPS code and data submission guidelines (\url{https://nips.cc/public/guides/CodeSubmissionPolicy}) for more details.
        \item The authors should provide instructions on data access and preparation, including how to access the raw data, preprocessed data, intermediate data, and generated data, etc.
        \item The authors should provide scripts to reproduce all experimental results for the new proposed method and baselines. If only a subset of experiments are reproducible, they should state which ones are omitted from the script and why.
        \item At submission time, to preserve anonymity, the authors should release anonymized versions (if applicable).
        \item Providing as much information as possible in supplemental material (appended to the paper) is recommended, but including URLs to data and code is permitted.
    \end{itemize}

\item {\bf Experimental Setting/Details}
    \item[] Question: Does the paper specify all the training and test details (e.g., data splits, hyperparameters, how they were chosen, type of optimizer, etc.) necessary to understand the results?
    \item[] Answer: \answerNA{}
    \item[] Justification: This paper is theoretical and does not involve any training or testing of models.
    \item[] Guidelines:
    \begin{itemize}
        \item The answer NA means that the paper does not include experiments.
        \item The experimental setting should be presented in the core of the paper to a level of detail that is necessary to appreciate the results and make sense of them.
        \item The full details can be provided either with the code, in appendix, or as supplemental material.
    \end{itemize}

\item {\bf Experiment Statistical Significance}
    \item[] Question: Does the paper report error bars suitably and correctly defined or other appropriate information about the statistical significance of the experiments?
    \item[] Answer: \answerNA{}
    \item[] Justification: This paper is theoretical and does not contain experiments.
    \item[] Guidelines:
    \begin{itemize}
        \item The answer NA means that the paper does not include experiments.
        \item The authors should answer "Yes" if the results are accompanied by error bars, confidence intervals, or statistical significance tests, at least for the experiments that support the main claims of the paper.
        \item The factors of variability that the error bars are capturing should be clearly stated (for example, train/test split, initialization, random drawing of some parameter, or overall run with given experimental conditions).
        \item The method for calculating the error bars should be explained (closed form formula, call to a library function, bootstrap, etc.)
        \item The assumptions made should be given (e.g., Normally distributed errors).
        \item It should be clear whether the error bar is the standard deviation or the standard error of the mean.
        \item It is OK to report 1-sigma error bars, but one should state it. The authors should preferably report a 2-sigma error bar than state that they have a 96\% CI, if the hypothesis of Normality of errors is not verified.
        \item For asymmetric distributions, the authors should be careful not to show in tables or figures symmetric error bars that would yield results that are out of range (e.g. negative error rates).
        \item If error bars are reported in tables or plots, The authors should explain in the text how they were calculated and reference the corresponding figures or tables in the text.
    \end{itemize}

\item {\bf Experiments Compute Resources}
    \item[] Question: For each experiment, does the paper provide sufficient information on the computer resources (type of compute workers, memory, time of execution) needed to reproduce the experiments?
    \item[] Answer: \answerNA{}
    \item[] Justification: This paper does not have experiments, computational or otherwise.
    \item[] Guidelines:
    \begin{itemize}
        \item The answer NA means that the paper does not include experiments.
        \item The paper should indicate the type of compute workers CPU or GPU, internal cluster, or cloud provider, including relevant memory and storage.
        \item The paper should provide the amount of compute required for each of the individual experimental runs as well as estimate the total compute. 
        \item The paper should disclose whether the full research project required more compute than the experiments reported in the paper (e.g., preliminary or failed experiments that didn't make it into the paper). 
    \end{itemize}
    
\item {\bf Code Of Ethics}
    \item[] Question: Does the research conducted in the paper conform, in every respect, with the NeurIPS Code of Ethics \url{https://neurips.cc/public/EthicsGuidelines}?
    \item[] Answer: \answerYes{}
    \item[] Justification: To the best of our understanding, since this paper focuses on purely theoretical questions regarding graphical languages and the structure of probability distributions, there are no substantive ethical concerns to address.
    \item[] Guidelines:
    \begin{itemize}
        \item The answer NA means that the authors have not reviewed the NeurIPS Code of Ethics.
        \item If the authors answer No, they should explain the special circumstances that require a deviation from the Code of Ethics.
        \item The authors should make sure to preserve anonymity (e.g., if there is a special consideration due to laws or regulations in their jurisdiction).
    \end{itemize}

\item {\bf Broader Impacts}
    \item[] Question: Does the paper discuss both potential positive societal impacts and negative societal impacts of the work performed?
    \item[] Answer: \answerNA{}
    \item[] Justification: This paper is best viewed as basic theoretical research and is therefore far from direct societal impacts.
    \item[] Guidelines:
    \begin{itemize}
        \item The answer NA means that there is no societal impact of the work performed.
        \item If the authors answer NA or No, they should explain why their work has no societal impact or why the paper does not address societal impact.
        \item Examples of negative societal impacts include potential malicious or unintended uses (e.g., disinformation, generating fake profiles, surveillance), fairness considerations (e.g., deployment of technologies that could make decisions that unfairly impact specific groups), privacy considerations, and security considerations.
        \item The conference expects that many papers will be foundational research and not tied to particular applications, let alone deployments. However, if there is a direct path to any negative applications, the authors should point it out. For example, it is legitimate to point out that an improvement in the quality of generative models could be used to generate deepfakes for disinformation. On the other hand, it is not needed to point out that a generic algorithm for optimizing neural networks could enable people to train models that generate Deepfakes faster.
        \item The authors should consider possible harms that could arise when the technology is being used as intended and functioning correctly, harms that could arise when the technology is being used as intended but gives incorrect results, and harms following from (intentional or unintentional) misuse of the technology.
        \item If there are negative societal impacts, the authors could also discuss possible mitigation strategies (e.g., gated release of models, providing defenses in addition to attacks, mechanisms for monitoring misuse, mechanisms to monitor how a system learns from feedback over time, improving the efficiency and accessibility of ML).
    \end{itemize}
    
\item {\bf Safeguards}
    \item[] Question: Does the paper describe safeguards that have been put in place for responsible release of data or models that have a high risk for misuse (e.g., pretrained language models, image generators, or scraped datasets)?
    \item[] Answer: \answerNA{}
    \item[] Justification: We do not have and have not released any data or models.
    \item[] Guidelines:
    \begin{itemize}
        \item The answer NA means that the paper poses no such risks.
        \item Released models that have a high risk for misuse or dual-use should be released with necessary safeguards to allow for controlled use of the model, for example by requiring that users adhere to usage guidelines or restrictions to access the model or implementing safety filters. 
        \item Datasets that have been scraped from the Internet could pose safety risks. The authors should describe how they avoided releasing unsafe images.
        \item We recognize that providing effective safeguards is challenging, and many papers do not require this, but we encourage authors to take this into account and make a best faith effort.
    \end{itemize}

\item {\bf Licenses for existing assets}
    \item[] Question: Are the creators or original owners of assets (e.g., code, data, models), used in the paper, properly credited and are the license and terms of use explicitly mentioned and properly respected?
    \item[] Answer: \answerNA{}
    \item[] Justification: We use no external assets.
    \item[] Guidelines:
    \begin{itemize}
        \item The answer NA means that the paper does not use existing assets.
        \item The authors should cite the original paper that produced the code package or dataset.
        \item The authors should state which version of the asset is used and, if possible, include a URL.
        \item The name of the license (e.g., CC-BY 4.0) should be included for each asset.
        \item For scraped data from a particular source (e.g., website), the copyright and terms of service of that source should be provided.
        \item If assets are released, the license, copyright information, and terms of use in the package should be provided. For popular datasets, \url{paperswithcode.com/datasets} has curated licenses for some datasets. Their licensing guide can help determine the license of a dataset.
        \item For existing datasets that are re-packaged, both the original license and the license of the derived asset (if it has changed) should be provided.
        \item If this information is not available online, the authors are encouraged to reach out to the asset's creators.
    \end{itemize}

\item {\bf New Assets}
    \item[] Question: Are new assets introduced in the paper well documented and is the documentation provided alongside the assets?
    \item[] Answer: \answerNA{}
    \item[] Justification: We do not release new assets.
    \item[] Guidelines:
    \begin{itemize}
        \item The answer NA means that the paper does not release new assets.
        \item Researchers should communicate the details of the dataset/code/model as part of their submissions via structured templates. This includes details about training, license, limitations, etc. 
        \item The paper should discuss whether and how consent was obtained from people whose asset is used.
        \item At submission time, remember to anonymize your assets (if applicable). You can either create an anonymized URL or include an anonymized zip file.
    \end{itemize}

\item {\bf Crowdsourcing and Research with Human Subjects}
    \item[] Question: For crowdsourcing experiments and research with human subjects, does the paper include the full text of instructions given to participants and screenshots, if applicable, as well as details about compensation (if any)? 
    \item[] Answer: \answerNA{}
    \item[] Justification: The paper does not involve crowdsourcing nor research with human subjects.
    \item[] Guidelines:
    \begin{itemize}
        \item The answer NA means that the paper does not involve crowdsourcing nor research with human subjects.
        \item Including this information in the supplemental material is fine, but if the main contribution of the paper involves human subjects, then as much detail as possible should be included in the main paper. 
        \item According to the NeurIPS Code of Ethics, workers involved in data collection, curation, or other labor should be paid at least the minimum wage in the country of the data collector. 
    \end{itemize}

\item {\bf Institutional Review Board (IRB) Approvals or Equivalent for Research with Human Subjects}
    \item[] Question: Does the paper describe potential risks incurred by study participants, whether such risks were disclosed to the subjects, and whether Institutional Review Board (IRB) approvals (or an equivalent approval/review based on the requirements of your country or institution) were obtained?
    \item[] Answer: \answerNA{}
    \item[] Justification: The paper does not involve crowdsourcing nor research with human subjects.
    \item[] Guidelines:
    \begin{itemize}
        \item The answer NA means that the paper does not involve crowdsourcing nor research with human subjects.
        \item Depending on the country in which research is conducted, IRB approval (or equivalent) may be required for any human subjects research. If you obtained IRB approval, you should clearly state this in the paper. 
        \item We recognize that the procedures for this may vary significantly between institutions and locations, and we expect authors to adhere to the NeurIPS Code of Ethics and the guidelines for their institution. 
        \item For initial submissions, do not include any information that would break anonymity (if applicable), such as the institution conducting the review.
    \end{itemize}

\end{enumerate}
\clearpage
\fi

\section{Proofs}
    \label{appendix:proofs}

We begin with a de-randomization construction, that will be useful for the proofs. 

\subsection{From CPDs to Distributions over Functions}
    \label{sec:cpd-derandomize}
Compare two objects:
\begin{itemize}[nosep]
    \item a cpd $p(Y|X)$, and
    \item a distribution $q(Y^X)$ over functions $g : \V X \to \V Y$. 
\end{itemize}

The latter is significantly larger --- if both $|\V X| = | \V Y | = N$, then
$q$ is a $N^N$ dimensional object, while $p$ is only dimension $N^2$. 
A choice of distribution $q(Y^X)$ corresponds to a unique choice cpd $p(Y|X)$, according to 
\[
    p(Y{=}y\mid X{=}x) :=
        q(Y^X(x) = y)
        .
\]
\begin{claim}
\begin{enumerate}[nosep]
\item The definition above in fact yields a cpd, i.e., $\sum_y p(Y{=}y|X{=}x) = 1$
for all $x \in \V X$. 
\item 
This definition of $p(Y|X)$ is the conditional marginal of any
    joint distribution  $\mu(X,Y,Y^X)$ satisfying $\mu(Y^X) = q$ and
    $\mu(Y = Y^X(X)) = 1$. 
\end{enumerate}
\end{claim}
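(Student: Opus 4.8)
The plan is to prove the two parts separately; each reduces to identifying the right events and summing. For part~1, fix $x \in \V X$. Since every function $g : \V X \to \V Y$ maps $x$ to exactly one value, the events $\{Y^X(x) = y\}$, indexed by $y \in \V Y$, are pairwise disjoint and exhaust the space of functions carrying $q$. Hence
\[
    \sum_{y \in \V Y} p(Y{=}y \mid X{=}x)
    = \sum_{y \in \V Y} q\big(Y^X(x) = y\big)
    = q\Big(\,\bigcup_{y \in \V Y} \{Y^X(x) = y\}\Big)
    = 1,
\]
which is the normalization required of a cpd; nonnegativity is immediate since each summand is a probability. So the definition does produce a cpd.

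For part~2, fix $x$ with $\mu(X{=}x) > 0$, since that is where the conditional marginal $\mu(Y \mid X)$ is pinned down. The key manipulation uses the almost-sure constraint to rewrite the event $\{Y{=}y\}$: because $\mu(Y = Y^X(X)) = 1$, and because $Y^X(X) = Y^X(x)$ on the event $\{X{=}x\}$ (the random function is evaluated at the constant $x$ there), conditioning on $X{=}x$ yields
\[
    \mu(Y{=}y \mid X{=}x) = \mu\big(Y^X(x) = y \,\big|\, X{=}x\big).
\]
It then suffices to show this equals the unconditional $\mu(Y^X(x) = y)$, which by the marginal hypothesis $\mu(Y^X) = q$ is exactly $q(Y^X(x) = y) = p(Y{=}y \mid X{=}x)$.

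The crux --- and the step I would treat most carefully --- is the last equality $\mu(Y^X(x){=}y \mid X{=}x) = \mu(Y^X(x){=}y)$. This holds precisely when the random function $Y^X$ is independent of its input $X$ (so that $Y^X(x)$, being a deterministic function of $Y^X$, is likewise independent of $X$), which is the natural reading of the de-randomization construction: sample a function from $q$ independently of $X$, then output its value at $X$. I would therefore make this independence an explicit hypothesis on $\mu$; without it the claim fails, since one can correlate $Y^X$ with $X$ so that conditioning on $X{=}x$ skews which functions appear, breaking the final equality. With the independence in hand, the displayed chain closes the argument, and the only remaining verifications --- disjointness of the evaluation events and measurability of $Y^X(x)$ as a function of $Y^X$ --- are routine.
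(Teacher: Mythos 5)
The paper states this Claim without proof, so there is no in-paper argument to compare against; I can only assess your proposal on its own terms. Your part~1 is correct and is the only reasonable argument: for fixed $x$ the events $\{Y^X(x)=y\}_{y\in\V Y}$ partition the set of functions, so the sum telescopes to $q$ of the whole space.

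For part~2, you have put your finger on a real problem: the Claim as literally stated is false, and your proposed repair (adding the hypothesis that $Y^X \CI X$ under $\mu$) is exactly what is needed. To confirm your suspicion with a concrete instance: take $\V X=\V Y=\{0,1\}$ and let $q$ be uniform over the four functions $g:\{0,1\}\to\{0,1\}$, so that $p(Y{=}y\mid X{=}x)=\tfrac12$ for all $x,y$. Now let $\mu$ put mass $\tfrac14$ on each of the pairs $(X{=}0,\,Y^X{=}g)$ for $g\in\{\text{const-}0,\ \mathrm{id}\}$ and $(X{=}1,\,Y^X{=}g)$ for $g\in\{\text{const-}1,\ \mathrm{not}\}$, with $Y:=Y^X(X)$ deterministically. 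Then $\mu(Y^X)=q$ and $\mu(Y=Y^X(X))=1$, yet $\mu(Y{=}0\mid X{=}0)=1\neq\tfrac12$. So the two stated hypotheses do not pin down the conditional marginal; only the independence of the random function from its argument does. With that hypothesis added, your chain of equalities
\[
\mu(Y{=}y\mid X{=}x)=\mu\bigl(Y^X(x){=}y\mid X{=}x\bigr)=\mu\bigl(Y^X(x){=}y\bigr)=q\bigl(Y^X(x){=}y\bigr)=p(Y{=}y\mid X{=}x)
\]
is valid (the first equality using the almost-sure constraint restricted to $\{X{=}x\}$, the second using that $Y^X(x)$ is a function of $Y^X$ and hence independent of $X$), and your restriction to $x$ with $\mu(X{=}x)>0$ is the right caveat. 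In short: your proof is correct for the corrected statement, and the correction is one the paper itself ought to make --- note that elsewhere the paper explicitly allows noise variables to be dependent on source variables, so the independence cannot be taken as implicit.
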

Both $p$ and $q$ give probabilistic information about $Y$ conditioned
on $X$. But $q(Y^X)$ contains strictly more information.
Not only does it specify the distribution over $Y$ given $X{=}x$,
but it also contains counter-factual information about the distribution of $Y$
    if $X$ were equal to $x'$, conditioned on the fact that, in reality, $X{=}x$.

Is there a natural construction that goes in the opposite direction, intuitively
making as many independence assumptions as possible? It turns out there is:
\[
    q(Y^X{=} g) = \prod_{x \in \V X} p(Y{=}g(x) \mid X{=}x).
\]
Think of $Y^X$ as a collection of variables $\{Y^x : x \in \V X\}$ describing the
value of the function for each input, so that $q$ is a joint distribution over them.
This construction simply asks that these variables be independent. 
Specifying a distribution with these independences amounts to a choice
    of ``marginal'' distribution $q(Y^x)$ for each $x \in V X$, and hence
    is essentially a funciton of type $\V\! X \to \Delta \V Y$, the same as $p$. 
In addition, if we apply the previous construction, we recover $p$, since:
\begin{align*}
    q(Y^X(x) = y)
        &= \sum_{g: \V\! X \to \V Y} \mathbbm1[g(x)=y]\prod_{x' \in \V\! X} p(Y{=}g(x') \mid X{=}x') \\
        &= \sum_{g: \V\! X \to \V Y} \mathbbm1[g(x)=y] p(Y{=}g(x)\mid X{=}x)\prod_{x' \ne x} p(Y{=}g(x') \mid X{=}x') \\
        &=  p(Y{=}y\mid X{=}x) \sum_{g: \V\! X \to \V Y} \mathbbm1[g(x)=y] \prod_{x' \ne x} p(Y{=}g(x') \mid X{=}x') \\
        &=  p(Y{=}y\mid X{=}x) \sum_{g: \V\! X \setminus \{x\} \to \V Y} \prod_{x' \in \V \! X \setminus \{x\} } p(Y{=}g(x') \mid X{=}x') \\
        &=  p(Y{=}y\mid X{=}x).
\end{align*}
The final equality holds because the remainder of the terms can be viewed as the probability of selecting any function from $X \setminus \{x\}$ to $Y$, under an analogous measure; thus, it equals 1. 
This will be a useful construction for us in general.
\commentout{
given a cpd $p(Y|X)$, let's
write $\tilde p(Y^X)$ for this distribution.  
}%

\subsection{Results on (In)dependence}

\begin{lemma} \label{lem:indep-fun}
    Suppose $X_1, \ldots, X_n$ are variables, $Y_1, \ldots, Y_n$ are sets,
        and for each $i \in \{1, \ldots n\}$, we have a function
        $f_i : \V (X_i) \to Y_i$.
    Then if $X_1, \ldots, X_n$ are mutually independent (according to a
        joint distribution $\mu$), then
    so are $f_1(X_1), \ldots, f_n(X_n)$.
\end{lemma}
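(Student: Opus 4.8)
The plan is to reduce the claim directly to the definition of mutual independence by pulling events back through the functions $f_i$. Recall that $X_1, \ldots, X_n$ being mutually independent (under $\mu$) means that for every choice of events $A_1, \ldots, A_n$, with $A_i \subseteq \V(X_i)$ concerning $X_i$, the joint probability factors as $\mu\big(\bigcap_i \{X_i \in A_i\}\big) = \prod_i \mu(X_i \in A_i)$. To prove that $f_1(X_1), \ldots, f_n(X_n)$ are mutually independent, I would verify the analogous factorization criterion for arbitrary target values $y_i \in Y_i$ (since the variables are finite, checking singletons suffices; in a general measurable setting one would use arbitrary events $B_i \subseteq Y_i$, with no change to the argument).

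The key step is the preimage identity. For each $i$, the event $\{f_i(X_i) = y_i\}$ is literally the same event as $\{X_i \in f_i^{-1}(y_i)\}$, where $f_i^{-1}(y_i) := \{x \in \V(X_i) : f_i(x) = y_i\}$, because $f_i(X_i) = y_i$ holds exactly when $X_i$ lands in that preimage. Setting $A_i := f_i^{-1}(y_i)$, the joint event $\bigcap_i \{f_i(X_i) = y_i\}$ coincides with $\bigcap_i \{X_i \in A_i\}$. Mutual independence of the $X_i$ then gives
\[
    \mu\Big(\textstyle\bigcap_i \{f_i(X_i) = y_i\}\Big)
    = \prod_i \mu(X_i \in A_i)
    = \prod_i \mu\big(f_i(X_i) = y_i\big),
\]
which is precisely the factorization witnessing mutual independence of $f_1(X_1), \ldots, f_n(X_n)$.

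There is no real obstacle here; this is a routine fact, and the only conceptual content is that each $f_i(X_i)$ depends on the outcome only through $X_i$, so that events about it pull back to events about $X_i$ alone. The single point requiring care in a fully general (non-finite) setting would be measurability of the $f_i$, so that the preimages $f_i^{-1}(B_i)$ are genuine events and $f_i(X_i)$ is $\sigma(X_i)$-measurable; under the paper's standing finiteness assumption this is automatic, and the factorization above goes through verbatim.
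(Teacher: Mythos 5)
Your proof is correct and follows essentially the same route as the paper's: both verify the factorization on singleton target values $y_i$ by pulling the event $\{f_i(X_i)=y_i\}$ back to the preimage $\{X_i \in f_i^{-1}(y_i)\}$ and invoking independence of the $X_i$; the paper simply writes out the resulting probability as an explicit sum over $\prod_i\{x_i : f_i(x_i)=y_i\}$ and refactors it, which is the same computation as your $\mu(X_i\in A_i)$ step made pointwise. The only cosmetic difference is that you invoke the event-level formulation of mutual independence directly, while the paper derives it from the pointwise (singleton) formulation by interchanging sums and products.
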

\begin{proof}
    This is an intuitive fact, but we provide a
    proof for completeness.
    Explicitly, mutual independence of $X_1, \ldots, X_n$ means that,
        for all joint settings $\mat x = (x_1, \ldots x_n)$,
    we have $\mu(X_1{=}x_1,\ldots,X_n{=}x_n) = \prod_{i=1}^n \mu(X_i{=}x_i)$.
    So, for any joint setting $\mat y = (y_1, \ldots, y_n) \in Y_1 \times \cdots \times Y_n$, we have
    \begin{align*}
        \mu\Big(f_1(X_1){=}y_1, \ldots, f_n(X_n){=}y_n\Big)
            &= \mu(\{ \mat x : \mat f(\mat x) = \mat y\}) \\
            &= \sum_{\substack{(x_1,\ldots, x_n) \in \V(X_1,\ldots,X_n) \\ f_1(x_1) = y_1,\; \ldots,\; f_n(x_n) = y_n}}
                \mu(X_1{=}x_1, \,\ldots,\, X_n{=}x_n) \\
            &= \sum_{\substack{x_1 \in \V\! X_1 \\ f_1(x_1) = y_1}}
                    \cdots
                \sum_{\substack{x_n \in \V\! X_n \\ f_n(x_n) = y_n}}
                \mu(X_1{=}x_1, \,\ldots,\, X_n{=}x_n) \\
            &= \sum_{\substack{x_1 \in \V\! X_1 \\ f_1(x_1) = y_1}}
                    \cdots
                \sum_{\substack{x_n \in \V\! X_n \\ f_n(x_n) = y_n}}
                \prod_{i=1}^n \mu(X_i{=}x_i) \\
            &= \bigg(\sum_{\substack{x_1 \in \V\! X_1 \\ f_1(x_1) = y_1}}
                    \mu(X_1{=}x_1) \bigg) \cdots
                \bigg(\sum_{\substack{x_n \in \V\! X_n \\ f_n(x_n) = y_n}}
                    \mu(Y_1=y_1)\bigg) \\
            &= \prod_{i=1}^n \mu( f_i(X_i) = y_i ).
             \qedhere
    \end{align*}
\end{proof}

\begin{lemma}[properties of determination]
        \label{lem:detprop}
    ~
    \begin{enumerate}
        \item 
        If $\nu \models A \tto B$ and $\nu \models A \tto C$, then $\nu \models A \tto (B,C)$. 

        \item 
        If $\nu \models A \tto B$ and $\nu \models B \tto C$, then $\nu \models A \tto C$. 
    \end{enumerate}
\end{lemma}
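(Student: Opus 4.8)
The plan is to work directly from the semantics of determination: $\nu \models A \tto B$ holds exactly when $B$ is almost-surely a function of $A$, i.e.\ when there is a map $f\colon \V(A) \to \V(B)$ with $\nu(B = f(A)) = 1$. (On each value $a$ in the support of $A$, this $f$ picks out the unique $b$ for which $\nu(A{=}a, B{=}b) > 0$; outside the support $f$ may be chosen arbitrarily.) Both parts then follow from the elementary observation that a finite intersection of probability-one events again has probability one.

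For part 1, let $f$ witness $A \tto B$ and $g$ witness $A \tto C$, so that $\nu(B = f(A)) = \nu(C = g(A)) = 1$. Define $h\colon \V(A) \to \V(B) \times \V(C)$ by $h(a) := (f(a), g(a))$. Since $\nu\big(\{B = f(A)\} \cap \{C = g(A)\}\big) = 1$, and on this event $(B,C) = h(A)$, we get $\nu\big((B,C) = h(A)\big) = 1$, which is exactly $\nu \models A \tto (B,C)$.

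For part 2, let $f$ witness $A \tto B$ and $g$ witness $B \tto C$, so that $\nu(B = f(A)) = \nu(C = g(B)) = 1$. On the intersection of these two probability-one events we have $C = g(B) = g(f(A)) = (g \circ f)(A)$, so $\nu\big(C = (g\circ f)(A)\big) = 1$ and hence $\nu \models A \tto C$.

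The one point requiring a moment's care---the closest thing to an obstacle---is well-definedness of the composition in part 2: $g$ is only pinned down on values $b$ in the support of $B$, yet we evaluate it at $b = f(a)$. This is harmless, because for $a$ in the support of $A$ the value $f(a)$ lies in the support of $B$ (that is precisely what $\nu(B = f(A)) = 1$ guarantees), so $g \circ f$ is determined correctly exactly on the full-measure set where it is used. Alternatively, one can bypass functions entirely and argue pointwise on supports: in part 1, for each $a$ with $\nu(A{=}a) > 0$ the conditionals $\nu(B \mid A{=}a)$ and $\nu(C \mid A{=}a)$ are both point masses, hence so is $\nu((B,C) \mid A{=}a)$; in part 2, chaining the two point-mass conditionals yields a point mass for $\nu(C \mid A{=}a)$.
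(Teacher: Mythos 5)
Your proof is correct and follows essentially the same route as the paper's: both use the functional characterization of $\tto$ and the fact that a finite intersection of probability-one events has probability one, pairing the witnesses for part 1 and composing them for part 2. Your remark on the well-definedness of $g \circ f$ off the support is a small extra care the paper's proof skips over, but the arguments are otherwise the same.
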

\begin{lproof}
$\nu \models X \tto Y$, means there exists a function $f : V(A) \to V(B)$ such that $\nu( f(Y) = X ) = 1$, i.e., the event $f(A) = B$  occurs with probability 1.
\begin{enumerate}
    \item Let $f : \V(A) \to \V(B)$ and $g : \V(A) \to \V(C)$ be such that $\nu(f(A) = B) = 1 = \nu(g(A) = C)$. Since both events happen with probability 1, so must the event $f(A) = B \cap g(A) = C$. Thus the event $(f(A), g(A)) = (B,C)$ occurs with probability 1.  Therefore, $\nu \models A \tto (B,C)$. 
    \item The same ideas, but faster: we have $f : \V(A) \to \V(B)$ as before, and $g : \V(B) \to \V(C)$, such that the events $f(A) = B$ and $g(B) = C$ occur with proability 1. By the same logic, it follows that their conjunction holds with probability 1, and hence $C = f(g(A))$ occurs with probability 1. So $\nu \models A \tto C$. 
\end{enumerate}
\end{lproof}

\recall{theorem:bns}
\begin{lproof}
    \label{proof:bns}
    Label the vertics of $G = (\N, E)$ by natural numbers so that they are a topological sort of $G$---that is, without loss of generality, suppose $\N = [n] := \{1, 2, \ldots, n\}$, and $i < j$ whenever  $i \to j \in E$. 
    By the definition of $\Ar_G$, 
    the arcs $\Ar_G = \{ \Src i \xrightarrow{i} i \}_{i=1}^n$ are also indexed by integers.
    Finally, write $\X = (X_1, \ldots, X_n)$ for the variables $\X$ corresponding to $\N$ over which $\mu$ is defined. 
    
    \textbf{($\implies$).}
    Suppose $\mu \models \Diamond \Ar_G$. 
    This means there
    is an extension of $\bar\mu(\X, \U)$ of $\mu(\X)$ to additional
        independent variables $\U = (U_1, \ldots, U_n)$, such that $\bar\mu \models (\Src i, U_i) \tto i$ for all $i \in [n]$. 
    
    First, we claim that if $\bar\mu$ is such a witness, then 
    $\bar\mu \models (U_1, \ldots, U_k) \tto (X_1, \ldots, X_k)$
    for all $k \in [n]$, and so in particular, $\bar\mu \models \U \tto \X$. 
    This follows from \scibility's condition (c) and the fact that $G$ is acyclic, by induction.
    In more detail:
    The base case of $k=0$ holds vacuously.
    Suppose that $\bar\mu \models (X_1, \ldots, X_k)$ for some $k < n$. 
    Now, conditon (c) of \cref{defn:scompat} says 
    $\bar\mu  \models (\Src {k+1}, U_{k+1}) \tto X_{k+1}$. 
    Because the varaibles are sorted in topological order, the parent variables $\Src {k+1}$ are a subset of $\{X_1, \ldots, X_n\}$, which are determined by $\U$ by the induction hypothesis;
    at the same time clearly $\bar\mu \models (U_1, \ldots, U_{k+1}) \tto U_{k+1}$  as well. 
    So, by two instances of \cref{lem:detprop}, 
        $\bar\mu \models (U_1, \ldots U_{k+1}) \tto X_{k+1}$.
    Combining with our inductive hypothesis, we find that
        $\bar\mu \models (U_1, \ldots U_{k+1}) \tto (X_1, \ldots, X_{k+1})$.
    So, by induction, 
    $\bar\mu \models (U_1, \ldots, U_k) \tto (X_1, \ldots, X_k)$
    for $k \in [n]$, and in particular, $\bar\mu \models \U \tto \X$.

    With this in mind, we now return to proving that $\mu$ has the required independencies.
    It suffices to show that $\mu(\X) = \prod_{i=1}^n \mu(X_i \mid \Src i)$.
    We do so by showing that, for all $k \in [n]$, 
    $\mu(X_1, \ldots, X_{k}) = \mu(X_1, \ldots, X_{k-1}) \mu(X_{k}\mid \Src {k})$.
    By \scibility\ witness condition (c), we know that $\bar\mu\models (\Src k, U_k) \tto X_k$, and so there exists a function $f_k : \V(\Src k) \times \V(U_k) \to \V(X_k)$ for which the event $f_k(\Src k, U_k) = X_k$ occurs with probability 1. 
    Since $\bar\mu \models (U_1, \ldots, U_{k-1}) \tto (X_1, \ldots, X_{k-1})$, and $U_k$ is independent of $(U_1, \ldots, U_{k-1})$,
    it follows from \cref{lem:indep-fun} that 
    $\bar\mu \models (X_1, \ldots, X_{k-1}) \CI U_k$.
    Thus
    \begin{align*}
        \mu(X_1, \ldots, X_{k-1}, X_k)
             &= \sum_{u \in \V(U_k)}
                \mu(X_1, \ldots, X_{k-1}) \bar\mu(U_k = u) 
                    \cdot \mathbbm 1[X_k = f_k(\Src k, u)]
                    \\
            &= \mu(X_1, \ldots, X_{k-1}) \sum_{u \in \V(U_k)}\bar\mu(U_k = u) 
                \cdot \mathbbm 1[X_k = f_k(\Src k, u)].
    \end{align*}
    Observe that the quantity on the right, including the sum, is a function of $X_k$ and $S_k$, but no other variables; let  $\varphi(X_k, \Src k)$ denote this quantity. 
    Because $\mu$ is a probability distribution, know that
    $\varphi(X_k, \Src k)$ must be the conditional probability of $X_k$ given $X_1,\ldots, X_{k-1}$, and it depends only on the variables $\Src k$. Thus
    $\mu(X_1, \ldots, X_{k}) = \mu(X_1, \ldots, X_{k-1}) \mu(X_{k}\mid \Src {k})$.
    
    Therefore $\nu(\X) = \mu(\X)$ factors as required by the BN $G$, meaning that $\mu$ has the independencies specified by $G$. (See Koller \& Friedman Thm 3.2, for instance.)
    \bigskip
    
    \textbf{($\impliedby$).}
    Suppose $\mu$ satiesfies the independencies of $G$, meaning that each node is conditionally independent of its non-descendents given its parents.
    We now repeatedly apply the construction \cref{sec:cpd-derandomize} to construct a \scibility\ witness. 
    Specifically, for $k \in \{1, \ldots, n\}$, 
    let $U_k$ be a variable whose values $\V(U_k) := \V(X_k)^{\V(\Src k)}$ are functions from values of $X_k$'s parents, to values of $X_k$.
    Let $\mathcal U$ denote the joint variable $(U_1, \ldots, U_n)$,
    and observe that a setting $\mat g = (g_1, \ldots, g_n)$ of $\mathcal U$ uniquely picks out a value of $\X$, by evaluating each function in order. 
    Let's call this function $f : \V(\U) \to \V(\X)$. 
        
    To be more precise, we now construct $f(\mat g)$ inductively. 
    The first component we must produce is $X_1$, but since $X_1$ has no parents, $g_1$ effectively describes a single value of $X_1$, so we define the first component $f(\mat g)[X_1]$ to be that value.
    More generally, assuming that we have already defined the components $X_1, \ldots, X_{i-1}$, among which are the variables $\Src k$ on which $X_i$ depends, we can determine the value of $X_i$;
    formally, this means defining
    \[
        f(\mat g)[X_i] :=
            g_i( f(\mat g)[\Src i] ),
    \]
    which, by our inductive assumption, is well-defined.
    Note that, for all $\mat g \in \V(\U)$ and $\mat x \in \V(\X)$, the function $f$ is characterized by the property 
    \begin{equation}
            \label{eq:bn-func-char}
        f(\mat g) = \mat x
        \quad\iff\quad
        \bigwedge_{i=1}^n g_i(\mat x[\Src i]) = \mat x[X_i]. 
    \end{equation}
    To quickly verify this: if $f(\mat g) = \mat x$, then in particular, for $i \in [n]$, then $\mat x[X_i] = f(\mat g)[X_i] = g_i( \mat x[\Src i])$ by the definition above.
    Conversely, if the right hand side of \eqref{eq:bn-func-char} holds, then we can prove $f(\mat g) = \mat x$ by induction over our construction of $f$: if $f(\mat g)[X_{j}] = \mat x[X_j]$ for all $j < i$, then $f(\mat g)[X_i] = g_i( f(\mat g) [\Src i]) = g_i(\mat x[\Src i]) = \mat x[X_i]$.
    
    Next, we define an unconditional probability over each $U_k$ according to
    \[
        \bar\mu_i(U_i = g) := \prod_{\mat s \in \V(\Src k)} \mu(X_i = g(s) \mid \Src i \,{=}\, \mat s),
    \]
    which, as verified in \cref{sec:cpd-derandomize}, is indeed a conditional probability, and has the property that
    $\bar\mu_i(U_i(\mat s) = x) = \mu(X_i \,{=}\, x \mid \Src i \,{=}\, \mat s)$ for all $x \in \V(X_i)$ and $\mat s \in \V(\Src i)$.
    By taking an independent combination (tensor product) of each of these unconditional distributions, we obtain a joint distribution 
    $\bar\mu(\mathcal U) = \prod_{i=1}^n \bar\mu_i(U_i)$.
    Finally, we extend this distribution to a full joint distribution
        $\bar\mu(\U, \X)$ via the pushforward of $\bar \mu(\U)$ through the function $f$ defined by induction above.
    In this distribution, each $X_i$ is determined by $U_i$ and $\Src i$. 
    
    By construction, the variables $\U$ are mutually independent (for \cref{defn:scompat}(b)), and satisfy $(\Src k, U_k) \tto X_k$ for all $k \in [n]$ (\cref{defn:scompat}(c)). 
    It remains only to verify that the marginal of $\bar \mu$ on the variables $\X$ is the original distribution $\mu$ (\cref{defn:scompat}(a)). 
    Here is where we rely on the fact that $\mu$ satisfies the independencies of $G$, which means that we can factor $\mu(\X)$ as    $
        \mu(\X) = \prod_{i=1}^n \mu(X_i \mid \Src i).
    $
    \begin{align*}
        \bar\mu(\X{=}\mat x) 
        &= \sum_{\mat g \in \V(\U)} \bar\mu(\U{=}\mat g) 
            \cdot\delta\!f(\mat x \mid \mat g)
        \\
        &= \!\! \sum_{{(g_1, \ldots, g_n) \in \V(\U)}}\!\!
            \mathbbm1
                \big[
                \mat x = f(\mat g)
                \big] ~~
            \prod_{i=1}^n~ \bar\mu(U_i{=}g_i) \\
        &= \!\! \sum_{{(g_1, \ldots, g_n) \in \V(\U)}}\!\!
            \mathbbm1
                \Big[
                 \bigwedge_{i=1}^n g_i( \mat x[\Src i] ) = \mat x[X_i]
                \Big] ~~
            \prod_{i=1}^n~ \bar\mu(U_i{=}g_i) 
            & [\text{ by \eqref{eq:bn-func-char} }]
            \\
        &= \prod_{i=1}^n~~
            \sum_{{g \in \V(U_i)}} ~
            \mathbbm 1\big[ g(\mat x[\Src i]) = \mat x[X_i] \big] 
            \cdot
            \bar\mu(U_i = g) \\
        &= \prod_{i=1}^n~~
            \bar\mu \Big(\Big\{
                    g \in \V(U_i) ~\Big|~
                    g( \mat s_i ) = x_i
                \Big\}\Big)
                \quad\text{ where }
                \begin{array}{l}
                    x_i := \mat x[X_i],\\
                    \mat s_i := \mat x[\Src i]
                \end{array} \\
        &= \prod_{i=1}^n \,
            \bar\mu \big( U_i(\mat s_i) \,{=}\, x_i \big) \\
        &= \prod_{i=1}^n
            \mu(X_i = x_i \mid \Src i = \mat s_i) \\
        &= \mu(\X = \mat x).
    \end{align*}
    Therefore, when $\mu$ satisfies the independencies of a BN $G$, it is \scible\ with $\Ar_G$.
\end{lproof}

Before we move on to proving the other results in the paper, we first illustrate how this relatively substantial first half of the proof of \cref{theorem:bns} can be dramatically simplified by relying on two information theoretic arguments.

\begin{lproof}[Alternate, information-based proof]
    \textbf{($\implies$).}
    Let $G$ be a dag.     
    If $\mu \models \Diamond \Ar_{G}$, then by \cref{theorem:sdef-le0}, 
    $\IDef_{\!\Ar_{G}}(\mu) \le 0$. In the appendix of \cite{pdg-aaai}, it is shown that $\IDef_{\!\Ar_G}(\mu) \ge 0$ with equality iff $\mu$ satisfies the BN's independencies. 
    Thus $\mu$ must satisfy the appropriate independencies.
\end{lproof}

\recall{theorem:func}~%
\begin{lproof}\label{proof:func}
\textbf{(a).} 
The forward direction is straightforward. 
Suppose that $\mu \models \Diamond \Ar$ and $\mu \models X \tto Y$.
The former condition gives us a witness
$\nu(\X,\U)$ in which $\U = \{ U_a \}_{a \in \Ar}$ are mutually independent variables indexed by $\Ar$, that determine their respective edges. 
``Extend'' $\nu$ in the unique way to $n$ additional constant variables $U_1, \ldots, U_n$, each of which can only take on one value. 
We claim that this ``extended'' distribution $\nu'$, which we conflate with $\nu$ because it is not meaningfully different, is a witness to $\mu \models  \Diamond \ArXY n$.
Since $\mu \models X \tto Y$ it must also be that $\nu \models X \tto Y$, and it follows that $\nu \models (X,U_i)\tto Y$ for all $i \in \{1,\ldots,n\}$, demonstrating that
the new requirements of $\nu'$ imposed by \cref{defn:scompat}(c) hold. 
(The remainder of the requirements for condition (c), namely that $\nu' \models (\Src a, U_a) \tto \Tgt a$ for $a \in \Ar$, still hold because $\nu'$ is an extension of $\nu$, which we know has this property.)
Finally, since $\U$ are mutually independent and each $U_i$ is a constant (and hence independent of everything), the variables $\U' := \U \sqcup \{ U_i \}_{i=1}^n$ are also mutually independent. 
Thus $\nu$ (or, more precisely, an isomorphic ``extension'' of it to additional trivial variables) is a witness of
 $\mu \models  \Diamond \ArXY n.$
 
The reverse direction is difficult to prove directly, yet it is a straightforward application of \cref{theorem:sdef-le0}.
Suppose that $\mu \models \Diamond \ArXY n$ for all $n \ge 0$. 
By \cref{theorem:sdef-le0}, we know that
\[
    0 \ge  \IDef_{\!\ArXY n}(\mu) = \IDef_{\!\Ar}(\mu) + n \H_\mu(Y|X).
\]
Because $\IDef_{\!\Ar}(\mu)$ is bounded below (by $-\log |\V(\X)|$), it cannot be the case that $\H_\mu(Y|X) > 0$; otherwise, the inequality above would not hold for large $n$ (specifically, for $n > \log |\V(\X)| / \H_\mu(Y|X)$).  
By Gibbs inequailty, $\H_\mu(Y|X)$ is non-negative, and thus it must be the case that $\H_\mu(Y|X) = 0$. Thus $\mu \models X \tto Y$. 
It is also true that $\mu \models \Diamond \Ar$ by monotonicity (\cref{prop:strong-mono}), which is itself a direct application of \cref{theorem:sdef-le0}

\medskip

\textbf{(b).}
Now $\Ar = \Ar_G$ for some graph $G$.
The forward direction of the equivalence is strictly weaker than the one we already proved in part (a); we have shown $\mu \models \Diamond \ArXY n$ for all $n \ge 0$, and needed only to show it for $n=1$. 
The reverse direction is what's interesting.
As before, we will take a significant shortcut by using \cref{theorem:sdef-le0}. 
Suppose $\mu \models \Diamond \ArXY 1$. 
In this case where $\Ar = \Ar_G$, it was shown by \citet{pdg-aaai} that  $\IDef_{\!\Ar}(\mu) \ge 0$.
It follows that
\[
    0 ~\overset{\text{(\cref{theorem:sdef-le0})}}\ge~  \IDef_{\!\ArXY n}(\mu) = \IDef_{\!\Ar}(\mu) + \H_\mu(Y|X) 
    ~
    \ge~ 0,
\]
and thus $\H_\mu(Y|X) = 0$, meaning that $\mu \models X \tto Y$ as promised.  As before, we also have $\mu \models \Diamond \Ar$ by monotonicity.

\textbf{(c).}
As in part (b), the forward direction is a special case of the forward direction of part (a), and it remains only to prove the reverse direction.  
Equipped with the additional information that $\Ar \rightsquigarrow \{ \to \{ X \} \}$, 
suppose that $\mu \models \Diamond \ArXY 2$. 
By monotonicity, this means $\mu \models \Diamond \Ar$ and also that $\mu \models\begin{tikzpicture}[center base]
    \node[dpad0] (X) {$X$};
    \node[dpad0,right of=X] (Y) {$Y$};
    \draw[arr1] (X) to[bend left=15] (Y);
    \draw[arr1] (X) to[bend right=15] (Y);
    \draw[arr1,<-] (X) to[] +(-0.8,0);
\end{tikzpicture}$. 
Let $\Ar'$ denote this \hgraph. 
Once again by appeal to \cref{theorem:sdef-le0}, 
we have that 
\[
    0 \ge \IDef_{\!\Ar'} = - \H_\mu(X,Y) + \H(X) + 2 \H_\mu(Y|X)
         = \H_\mu(Y|X) \ge 0. 
\]
It follows that $\H_\mu(Y|X) = 0$, and thus $\mu \models X\tto Y$. 
As mentioned above, we also know that $\mu \models \Diamond \Ar$, 
and thus $\mu \models \Diamond \Ar ~\land~ X {\tto} Y$ as promised.
\end{lproof}

\newpage
\subsection{Causality Results of 
    \texorpdfstring{\cref{sec:causal}}{Section \ref*{sec:causal}}}
\recall{prop:sc-graph-arise}
\begin{lproof}\label{proof:sc-graph-arise}
    \textbf{($\implies$).~} Suppose $\mu \models \Diamond \Ar_G$. 
    Thus there exists some witness $\bar\mu(\X,\U)$ to this fact, satisfying conditions (a-c) of \cref{defn:scompat}. 
    Because $\Ar_G$ is partitional, the elements
    of $\PSEMs_{\Ar_G}(\bar\mu)$ are ordinary (i.e., not generalized) randomized PSEMs.
    We claim that every $\mathcal M = (M,P) \in \PSEMs_{\Ar_G}(\bar\mu)$ that is a randomized PSEM from which $\mu$ can arise,
    and also has the property that $\Pa_M(Y) \subseteq \Pa_G(Y) \cup \{U_Y\}$ for all $Y \in \X$. 
    \begin{itemize}[left=1em]
    \item 
    The \hyperarc s of $\Ar_G$ correspond to the vertices of $G$, which in turn correspond to the variables in $\X$; thus $\U = \{ U_X \}_{X \in \X}$. 
    By property (b) of \scibility\ witnesses (\cref{defn:scompat}), these variables $\{ U_X \}_{X \in \X}$ are mutually independent according to $\bar\mu$. 
    Furthermore, because $\mathcal M = (M, P) \in \PSEMs_{\Ar_{G}\!}(\bar\mu)$, we know that $\bar\mu(\U) = P$, 
    and thus the variables in $\U$ must be mutually independent according to $P$.
    By construction, in causal models $\mathcal M \in \PSEMs_{\Ar_G}(\bar\mu)$ the equation $f_Y$ can depend only on $\Src Y = \Pa_G(Y) \subseteq \X$ and $U_Y$. So, in particular, $f_Y$ does not depend on $U_X$ for $X \ne Y$.
    
    Altogether, we have shown that ${\cal M}$ contains exogenous variables $\{ U_X \}_{X \in \X}$ that are mutually independent according to $P$, and that $f_Y$ does not depend on $U_X$ when $X \ne Y$. 
    Thus, $\mathcal M$ is a randomized PSEM.

    \item 
    By condition (a) on \scibility\ witnesses (\cref{defn:scompat}), we know that $\bar\mu(\X) = \mu$. By \cref{prop:witness-model-properties}(a), we know that $\mu \in \SD{\cal M}$. 
    Together, the previous two sentences mean that $\mu$ can arise from $\mathcal M$.
    
    \item 
    Finally, as mentioned in the first bullet item, the equation $f_Y$ in $M$ can depend only on $\Src Y = \Pa_G(Y)$ and on $U_Y$. Thus
     $\Pa_M(Y) \subseteq \Pa_G(Y) \cup \{U_Y\}$ for all $Y \in \X$. 
    \end{itemize}
    Under the assumption that $\mu \models \Diamond \Ar_G$, we have now shown that there exists a randomized causal model $\cal M$ from which $\mu$ can arise, with the property that 
    $\Pa_{\cal M}(Y) \subseteq \Pa_G(Y) \cup \{ U_Y \}$ for all $Y \in \X$. 
\commentout{
    We have now shown that $\mathcal M$ is a randomized causal model from which $\mu$ can arise, with the property that 
    $\Pa_M(Y) \cap \X \subseteq \Pa_G(Y)$ for all $Y \in \X$. 
    To get an analogous statement with equality, we now introduce one additional variable $U_\bullet$, with $\V(\U_\bullet) :=  \{\bullet\} \cup 
        \prod_{Y \in \X} \V(Y)^{\V(\Pa_G(Y))}
        $.
    Take $P'(\U, U_\bullet) := P(\U) \delta(U_\bullet = \bullet)$ to be the unique distribution that extends $P$ with a dirac distribution stating that $U_\bullet$ always takes its special value $\bullet$. 
    Define new equations $\mathcal F' := \{ f'_Y \}_{Y \in \X}$
    according to 
    \[
    f'_Y(u_\bullet, \mat z) := 
    \begin{cases}
        & f_Y(\mat z) \text{ if } u_\bullet = \bullet \\
        & u_\bullet(Y)( \mat z[\Pa_G(Y)] ) \text{ if } u_\bullet \in \V(\X)
    \end{cases}.
    \]
    Let's break this down this notation. $u_\bullet$ is a value of $U_\bullet$, and hence a dependent function from variables $Y \in \X$ to functions of type $\V(\Pa_G(Y)) \to \V(Y)$. 
    Thus $u_\bullet(Y) : \V(\Pa_G(Y)) \to \V(Y)$.
    Next, $\mat z \in \V(\X \cup \U - Y)$, and so $\mat z[\Pa_M(Y)]$ consists of those components of $\mat z$ corresponding to the parents of $Y$ in the graph $G$. 
    So $u_\bullet(Y)( \mat z[\Pa_G(Y)] ) \in \V(Y)$ is a value of $Y$.
    Let $\mathcal M' = (M' = (\X, \U \cup \{ U_\bullet\}, \mathcal F'), P')$ denote this modified PSEM. 
    $M'$ still has the property that $\Pa_{M'}(Y) \cap \X \subseteq \Pa_{G}(Y)$, since $f'_Y$ depends on $U_\bullet$ and either the variables it previously depended on (when $u_\bullet = \bullet$) or on $\Pa_G(Y)$.
    But in $M'$, the endogenous variables on which $Y$ depends are precisely $\Pa_G(Y)$, not a subset.
    This is because, for all $X \in \Pa_G(Y)$, there is a tripple $(\mat u_\bullet, x, x')$ with $u_\bullet \in \V(U_\bullet)$ and $x,x' \in \V(X)$ such that $u_\bullet(Y)(\ldots, x, \ldots) \ne u_\bullet(Y)(\ldots, x', \ldots)$. 
    Therefore $\Pa_{M'}(Y) = \{ U_\bullet, U_Y \} \cup \Pa_G(Y)$,
        and hence $\Pa_{M'}(Y) \cap \X = \Pa_G(Y)$, as promised. 
}
    
    \medskip
    
    \textbf{($\impliedby$).~} Conversely, suppose there is a randomized PSEM 
    $\mathcal M = (M = (\mathcal Y,\U, \mathcal F), P)$
     with the property that
    $\Pa_M(Y) \subseteq \Pa_G(Y) \cup \{ U_Y \}$ for all $Y$,
    from which $\mu$ can arise. 
    The last clause means there exists some $\nu \in \SD{\cal M}$ such that $\nu(\X) = \mu$.
    We claim that this $\nu$ is a witness to $\mu \models \Diamond \Ar_G$. 
    We already know that condition (a) of being a \scibility\ witness is satisfied, since $\nu(\X) = \mu$. 
    Condition (b) holds because of the assumption that $\{ U_X \}_{X \in \X}$ are mutually independent in the distribution $P$ for a randomized PSEM (and the fact that $\nu(\U) = P$, since $\nu \in \SD{\mathcal M}$).
    Finally, we must show that (c) for each  $Y \in \X$, 
    $\nu \models \Pa_G(Y) \cup \{ U_Y \} \tto Y$. 
    Since $\nu \in \SD{\cal M}$, we know that $M$'s equation holds with probability 1 in $\nu$, and so it must be the case that $\nu \models \Pa_M(Y) \tto Y$. 
    Note that, in general, if $\mat A \subseteq \mat B$ and $\mat A \tto \mat C$, then $\mat B \tto \mat C$.
    By assumption, $\Pa_M(Y) \subseteq \Pa_G(Y) \cup \{U_Y\}$, and thus $\nu \models \Pa_G(Y) \cup \{U_Y\} \tto Y$.

    Thus $\nu$ satisfies all conditions (a-c) for a \scibility\ witness, and hence $\mu \models \Diamond \Ar_G$. 
\end{lproof}

\recall{prop:gen-sim-compat-means-arise}
\begin{lproof}\label{proof:gen-sim-compat-means-arise}
    \textbf{($\implies$).~} Suppose $\mu \models \Diamond \Ar$, meaning there exists a witness
    $\nu(\X,\U)$ with property \cref{defn:scompat}(c), meaning
    that, for all $a \in \Ar$, 
    there is a functional dependence $(\Src a, U_a) \tto \Tgt a$.
    Thus, there is some set of functions $\mathcal F$ with these types that holds with probability 1 according to $\nu$. 
    Meanwhile, by \cref{defn:scompat}(b), $\nu(\U)$ are mutually independent, so defining $P_a(U_a) := \nu(U_a)$, we have $\nu(\U) = \prod_{a \in \Ar}P_a(U_a)$. 
    Together, the previous two conditions (non-deterministically) define a 
    generalized randomized PSEM
    $\cal M$ of shape $\Ar$ for which $\nu \in \SD{\mathcal M}$. 
    Finally, by \cref{defn:scompat}(a), we know that $\mu$ can arise from $\mathcal M$.
    
    \textbf{($\impliedby$).~} Conversely, suppose there is a generalized randomized SEM $\mathcal M$ of shape $\Ar$ from which $\mu(\X)$ can arise. 
    Thus, there is some $\nu \in \SD{\mathcal M}$ whose marginal on $\X$ is $\mu$. 
    We claim that this $\nu$ is also a witness that $\mu \models \Diamond \Ar$.
    The marginal constraint from \cref{defn:scompat}(a) is clearly satisfied. 
    Condition (b) is immediate as well, because $\nu(\U) = \prod_{a} P_a(U_a)$. 
    Finally, condition (c) is satisfied, because the equations of $\mathcal M$ hold with probability 1, ensuring the appropriate functional dependencies. 
\end{lproof}

\recall{prop:witness-model-properties}
\begin{lproof}\label{proof:witness-model-properties}
    (a) is straightforward. 
    Suppose $\mathcal M \in \PSEMs(\nu)$. 
    By construction, the equations of $\mathcal M$ reflect functional dependencies in $\nu$, and hence hold with probability 1. 
    \unskip\footnote{When the probability of some combination of source variables is zero, there is typically more than one choice of functions that holds with probability 1; the choice of functions is essentially the choice of $\mathcal M \in \PSEMs(\nu)$.}
    Furthermore, the distribution $P(\U)$ in all $\mathcal M \in \PSEMs(\nu)$ is equal to $\nu(\U)$. 
    These two facts, demonstrate that $\nu$ satisfies the two constraints required for membership in $\SD{\mathcal M}$.

    (b). 
    We do the two directions separately. 
    First, suppose $\mathcal M \in \PSEMs(\nu)$.
    We have already shown (in part (a)) that $\nu \in \SD{\mathcal M}$. 
    The construction of $\PSEMs(\nu)$ depends on the hypergraph $\Ar$ (even if the dependence is not explicitly clear from our notation)
    in such a way that $f_X$ does not depend on any variables beyond $U_a$ and $\Src {a_{\!X}}$. Thus, $\Pa_{\mathcal M}(X) \subseteq \Src {a_{\!X}} \cup \{U_{a_{\!X}}\}$.
    
    Conversely, suppose $\mathcal M = (\X,\U, \mathcal F)$ is a PSEM satisfying $\nu \in \SD{\cal M}$ and $\Pa_{\cal M}(X) \subseteq \Src{a_{\!X}} \cup \{U_{a_{\!X}}\}$. 
    We would like to show that $\mathcal M \in \PSEMs(\nu)$. 
    Because $\nu \in \SD{\cal M}$, we know that the distribution $P(\U)$ over the exogenous variables in the PSEM $\cal M$ is equal to $\nu(\U)$, matching the first part of our construction.
    What remains is to show that the equations $\mathcal F$ are consistent with our transformation. 
    Choose any $X \in \X$. Because $\Ar$ is \subpartl, there is a unique $a_{\!X} \in \Ar$ such that $X \in \Tgt{ a_{\!X}}$. Now choose any values $\mat s \in \V(\Src {a_{\!X}})$ and $u \in \V(U_{a_{\!X}})$.
    If $\nu(\mat s, u) > 0$, then we know there is a unique value of $x \in \V(X)$ such that $\nu(\mat s, u, x) > 0$. Since $\mathcal M$'s equation for $X$, $f_{X}$, depends only on $\mat s$ and $u$, and holds with probability 1, we know that $f_X(\mat s, u) = t$, as required. 
    On the other hand, if $\nu(\mat s, u) = 0$, then any choice of $f_X(\mat s, u)$ is consistent with our procedure. 
    Since this is true for all $X$, and all possible inputs to the equation $f_X$, we conclude that the equations $\mathcal F$ can arise from the procedure described in the main text,
    and therefore $\mathcal M \in \PSEMs(\nu)$.     
\end{lproof}

\recall{theorem:condition-intervention-alignment}
\begin{lproof}\label{proof:condition-intervention-alignment}
    \textbf{(part a).}
    \def\doXx{\mathrm{do}_{\cal M}(\mat X{=}\mat x)}
    Because we have assumed $\bar\mu(\doXx) > 0$,
    the conditional distribution 
    \[ 
        \bar\mu\mid \doXx = \bar\mu(\U, \enV) \cdot 
        \prod_{X \in \mat X}
        \mathbbm 1\big[\forall \mat s. f_X(U_X, \mat s) = \mat x[X]\big]
        ~\Big/~ 
        \bar\mu(\doXx) 
    \]
    is defined. By assumption, $\mathcal M \in \PSEMs(\bar\mu)$ and
    $\bar\mu$ is a witness to the fact that $\mu \models \Diamond \Ar$.
    Thus, by \cref{prop:witness-model-properties}, we know that $\bar\mu \in \SD{\mathcal M}$. 
    So in particular, all equations of $\mathcal M$ hold for all joint settings $(\mat u, \mat v) \in \V(\U \cup \enV)$ in the support of $\bar\mu$. But the support of the conditional distribution $\bar\mu \mid \doXx$ is a subset of the support of $\bar\mu$, so all equations of $\mathcal M$ also hold in the conditioned distribution.  
    Furthermore, 
        the event $\doXx$ is the event in which, for all $X \in \mat X$, the variable $U_X$ takes on a value such that $f_X(\ldots, U_X,\ldots) = \mat x[X]$.  
    Thus the equations corresponding to $\mat X = \mat x$ also hold with probability 1 in $\bar\mu \mid \doXx$.

    This shows that all equations of $\mathcal M_{\mat X {\gets}\mat x}$ hold with probability 1 in $\bar\mu \mid \doXx$. However, the marginal distribution $\bar\mu(\U \mid \doXx)$ over $\U$ is typically not equal to $P(\U)$---after all, we have collapsed distribution of the variables $\U_{\mat X} := \{ U_X : X \in \mat X\}$. 
    Clearly, $\bar\mu \mid \doXx \notin \SD{\mathcal M_{\mat X {\gets}\mat x}}$.
    However, as we now show, there exists a \emph{different} distribution $\bar\mu' \in \SD{\mathcal M_{\mat X {\gets} \mat x}}$ such that $\bar\mu'(\enV) = \nu(\enV \mid \doXx)$. 

    Let $\U_0 := \U \setminus \U_{\mat X}$. 
    We can define $\bar\mu'$ according to
    \[
        \bar\mu'(\enV, \U_{\mat X}, \U_{0}) := 
            \bar\mu(\enV, \U_{0} \mid \doXx) P(\U_{\mat X}).
    \]
    The distribution $\bar\mu'$ satisfies 
    three critical properties:
    \begin{enumerate}
        \item 
        By construction, the marginal of $\bar\mu'$ on $\enV$ is $\bar\mu'(\enV) = \bar\mu(\enV \mid \doXx)$.
        \item At the same time, the marginal of $\bar\mu'$ on exogenous variables is
        \begin{align*}
            \bar\mu'(\U) &= 
            \bar\mu'(\U_{\mat X}, \U_{0})  \\
            &= \int_{\V(\enV)} \bar\mu'(\mat v, \U_{\mat X}, \U_0) \,\mathrm d \mat v \\
            &= \int_{\V(\enV)} \bar\mu(\mat v, \U_{0}\mid \doXx) P(\U_{\mat X}) \,\mathrm d\mat v
                &\text{[definition of $\bar\mu'$]} \\
            &= P(\U_{\mat X}) \bar\mu(\U_{0} \mid \doXx) 
                &\text{}\\
            &= P(\U_{\mat X}) P(\U_{0} \mid \doXx) 
                &\hspace{-2em}[\text{because $\bar\mu(\U) = P(\U)$, since $\bar\mu \in \SD{\mathcal M}$}]
                \\
            &= P(\U_{\mat X}) P(\U_{0}) 
                &\hspace{-2em}\left[~\parbox{5cm}{ since $\doXx$ depends only on $\U_{\mat X}$, while $\U_{\mat X}$ and $\U_{\mat Z}$ are independent in $\bar\mu$ (by the witness condition).}~\right]
                \\
            &= P(\U_{\mat X}, \U_{0})
                & [\text{  same reason as above }]
        \end{align*}
        \item Finally, $\bar\mu'$ satisfies all equations of $\mathcal M_{\mat X{\gets} \mat x}$. It satisfies the equations for the variables $\mat X$ because $\mat X = \mat x$ holds with probability 1. 
        At the same time, the equations of $\mathcal M_{\mat X{\gets} \mat x}$ corresponding to other variables, say $f_Z$ for $Z \in \enV \setminus \mat X$, also hold with probability one. 
        This is because the marginal $\bar\mu'(\U_{\mat Z}, \enV)$ is shared with the distribution $\bar\mu \mid \doXx$, and that distribution satisfies these equations. (It suffices to show that they share this particular marginal because the equations for $\mat Z$ do not depend on $\U_{\mat X}$.)
    \end{enumerate}
    Together, properties 2 and 3 show that $\bar\mu' \in \SD{\mathcal M_{\mat X{\gets} \mat x}}$, while property 1 shows that $\bar\mu(\enV \mid \doXx)$ can arise from $\mathcal M_{\mat X{\gets} \mat x}$. 
    This establishes part (a).

    \bigskip

    \textbf{(part b).}
    We will again make use of the distribution $\bar\mu'$ defined in part (a), and its three critical properties listed above. 
    Given a setting $\mat u \in \V(\U)$ of the exogenous variables, let
    \[
        \mathcal F_{\mat X \gets \mat x} (\mat u) 
        := \left\{ \omega \in \V(\enV) ~\middle|~
        \begin{array}{rl}
            \forall X \in \mat X.~& \omega[X] = \mat x[X]\\
            \forall Y \in \enV \setminus \mat X.~&\omega[Y] = f_X(\omega[\enV \setminus Y], \mat u) 
        \end{array}
        \right\} 
    \]
    denote the set of joint settings of endogenous variables that are consistent with the equations of $\mathcal M_{\mat X \gets \mat x}$. 

    If $\mat u \in \V(\U)$ is such that 
    \begin{align*}
        (M, \mat u) \models [\mat X {\gets} \mat x] \varphi
        \quad&\iff\quad
        (M_{\mat X {\gets} \mat x}, \mat u) \models \varphi  \\
        &\iff \forall \omega \in \mathcal F_{\mat X {\gets} \mat x}(\mat u).~~\omega \in \varphi \\
        &\iff 
            \mathcal F_{\mat X {\gets} \mat x}(\mat u)
                ~\subseteq~ \varphi, 
    \end{align*}
    then $\phi$ holds at all points that satisfy the equations of $M_{\mat X {\gets} \mat x}$. So, since $\bar\mu'$ is supported only on such points (property 3), it must be that $\bar\mu'(\varphi) = 1$. By property 1, 
    $\bar\mu'(\varphi) =  \bar\mu(\varphi \mid \doXx)$. 

    Furthermore, if $\bar\mu'(\varphi) > 0$, then there must exist some $\omega \in \mathcal F_{\mat X {\gets} \mat x}(\mat u)$ satisfying $\varphi$, and thus 
    $(M, \mat u) \models \langle \mat X {\gets}\mat x\rangle\varphi$. Putting both of these observations together, and with a bit more care to the symbolic manipulation, we find that:
    \begin{align*}
        \Pr_{\mathcal M}([\mat X {\gets} \mat x]\varphi) &= P(\{ 
            \mat u \in \V(\U) : (M, \mat u) \models [\mat X {\gets}\mat x]\varphi
        \})
        \\
        &= \sum_{\mat u \in \V(\U)} P(\mat u) \mathbbm 1
            \big[ \mathcal F_{\mat X \gets \mat x}(\mat u) \subseteq \varphi \big] \\
        &\le \sum_{\mat u \in \V(\U)} P(\mat u) \bar\mu'(\varphi \mid \mat u) 
        \qquad = \bar\mu'(\varphi) = \bar\mu(\varphi \mid \doXx) \\
        &\le \sum_{\mat u \in \V(\U)} P(\mat u) \mathbbm 1
        \big[ \mathcal F_{\mat X {\gets} \mat x}(\mat u) \cap \varphi \ne \emptyset \big] \\
        &= P(\{ 
            \mat u \in \V(\U) : (M, \mat u) \models \langle\mat X {\gets}\mat x\rangle\varphi
        \})
        \\&= \Pr_{\mathcal M}(\langle\mat X {\gets} \mat x\rangle\varphi),
        \qquad\text{    as desired. }
    \end{align*}

    Finally, if $\bar\mu \models \U \tto \enV$, then $\mathcal F_{\mat X {\gets} \mat x}(\mat u)$ is a singleton for all $\mat u$, and hence $\varphi$ holding for all $\omega \in \mathcal F_{\mat X {\gets} \mat x}$ and for some $\omega \in \mathcal F_{\mat X {\gets} \mat x}$ are equivalent. 
    So, in this case, 
    \[
        (M, \mat u) \models [\mat X{\gets}\mat x]\varphi
        \qquad \iff \qquad
        (M, \mat u) \models \langle\mat X{\gets}\mat x\rangle\varphi,
    \]
    and thus the probability of both formulas are the same---and it must also equal $\bar\mu(\varphi \mid \doXx)$ which we have shown lies between them. 
\end{lproof}

\commentout{
\recall{prop:all-models-formula-equiv}
\begin{lproof}\label{proof:all-models-formula-equiv}
    We prove the equivalence by induction on the structure of causal formulas $\varphi$. 
    \begin{itemize}
        \item \textbf{Case
        $\varphi = \varphi_1 \land \varphi_2$.} 
        Suppose inductively that $(\nu, \mat u) \models \varphi_1$ iff $(\mathcal M, \mat u) \models\varphi_1$ for all $\mathcal M \in \PSEMs(\nu)$, and the same for $\varphi_2$. 
        Then:
        \begin{align*}
            (\nu, \mat u) \models \varphi_1 \land \varphi_2
            \quad&\iff\quad
            (\nu, \mat u) \models \varphi_1  \quad\text{and}\quad
            (\nu, \mat u) \models \varphi_2 
            \\
            &\iff\quad
                \Big(\forall \mathcal M \in \PSEMs(\nu).~
                (\mathcal M, \mat u) \models \varphi_1 \Big)
                \text{ and }
                \Big(\forall \mathcal M \in \PSEMs(\nu).~
                (\mathcal M, \mat u) \models \varphi_2 \Big)
            \\
            &\iff\quad
            \forall \mathcal M \in \PSEMs(\nu).~
                \Big(
                (\mathcal M, \mat u) \models \varphi_1 \text{ and } 
                (\mathcal M, \mat u) \models \varphi_2 \Big)
            \\
            &\iff \quad
            \forall \mathcal M \in \PSEMs(\nu).~
            (\mathcal M, \mat u) \models \varphi_1 \land \varphi_2.
        \end{align*}
        \item \textbf{Case
        $\varphi = \lnot \varphi_1$.} 
        Suppose that the inductive hypthesis holds for $\varphi_1$. 
        \begin{align*}
            (\nu, \mat u) \models \lnot \varphi_1
                \quad& \iff\quad
                (\nu, \mat u) \not\models \varphi_1
                \\
                &\iff\quad
                \exists \mathcal M \in \PSEMs(\nu).~
                (\mathcal M, \mat u) \not\models \varphi_1.
                \\
                &\iff \quad 
                \exists \mathcal M \in \PSEMs(\nu).~
                (\mathcal M, \mat u) \not\models \varphi_1.
        \end{align*}
        {\color{red} oops, as expected, this doesn't work out.}
        
        \item \textbf{Case $\varphi = [Y {\gets}y]\varphi_1$}.
        \begin{align*}
            (\nu, \mat u) \models [Y {\gets}y]\varphi_1
            \quad
            &\iff \quad
            \nu(\varphi_1 \mid \U{=}\mat u[\hat U_{\mat Y} \gets 
                \Lambda Y. \lambda\textunderscore.\mat y[Y]]) = 1
        \end{align*}
        This means $\varphi_1$ holds in all 
        
    \end{itemize}
    
    $(\nu, \mat u) \models \varphi$. 
    By definition, this means   
\end{lproof}
}

\vfull{
\recall{prop:causal-mrf}
\begin{lproof}
    \label{proof:causal-mrf}
    1.
    Suppose $\mu \models \Diamond \Ar_G$, $\bar\mu(\U,\X)$ is a witness to this, and $\mathcal M \in \PSEMs_{\Ar_G \!}(\bar\mu)$. 
    
    \textbf{($\implies$).}
    Suppose $\mat X \CI_G \mat Y \mid \mat Z$, meaning every path from $\mat X$ to $\mat Y$ in $G$ goes through $\mat Z$.
    We now show that

    \TODO
    
\end{lproof}
}

\subsection{Information Theoretic Results of \texorpdfstring{\Cref{sec:info}}{Section \ref*{sec:info}}}
To prove \cref{theorem:sdef-le0} and \cref{theorem:siminc-idef-bounds}(a), we will need the following Lemma. 

\begin{lemma}
        \label{lem:Y-filter}
    Consider a set of variables $\mat Y = \{Y_1, \ldots, Y_n\}$, and another (set of) variable(s) $X$.
    Every joint distribution $\mu(X, \mat Y)$ over the values of $X$ and $\mat Y$ satisfies
    \[
        \sum_{i=1}^n \I_\mu(X \,;\, Y_i) 
        ~\le~
         \I_{\mu}(X \,;\, \mat Y) 
         + \sum_{i=1}^n \H_\mu(Y_i) - \H_\mu(\mat Y).
    \]
\end{lemma}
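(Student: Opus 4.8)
The plan is to reduce the claimed inequality to the subadditivity of conditional entropy, which is a standard information-theoretic fact. The first step is purely algebraic: expand every mutual information appearing in the statement in terms of (conditional) entropies, using $\I_\mu(X\,;\,Y_i) = \H_\mu(Y_i) - \H_\mu(Y_i \mid X)$ for each $i$, and $\I_\mu(X\,;\,\mat Y) = \H_\mu(\mat Y) - \H_\mu(\mat Y \mid X)$ for the joint term. Substituting these into both sides and cancelling the common summand $\sum_{i=1}^n \H_\mu(Y_i)$ together with the $\H_\mu(\mat Y)$ terms, the inequality collapses to
\[
    \H_\mu(\mat Y \mid X) ~\le~ \sum_{i=1}^n \H_\mu(Y_i \mid X).
\]

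The second step is to establish this remaining inequality. I would apply the chain rule for conditional entropy to write
\[
    \H_\mu(\mat Y \mid X) = \sum_{i=1}^n \H_\mu\big(Y_i \mid X, Y_1, \ldots, Y_{i-1}\big),
\]
and then bound each term using the fact that conditioning cannot increase entropy, i.e. $\H_\mu(Y_i \mid X, Y_1, \ldots, Y_{i-1}) \le \H_\mu(Y_i \mid X)$. Summing these bounds gives exactly the desired $\sum_i \H_\mu(Y_i \mid X)$, completing the argument. Both the chain rule and the monotonicity of entropy under conditioning are reviewed in the information-theory primer (\cref{appendix:info-theory-primer}), so I would simply cite them rather than reprove them.

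There is no real obstacle here: once the mutual informations are unfolded, the statement is precisely subadditivity of conditional entropy, which is one of the most basic consequences of the chain rule and nonnegativity of conditional mutual information (indeed $\sum_i \H_\mu(Y_i \mid X) - \H_\mu(\mat Y \mid X)$ is a sum of nonnegative conditional mutual information terms). The only thing warranting a moment of care is the bookkeeping in the algebraic cancellation—making sure the $\H_\mu(\mat Y)$ contributed by $\I_\mu(X\,;\,\mat Y)$ and the standalone $-\H_\mu(\mat Y)$ on the right-hand side combine correctly, and that the $\sum_i \H_\mu(Y_i)$ terms match on both sides. I would present the derivation as a short chain of equalities reducing the goal to the entropy inequality, then close with the one-line chain-rule argument.
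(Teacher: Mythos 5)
Your proof is correct, but it takes a different route from the paper's. The paper proves the lemma by induction on $n$: the base case is trivial, and the inductive step applies the chain rule for mutual information three times to show that $\I(X;Y_{k+1})$ is bounded by $\I(X;\mat Y_{1:k+1}) - \I(X;\mat Y_{1:k}) + \H(Y_{k+1}) + \H(\mat Y_{1:k}) - \H(\mat Y_{1:k+1})$, which telescopes against the inductive hypothesis. You instead unfold every mutual information into entropies up front, cancel, and observe that the lemma is exactly the subadditivity of conditional entropy, $\H_\mu(\mat Y \mid X) \le \sum_i \H_\mu(Y_i \mid X)$, which follows from the chain rule plus nonnegativity of conditional mutual information. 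The two arguments have the same mathematical content — your chain-rule-plus-monotonicity step is precisely the telescoping that the paper's induction performs — but your presentation is more direct and makes the real content of the lemma transparent: it is nothing more than the statement that the $Y_i$ carry at least as much joint conditional uncertainty as the sum of their marginal conditional uncertainties would suggest, with the $\H(Y_i)$ and $\H(\mat Y)$ terms on the right-hand side serving only to convert between the conditional-entropy and mutual-information forms. The paper's induction, by contrast, keeps everything in mutual-information language throughout, which matches how the lemma is later invoked (in the proofs of \cref{theorem:sdef-le0} and \cref{theorem:siminc-idef-bounds}) but obscures this reduction. Your algebra checks out: the right-hand side collapses to $\sum_i \H_\mu(Y_i) - \H_\mu(\mat Y \mid X)$, the left to $\sum_i \H_\mu(Y_i) - \sum_i \H_\mu(Y_i \mid X)$, and the remaining inequality is standard. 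Either proof is acceptable.
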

\begin{lproof}
    Since there is only one joint distribution in scope, we omit the subscript $\mu$, writing $\I(-)$ instead of $\I_\mu(-)$ and $\H(-)$ instead of $\H_\mu(-)$, in the body of this proof.
    The following fact will also be very useful:
    \begin{align}
        \I(A; B, C) &= \I(A; C) + \I(A; B \mid C)
            &\text{(the chain rule for mutual information).}
            \label{eq:chain-rule-mi}
    \end{align}

    We prove this by induction on $n$. In the base case ($n=1$), 
    we must show that $\I(X;Y)  \le \I(X;Y) + \H(Y) - \H(Y)$, which is an obvious tautology. 
    Now, suppose inductively that 
    \[
        \sum_{i=1}^k \I(X\,;\,Y_i) 
        ~\le~
         \I(X \,;\, \mat Y_{1:k}) + \sum_{i=1}^k \H(Y_i) - \H(\mat Y_{1:k})
         \tag{IH$_k$}
    \]
    for some $k < n$, where $\mat Y_{1:k} = (Y_1, \ldots, Y_k)$. 
    We now prove that the analogue for $k+1$ also holds. 
    Some calculation reveals:
    \begin{align*}
        \I(X &; Y_{k+1})
        \\
        &= \I(X ; \mat Y_{1:k+1}) - \I(X ; \mat Y_{1:k} \mid Y_{k+1}) 
            & \big[\text{ by MI chain rule \eqref{eq:chain-rule-mi} } \big]
        \\
            &\le \I(X ; \mat Y_{1:k+1})
            & \Big[\text{ since } \I(X ; \mat Y_{1:k} \mid Y_{k+1}) \ge 0\, \Big] \\
        &= \I(X;  Y_{k+1} \mid \mat Y_{1:k}) + \I(\mat Y_{1:k} ;  Y_{k+1})
            & \big[\text{ by MI chain rule \eqref{eq:chain-rule-mi} } \big] \\
        &= \left(
        \begin{array}{c}
        \I(X ; \mat Y_{1:k+1}) + \H(Y_{k+1}) - \H( \mat Y_{1:k+1}) \\
         - \I(X ; \mat Y_{1:k})  \qquad\quad\qquad + \H( \mat Y_{1:k} )           
     \end{array}\right)
        & \Big[
        \begin{array}{c}
        \text{ left: one more MI chain rule \eqref{eq:chain-rule-mi}; } \\
        \text{ right: defn of mutual information  }    
        \end{array}
        \Big].
    \end{align*}
    
    Observe: adding this inequality to our inductive hypothesis (IH$_k$) yields (IH$_{k+1}$)!
    So, by induction, the lemma holds for all $k$. \qedhere
\end{lproof}

\recall{theorem:sdef-le0}
\begin{lproof}
        \label{proof:sdef-le0}
    Suppose that $\mu \models \Diamond \Ar$, meaning that there is a witness $\nu(\X, \U)$ that extends $\mu$, and has properties (a-c) of \cref{defn:scompat}. 
    For each \arc\ a, since
    $\nu \models (\Src a, U_a) \tto \Tgt a$, 
    we have
    $\H_\nu(\Tgt a \mid \Src a, U_a) = 0$, and so
    \[
    \H_\mu(\Tgt a \mid \Src a) = \H_\nu(\Tgt a \mid \Src a, U_a) + \I_\nu(\Tgt a ; U_a \mid \Src a) = \I_\nu(\Tgt a; U_a \mid \Src a).
    \]
    Thus, we compute    
    \begin{align*}
        \sum_{a \in \Ar} \H_\mu(\Tgt a \mid \Src a) 
        &= \sum_{a \in \Ar} \I_\nu(U_a ; \Tgt a \mid \Src a) 
        \\
        &= \sum_{a \in \Ar} \I_\nu(U_a ; \Tgt a, \Src a) - \I_\nu(U_a ; \Src a)
            & \text{ by MI chain rule \eqref{eq:chain-rule-mi}} \\
        &\le \sum_{a \in \Ar} \I_\nu(U_a ; \Tgt a, \Src a) 
            & \text{ since $\I_\nu(U_a\,;\,\Src a) \ge 0$} 
        \\
        &\le \sum_{a \in \Ar} \I_\nu(U_a ; \X) 
            &\text{ since $\X \tto (\Src a, \Tgt a)$ }
        \\
        &\le \I_\nu(\X ; \U) + \sum_{a \in \Ar} \H_\nu( U_a ) - \H_\nu(\U)
            & \text{ by \cref{lem:Y-filter}}
        \\
        &= \I_\nu(\X ; \U) & \begin{array}{r}\text{ since $\U$ are independent}\\\text{ (per condition (b) of \cref{defn:scompat})} \end{array} \\
        &\le \H_\nu(\X) = \H_\mu(\X).
            &\text{(per condition (a) of \cref{defn:scompat})}
    \end{align*}
    Thus, $\IDef_{\!\Ar}(\mu) \le 0$. \qedhere
\end{lproof}

\recall{prop:sinc-nonneg-s2}
\begin{lproof}\label{proof:sinc-nonneg-s2}
    The first term in the definition of $\SIMInc$ be written as
    \[
        \Big( - \H_\nu(\mathcal U) + \sum_{a \in \Ar} \H_\nu(U_a)  \Big)
            = \Ex_{\nu} \Big[ \log \frac{\nu(\mathcal U)}{\prod_{a} \nu(U_a)} \Big]
    \]
    and is therefore the relative entropy between $\nu(\mathcal U)$ and 
    the independent product distribution $\prod_{a \in \Ar}\nu(U_a)$. 
    Thus, it is non-negative.
    The remaining terms of $\SIMInc_{\Ar}(\mu)$, are all conditional entropies, and hence non-negative as well.
    Thus $\SIMInc_{\Ar}(\mu) \ge 0$. 
    
    Now, suppose $\mu$ is s2-comaptible with $\Ar$, i.e., there exists
    some $\nu(\mathcal U, \X)$ such that 
    (a) $\nu(\X) = \mu(\X)$,
    (b) $\H_\nu(\Tgt a | \Src a, U_a) = 0$,
    and (d) $\{U_a\}_{a \in \Ar}$ are mutually independent.
    Then clearly $\nu$ satisfies the condition under the infemum,
    every $\H_\nu(\Tgt a | \Src a, U_a)$ is zero.
    It is also immediate that the final term is zero as well, because it equals
    $\kldiv{\nu(\mathcal U)}{\prod_a \nu(U_a)}$, and 
    $\nu(\mathcal U) = \prod_a \nu(U_a)$, per the definition of mutual independence.
    Thus, $\nu$ witnesses that $\SIMInc_{(\Ar, \lambda)} = 0$. 
    
    Conversely, suppose $\SIMInc_{(\Ar, \lambda)} = 0$.
    Because the feasible set is closed and bounded, as is the function,
    the infemum is achieved by some joint distribution $\nu(\X, \Ar)$ with marginal $\mu(\X)$.
    In this distribution $\nu$, we know that every $\H_\nu(\Tgt a | \Src a, U_a) = 0$
    and $\kldiv{\nu(\mathcal U)}{\prod_a \nu(U_a)} = 0$---
    because if any of these terms were positive, then the result
        would be positive as well.
    So $\nu$ satisfies (a) and (b) by definition. 
    And, because relative entropy is zero iff its arguments are identical
        we have $\nu(\mathcal U) = \prod_a \nu(U_a)$, so the $U_a$'s are mutually
        independent, and $\nu$ satisfies (d) as well.
\end{lproof}

\recall{theorem:siminc-idef-bounds}
\begin{lproof}
        \label{proof:siminc-idef-bounds}
    Part (a).  The left hand side of the theorem 
    ($\IDef_{\!\Ar}(\nu)\le\SIMInc_{\Ar}(\mu)$)
    is a strengthening of the argument used to prove \cref{theorem:sdef-le0}. 
    Specifically,
    let $\nu^*$ be a minimizer of 
        the optimization problem defining $\SIMInc$
    We calculate
    \allowdisplaybreaks
    \begin{align*}
        &\SIMInc_{\Ar}(\mu) - \IDef_{\!\Ar}(\mu) 
        \\
        &= \left( \,
        \sum_{a \in \Ar} \H_{\nu^*}(\Tgt a \mid \Src a, U_a) - \H_{\nu^*}(\mathcal U) + \sum_{a \in \Ar} \H_{\nu^*}(U_a)
        \right) - \left(\,
        \sum_{a \in \Ar}\H_\mu(\Tgt a \mid \Src a) - \H_\mu(\X) \right)\\
        &=
        \sum_{a \in \Ar} \Big( \H_{\nu^*}(\Tgt a \mid \Src a, U_a) - \H_{\nu^*}(\Tgt a \mid \Src a) \Big)
            + \H_\mu(\X)
            - \H_{\nu^*}(\mathcal U) + \sum_{a \in \Ar} \H_{\nu^*}(U_a)
        \\
        &=
        - \sum_{a \in \Ar} \I_{\nu^*}( \Tgt a ; U_a \mid \Src a)
            \qquad + \H_\mu(\X)
            - \H_{\nu^*}(\mathcal U) + \sum_{a \in \Ar} \H_{\nu^*}(U_a).
    \end{align*}
    The argument given in the first five lines of the proof of \cref{theorem:sdef-le0}, gives us a particularly convenient bound for the first group of terms on the left:
    \[
        \sum_{a \in \Ar} \I_{\nu^*} (U_a ; \Tgt a\mid \Src a) 
        \le \I_{\nu^*}(\X; \U) + \sum_{a \in \Ar} \H_{\nu^*}(U_a) - \H_{\nu^*}(\U).
    \]
    Substituting this into our previous expression, we have:
    \begin{align*}
        &\SIMInc_{\Ar}(\mu) - \IDef_{\!\Ar}(\mu) 
        \\
        &\ge - \Big(\I_{\nu^*}(\X; \U) + \sum_{a \in \Ar} \H_{\nu^*}(U_a) - \H_{\nu^*}(\U) \Big)
            + \H_\mu(\X) - \H_{\nu^*}(\U) + \sum_{a \in \Ar} \H_{\nu^*}(U_a)
        \\
        &= \H_\mu(\X) - \I_{\nu^*}(\X ; \U)
        \\
        &\ge 0. 
    \end{align*}
    The final inequality holds because of our assumption that the marginal $\nu^*(\X)$ equals $\mu(\X)$. 
    Thus, $\SIMInc_{\Ar}(\mu) \ge \IDef_{\!\Ar}(\mu)$, as proimised.

    We now turn to the right hand inequality, and part (b) of the theorem. 
    Recall that $\nu^*$ is defined to be a minimizer of the optimization problem defining $\SIMInc$. 
    For the right inequality $(\SIMInc_{\Ar}(\mu) \le \IDef_{\!\Ar^\dagger}(\nu))$ of part (a), observe that
    \begin{align*}
        \IDef_{\!\Ar^\dagger}(\nu)
            &=
                - \H_\nu(\X, \mathcal U)
                + \sum_{a \in \Ar} \H_{\nu}(U_a)
                + \sum_{a \in \Ar} \H_\nu(\Tgt a | \Src a, U_a)
                + \H_\nu(\X \mid \mathcal U) \\
            &= \Big( - \H_\nu(\mathcal U) + \sum_{a \in \Ar} \H_{\nu}(U_a) \Big)
            + \sum_{a \in \Ar} \H_\nu(\Tgt a | \Src a, U_a)
            \\&\ge \Big( - \H_{\nu^*}(\mathcal U) + \sum_{a \in \Ar} \H_{\nu^*}(U_a) \Big)
            + \sum_{a \in \Ar} \H_{\nu^*}(\Tgt a | \Src a, U_a)
            \\
            &= \SIMInc(\mu).
    \end{align*}
    This proves the right hand side of the inequality of part (a). 
    Moreover, because the one inequality holds with equality when $\nu = \nu^*$ is a minimizer of this quantity (subject to having marginal $\mu(\X)$) we have shown part (b) as well.

    \commentout{
    modulo the final claim about the size of the variables $\U$.
     To address that final missing piece, we claim that any minimizer $\nu(\U, \X)$ may be converted to another minimizing distribution $\hat \nu(\hat \U, \X)$ over the variables $\hat \U = \{ \hat U_a \}_{a \in \Ar}$, where $\V(U_a) = \{ \text{functions } \V(\Src a) \to \V(\Tgt a) \}$.
    To do this, we 
    
    \TODO
    }
\end{lproof}

\vfull{
\section{SIM-Equivalence}
    \label{sec:equivalence}
{\color{red} [now that we know this version of Theorem 2 is false, much of this discussion doesn't make sense.]
Applying \cref{theorem:func} with $\Ar' = \emptyset$
yeilds a quintessential special case: $\mu \models
\begin{tikzpicture}[center base]
    \node[dpad0] (X) {$X$};
    \node[dpad0,right=.5 of X] (Y) {$Y$};
    \draw[arr1,yshift=2px] (X) to[bend left=16] (Y);
    \draw[arr1] (X) to[bend right=16] (Y);
\end{tikzpicture}$ 
iff $Y$ is a function of $X$ according to $\mu$.
At first glance, this already seems to
    capture the essence of \cref{theorem:func};
    is it really meaningfully weaker? 
In fact it is; 
to illustrate, our next example is another graph that behaves the same way---but not in all contexts.

\begin{linked}{example}{det-fn}
     In the appendix, we prove
     $\mu \models \begin{tikzpicture}[center base]
         \node[dpad0] (X) {$X$};
         \node[dpad0,right of=X] (Y) {$Y$};
         \draw[arr1] (X) to[] (Y);
         \draw[arr1,<-] (Y) to[] +(0.7,0);
    \end{tikzpicture}$
    iff
     $Y$ is a function of $X$
    (according to $\mu$).
    But, in general, this graph says something
        distinct from 
        (and stronger than, as we will see in \cref{sec:monotone})
    the example above. 
    After an adding the \arc\ $\ed{}{\emptyset}{\{X\}}$ to both graphs, for example, they behave differently:
    every distribution $\mu$ satisfying $X \tto Y$
    also satisfies
    $\mu \models \begin{tikzpicture}[center base]
        \node[dpad0] (X) {$X$};
        \node[dpad0,right of=X] (Y) {$Y$};
        \draw[arr1] (X) to[bend left=15] (Y);
        \draw[arr1] (X) to[bend right=15] (Y);
        \draw[arr1,<-] (X) to[] +(-0.8,0);
    \end{tikzpicture}$,
    but only when $Y$ is a constant
    can it be the case that
    $\mu \models \begin{tikzpicture}[center base]
        \node[dpad0] (X) {$X$};
        \node[dpad0,right of=X] (Y) {$Y$};
        \draw[arr1] (X) to[] (Y);
        \draw[arr1,<-] (Y) to[] +(0.7,0);
        \draw[arr1,<-] (X) to[] +(-0.7,0);
   \end{tikzpicture}$.
\end{linked} 
}

\commentout{
    For a more general illustration, one can easily show that
    every \hgraph\ consisting of just one \arc\ is consistent with all probability measures $\mu \in \Delta \V\!\X$,
        even though different \arc s intuitively mean different things.
    Moreover, every \hgraph\ is a union of one-arc \hgraph s,
        and we have already seen that not all \hgraph s are equivalent.
    We explore role of union as a way of combining   
        qualitative PDGs further in \cref{sec:union}.
}

To distinguish between \hgraph s that are not interchangable,
    we clearly need a stronger notion of equivalence.
    
Given \hgraph s $\Ar_1$ and $\Ar_2$, 
we can form the combined \hgraph\ $\Ar_1 + \Ar_2$
that consists of the disjoint union of the two sets of \hyperarc s,
    and the union of their nodes.
We say that $\Ar$ and $\Ar'$ are \emph{(structurally) equivalent}
($\Ar \cong \Ar'$) if for every context $\Ar''$ and distribution $\mu$,  we have that
    $\mu \models \Diamond( \Ar + \Ar'')$ iff $\mu \models \Diamond( \Ar' + \Ar'')$.
By construction, structural equivalence ($\cong$) is itself
invariant to additional context:
     if $\Ar \cong \Ar'$ then
    $\Ar + \Ar'' \cong \Ar' + \Ar''$.
Our next result is a simple, intuitive, and particularly useful equivalence.

\begin{prop}
        \label{prop:equiv-factorizations-cnd}
        The following \hgraph s are equivalent:
    \[
        \begin{tikzpicture}[center base]
            \node[dpad0] (X) {$X$};
            \node[dpad0,above right=0.8em and -0.4em of X] (Z) {$Z$};
            \node[dpad0,right=1.2em of X] (Y) {$Y$};
            \mergearr[arr1] {Z}{X}{Y};
            \draw[arr1] (Z) to (X);
        \end{tikzpicture}
        ~~\cong~~
        \begin{tikzpicture}[center base]
            \node[dpad0] (X) {$X$};
            \node[dpad0,right=0.6 of X] (Y) {$Y$};
            \node[dpad0,anchor=center] (Z) at ($(X.east)!0.5!(Y.west) + (0,0.75)$){$Z$};
            \unmergearr[arr1] {Z}{X}{Y};
        \end{tikzpicture}
        ~~\cong~~
        \begin{tikzpicture}[center base]
            \node[dpad0] (X) {$X$};
            \node[dpad0,above left=0.8em and -0.4em of Y] (Z) {$Z$};
            \node[dpad0,right=1.2em of X] (Y) {$Y$};
            \mergearr[arr1] {Z}{Y}{X};
            \draw[arr1] (Z) to (Y);
        \end{tikzpicture}
    .
    \]
\end{prop}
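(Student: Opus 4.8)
The plan is to reduce all three equivalences to a single \textbf{arc-splitting lemma}: for every context $\Ar''$, the one-arc hypergraph $\{Z \to \{X,Y\}\}$ is structurally equivalent to the two-arc hypergraph $\{Z \to X,\ (Z,X) \to Y\}$. The leftmost and rightmost diagrams are precisely the two ways of instantiating the two-arc side (peeling off $X$ first, or peeling off $Y$ first), so proving the lemma in both directions—and reading it once with the roles of $X$ and $Y$ interchanged—yields the whole chain $\cong$. Throughout I would hold the context $\Ar''$, its noise $\U_{\Ar''}$, and the base joint over $\X \cup \mathcal Y$ fixed, modifying only the noise variables attached to the arcs being split or merged. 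This automatically preserves condition (a) of \cref{defn:scompat} and the determination conditions of the $\Ar''$-arcs (condition (c) for $a \in \Ar''$), so the only thing left to verify is the mutual independence of the full noise family and the determination conditions for the split/merged arcs.

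The \emph{merge} direction ($\{Z\to X,\ (Z,X)\to Y\} \Rightarrow \{Z \to \{X,Y\}\}$) is the easy half. Given a witness with mutually independent noise $U_b, U_a, \U_{\Ar''}$, simply set $U_c := (U_b, U_a)$. Grouping a block of a mutually independent family leaves it independent of the remaining blocks, so $\{U_c\}\cup\U_{\Ar''}$ is still mutually independent; and from $(Z,U_b)\tto X$ together with $(Z,X,U_a)\tto Y$, repeated application of \cref{lem:detprop} (using that enlarging a source preserves determination) gives $(Z,U_c)\tto(X,Y)$. Nothing else changes, so this is a witness for the merged structure.

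The \emph{split} direction is the crux. Here one starts from a witness $\bar\mu$ for $\{Z \to \{X,Y\}\}+\Ar''$, with a single noise variable $U_c$ satisfying $(Z,U_c)\tto(X,Y)$ and $U_c \CI \U_{\Ar''}$, and must manufacture two noise variables $U_b, U_a$ with $(Z,U_b)\tto X$ and $(Z,X,U_a)\tto Y$ such that $\{U_b,U_a\}\cup\U_{\Ar''}$ is mutually independent. The tool to reach for is the derandomization of \cref{sec:cpd-derandomize}: take $U_b$ to be a response-function variable for $X$ given $Z$ and $U_a$ a response-function variable for $Y$ given $(Z,X)$, each built to reproduce the corresponding conditional of $\bar\mu$ and to evaluate to the true value ($X = U_b(Z)$, $Y = U_a(Z,X)$). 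The determination conditions are then immediate, and because the derandomized measure makes the entries of a response function independent across inputs, the product construction of \cref{sec:cpd-derandomize} is exactly what one would use to force $U_a \CI U_b$.

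The main obstacle is securing all three independence requirements—$U_a\CI U_b$, $U_a \CI \U_{\Ar''}$, and $U_b\CI\U_{\Ar''}$—simultaneously. There is a real tension: making $U_b$ independent of the context noise pushes toward taking $U_b$ to be a deterministic response function of the \emph{given} $U_c$ (which is independent of $\U_{\Ar''}$), whereas making $U_a \CI U_b$ pushes toward the freshly derandomized form whose entries are independent across inputs. These coincide only when $\bar\mu$ satisfies $(X,Y)\CI \U_{\Ar''}\mid Z$, which need not hold. I expect this to be the delicate step, and the one place where the independence $U_c \CI \U_{\Ar''}$ of the starting witness must be exploited carefully—e.g.\ by derandomizing \emph{through} $U_c$ (so that the new function variables inherit independence from $\U_{\Ar''}$) and then applying an additional conditional-independence/averaging argument, justified by \cref{lem:indep-fun}, to decouple $U_b$ from $U_a$. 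Once the split lemma is proved in both directions, instantiating it with $(T_1,T_2)=(X,Y)$ gives the first $\cong$ and with $(T_1,T_2)=(Y,X)$ gives the second.
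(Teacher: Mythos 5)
Your reduction of the three-way equivalence to a single arc-splitting lemma ($\{Z\to\{X,Y\}\}\cong\{Z\to X,\ (Z,X)\to Y\}$ in every context, applied once more with $X$ and $Y$ swapped) is the right architecture, and your merge direction is correct and complete: grouping $(U_b,U_a)$ into a single $U_c$ preserves mutual independence of the noise family, and \cref{lem:detprop} yields $(Z,U_c)\tto(X,Y)$. (For what it is worth, the paper states this proposition without proof---it appears only in a compiled-out appendix section, justified informally by the factorization $P(X|Z)P(Y|X,Z)=P(X,Y|Z)$---so there is no official argument to compare against.) The problem is that the split direction, which you correctly identify as the crux, is where essentially all of the content of the proposition lives, and your proposal does not close it: it names the tension among the three independence requirements and gestures at ``derandomizing through $U_c$'' followed by an averaging argument, but that specific route provably breaks.

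Concretely, take the context $\Ar''=\{\emptyset\to Z\}$, let $U_c=(A,B)$ be two independent fair bits independent of $Z$ (with $Z$ uniform and $U_{\emptyset\to Z}:=Z$), and let the merged mechanism be $X=A$ if $Z=0$, $X=B$ if $Z=1$, and $Y=A\oplus B$. This is a legitimate witness for $\{Z\to\{X,Y\}\}+\Ar''$, and the induced $\mu(X,Y,Z)$ is uniform. Your $U_b$---the $X$-response function extracted from $U_c$---is then the pair $(A,B)$ itself. Keeping the context noise fixed, mutual independence of $\{U_a,U_b,U_{\emptyset\to Z}\}$ forces $U_a\CI(A,B,Z)$ (via \cref{lem:indep-fun}, since $Z$ is a function of $U_{\emptyset\to Z}$). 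On the positive-probability event $\{Z=0,X=0\}=\{Z=0,A=0\}$ we have $Y=B$ with $B$ uniform and, by the independence just derived, $U_a\CI B$ conditionally on that event; hence $\Pr\big(f_a(0,0,U_a)=B\big)=\tfrac12$ for every candidate equation $f_a$, so no admissible $U_a$ exists for this choice of $U_b$. No averaging or further appeal to \cref{lem:indep-fun} can rescue this: the failure is that a $U_b$ built as a deterministic function of $U_c$ can carry strictly more information about $Y$ than $X$ does, and once it does, independence of $U_a$ from $U_b$ is incompatible with $(Z,X,U_a)\tto Y$. (The uniform $\mu$ here is of course split-compatible---the split structure plus context is a complete-DAG BN---so only your construction fails, not the proposition.) What the split direction actually requires is a construction that discards $U_c$ and builds $U_b$ and $U_a$ as fresh product-measure response variables for $\bar\mu(X\mid Z)$ and $\bar\mu(Y\mid Z,X)$, coupled so as to remain consistent with the fixed joint on $(\X,\mathcal Y,\U_{\Ar''})$; showing that the resulting family is mutually independent is a nontrivial extension-of-overlapping-marginals problem that your sketch does not address. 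Until that step is supplied, the proposition is not proved.
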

These three \hgraph s correspond, respectively, to equivalent factorizations 
of a conditional probability measure
\[ 
    \def\Pr{P}
    \Pr(X|Z)\Pr(Y|X,Z) 
        = \Pr(X,Y|Z) 
        = \Pr(X|Y,Z)\Pr(Y|Z).
\]
\commentout{
    \begin{prop}
            \label{prop:equiv-factorizations}
            The following \hgraph s are equivalent:
        \[
            \begin{tikzpicture}[center base]
                \node[dpad0] (X) {$X$};
                \node[dpad0,right=1.1em of X] (Y) {$Y$};
                \draw[arr1] (X) to (Y);
                \draw[arr1,<-] (X) to +(-0.68,0);
            \end{tikzpicture}
            ~~\cong~~
            \begin{tikzpicture}[center base]
                \node[dpad0] (X) at (0,0) {$X$};
                \node[dpad0] (Y) at (1,0) {$Y$};
                \cunmergearr[arr1] {0.5,.8}{X}{Y}{0.5,.5}
            \end{tikzpicture}
            ~~\cong~~
            \begin{tikzpicture}[center base]
                \node[dpad0] (X) {$X$};
                \node[dpad0,right=1.1em of X] (Y) {$Y$};
                \draw[arr1] (Y) to (X);
                \draw[arr1,<-] (Y) to +(0.68,0);
            \end{tikzpicture}
        .
        \]
    \end{prop}
    These three \hgraph s also arise from equivalent BN structures.
    As such, they correspond, respectively, to equivalent factorizations 
    of a probability measure
    \[ \Pr(X)\Pr(Y|X) ~=~ \Pr(X,Y) ~=~ \Pr(X|Y)\Pr(Y). \]
}
\commentout{
We conjecture that, in a sense, \cref{prop:equiv-factorizations}
characterizes when two qualitative BNs satisfy the same set of
independencies.  More specifically, we conjecture that if 
two qualitative Bayesian Networks describe the same set of
independencies, then they can 
be provd equivalent using only instances of \cref{prop:equiv-factorizations}.
}
\cref{prop:equiv-factorizations-cnd}
provides a simple and useful way to relate \scibility\ of different \hgraph s. If we restrict to acyclic structures, for instance, we find: 

\begin{theorem}[\citealt{chickering-equiv-bns}]
        \label{theorem:bn-completeness}
    Any two qualitative Bayesian Networks that 
    represent the same independencies
    can be proven equivalent using only instances of \cref{prop:equiv-factorizations-cnd}
    (in which $X$, $Y$, $Z$ may be sets of variables).
\end{theorem}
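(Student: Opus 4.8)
The plan is to reduce the statement to the standard characterization of Markov equivalence together with Chickering's covered-edge-reversal theorem, and then to observe that a single covered-edge reversal is \emph{literally} an instance of \cref{prop:equiv-factorizations-cnd} applied within a fixed context. First I would unwind the hypothesis: two qualitative Bayesian Networks $\Ar_G$ and $\Ar_{G'}$ ``represent the same independencies'' means exactly that the dags $G$ and $G'$ are Markov equivalent, i.e. $\mathcal I(G) = \mathcal I(G')$. By the Verma--Pearl characterization, this holds iff $G$ and $G'$ share a skeleton and a set of v-structures. Chickering's theorem then supplies a sequence of dags $G = G_0, G_1, \ldots, G_m = G'$ in which each $G_{i+1}$ is obtained from $G_i$ by reversing a single \emph{covered} edge (an edge $X {\to} Y$ with $\Pa_{G_i}(Y) = \Pa_{G_i}(X) \cup \{X\}$), and in which every intermediate $G_i$ is again Markov equivalent to $G$.

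The crux is then to match one covered-edge reversal to \cref{prop:equiv-factorizations-cnd}. Fix a covered edge $X {\to} Y$ in $G_i$ and let $Z := \Pa_{G_i}(X)$ be the common parents, so that $\Pa_{G_i}(X) = Z$ and $\Pa_{G_i}(Y) = Z \cup \{X\}$. The two local mechanisms of $\Ar_{G_i}$ incident to $X$ and $Y$ are precisely $Z {\to} X$ together with $(Z \cup \{X\}) {\to} Y$ --- that is, the left-hand hypergraph of \cref{prop:equiv-factorizations-cnd}. Reversing the edge changes only these two parent sets, yielding $\Pa_{G_{i+1}}(Y) = Z$ and $\Pa_{G_{i+1}}(X) = Z \cup \{Y\}$, whose mechanisms $Z {\to} Y$ and $(Z \cup \{Y\}) {\to} X$ form exactly the right-hand hypergraph of the proposition. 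Crucially, a covered-edge reversal leaves the parent set of every other node untouched, so the remaining mechanisms of $\Ar_{G_i}$ and $\Ar_{G_{i+1}}$ coincide; call this common collection the context $\Ar''$. Thus $\Ar_{G_i} = (\text{hypergraph 1}) + \Ar''$ and $\Ar_{G_{i+1}} = (\text{hypergraph 3}) + \Ar''$, and since $\cong$ is invariant under adding context, \cref{prop:equiv-factorizations-cnd} gives $\Ar_{G_i} \cong \Ar_{G_{i+1}}$. Chaining these equivalences along the whole sequence yields $\Ar_G \cong \Ar_{G'}$, each step being an instance of the proposition (with $X, Y$ singletons and $Z$ a set of variables, which is exactly why the proposition is stated for sets).

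The main obstacle is the second step --- the bookkeeping that certifies the reversal is genuinely local. I would need to verify carefully that (i) every node $W \notin \{X, Y\}$ keeps $X$, $Y$, and all its other parents exactly as before, so that no mechanism in $\Ar''$ silently changes; (ii) the covered-edge condition is precisely what makes the two affected mechanisms fit the pattern $Z {\to} X$, $(Z{\cup}\{X\}) {\to} Y$ with a \emph{shared} $Z$ (without coveredness the parent sets would not line up with the proposition, and indeed an uncovered reversal would alter the v-structures and break equivalence); and (iii) invoking Chickering's theorem in the form that keeps every intermediate dag Markov equivalent is what guarantees each rewrite is applied to a genuine qualitative BN whose mechanisms still have singleton targets. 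Modulo these checks, the argument is just the translation of covered-edge reversal into the language of \scibility, so I would lean on \citet{chickering-equiv-bns} for the graph-theoretic sequence and spend my own effort on establishing the local correspondence and its interaction with context-invariance.
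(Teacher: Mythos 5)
Your proposal is correct and follows exactly the route the paper intends: the paper offers no detailed proof, merely remarking that the theorem "is essentially a restatement" of Chickering's covered-edge-reversal result with the covering condition $\Pa_G(Y) = \Pa_G(X) \cup \{X\}$ implicit in the statement, and your argument simply makes that correspondence explicit. Your identification of a covered edge's two mechanisms with the left-hand hypergraph of \cref{prop:equiv-factorizations-cnd} (taking $Z = \Pa_{G_i}(X)$), the locality check that all other hyperarcs form an unchanged context $\Ar''$, and the appeal to context-invariance and transitivity of $\cong$ are precisely the bookkeeping the paper leaves to the reader.
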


\Cref{theorem:bn-completeness} 
is essentially a restatement of
main result of \citet{chickering-equiv-bns}, 
but it is simpler to state in terms of \dhygraph\ equivalences.
To state the result in its original form, one has to first
define an edge $X \to Y$ to be \emph{covered} in a graph $G$ iff $\Pa_G(Y) = \Pa_G(X) \cup \{X\}$;
then, the result states that all equivalent BN structures are related by a chain of reversed covered edges. 
Observe that this notion of covering is implicit in \cref{theorem:bn-completeness}.
\Cref{theorem:bn-completeness} is one demonstration of the usefulness of 
\cref{prop:equiv-factorizations-cnd},
but the latter applies far more broadly, to cyclic structures
and beyond. 
It becomes even more useful in tandem
    with the definition of monotonicity presented in \cref{sec:monotone},
    which is an analogue of implication. 
    
\commentout{
    Suppose we want
        to know that \scibility\ with $\Ar$ suffices to guarantee compatibility with $\Ar'$, but are not interested in the converse.
    As we will see in the next section, there is such a notion, and \cref{prop:equiv-factorizations-cnd} becomes even more useful in tandem with it.
}
}

\section{Monotonicity and Undirected Graphical Models}
    \label{sec:monotone}
    \label{appendix:undirected PGMs}

\commentout{
 The next two sections build up some theory that is necessary to understand the implications of our definition of \scibility\ for cyclic models---which turn out to be rather complicated. 
 }
\commentout{
    There are simple rules for manipulating
    PDGs,
    which can be used as an axiomatic proof system. 
}
\commentout{
    Holding beliefs $p(X)$ and $p(Y|X)$, 
    for example, is observationally equivalent to holding a single belief $p(X,Y)$ or beliefs $p(Y)$ and $p(Y|X)$.
}
\vfull{%
Monotonicity of PDG inconsistency \citep{one-true-loss}
is a powerful reasoning principle. Many important inequalities (e.g., the data processing inequality, relationships between statistical distances, the evidence lower bound, \ldots) can be proved using only a simple inference rule: ``more beliefs can only increase inconsistency''.
In this section, we develop and apply an anlogous principle for \scibility.
But first, we start with something simple.
}
The fact that (quantitative) PDG inconsistency  is monotonic
is a powerful reasoning principle that can be used to prove many important inequalities \citep{one-true-loss}. 
In this section, we develop a related principle for \scibility.
\vfull{
One classical representation of knowledge is a list of formulas
$[ \phi_1, \phi_2, \ldots, \phi_n]$
that one knows to be true.
This representation has a nice property:
learning an additional formula $\phi_{n+1}$
can only narrow the set of worlds one considers possible.
The same is true of \scibility. 
}%
\vfull{%
\begin{prop}
        \label{prop:mono}
    If $\Ar \subseteq \Ar'$ and 
    $\mu \models \Diamond\Ar'$, then $\mu \models \Diamond\Ar$.
\end{prop}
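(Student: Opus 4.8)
The plan is to build a witness for $\mu \models \Diamond \Ar$ directly out of a witness for $\mu \models \Diamond \Ar'$, by simply forgetting the noise variables attached to the extra \arc s in $\Ar' \setminus \Ar$. Concretely, let $\bar\mu'(\mathcal Y \cup \X' \cup \U_{\Ar'})$ be a witness to $\mu \models \Diamond \Ar'$, which exists by hypothesis, where $\X'$ is the node set of $\Ar'$. Since $\Ar \subseteq \Ar'$, the nodes $\X$ of $\Ar$ satisfy $\X \subseteq \X'$, and the noise variables satisfy $\U_{\Ar} = \{U_a\}_{a \in \Ar} \subseteq \U_{\Ar'}$. I would then define $\bar\mu$ to be the marginal of $\bar\mu'$ on $\mathcal Y \cup \X \cup \U_{\Ar}$, and claim that $\bar\mu$ is a witness to $\mu \models \Diamond \Ar$.

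It then remains only to check the three defining conditions of \cref{defn:scompat} for $\bar\mu$. For (a), marginalization leaves the marginal on $\mathcal Y$ unchanged, so $\bar\mu(\mathcal Y) = \bar\mu'(\mathcal Y) = \mu(\mathcal Y)$. For (b), the variables $\U_{\Ar'}$ are mutually independent under $\bar\mu'$, and any subcollection of mutually independent variables is again mutually independent; since $\U_{\Ar} \subseteq \U_{\Ar'}$ and marginals are preserved, $\bar\mu(\U_{\Ar}) = \prod_{a \in \Ar} \bar\mu(U_a)$. For (c), fix any $a \in \Ar$; because $\Ar \subseteq \Ar'$, condition (c) for $\bar\mu'$ gives $\bar\mu' \models (\Src a, U_a) \tto \Tgt a$, and since $\Src a$, $U_a$, and $\Tgt a$ are all retained in $\bar\mu$, the same determination holds under $\bar\mu$.

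The one point that genuinely requires care — and the only ``obstacle,'' though a mild one — is the transfer of condition (c) under marginalization. The key observation is that the determination relation $A \tto B$ is a property of the joint marginal on $(A, B)$ alone: it asserts the existence of a function $f$ for which the event $f(A) = B$ has probability one, and this is unaffected by marginalizing away variables disjoint from $A \cup B$. I would state this explicitly (it is also implicit in \cref{lem:detprop}), after which all three conditions transfer immediately. Once these are verified, $\bar\mu$ witnesses $\mu \models \Diamond \Ar$, completing the argument.
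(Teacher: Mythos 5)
Your proof is correct and is exactly the formalization of the paper's own (informal) argument: the paper justifies this proposition by noting that a distribution consistent with a set of independent mechanisms is surely consistent with a subset of them, which is precisely your construction of restricting the witness to $\mathcal Y \cup \X \cup \U_{\Ar}$ and checking that conditions (a)--(c) survive marginalization. The one point you flag --- that determination $A \tto B$ depends only on the joint marginal of $A$ and $B$ and so is preserved when other variables are marginalized away --- is handled correctly.
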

}%
Here is a direct but not very useful analague: if $\Ar \subseteq \Ar'$ and $\mu \models \Diamond\Ar'$, conclude $\mu \models \Diamond\Ar$. 
After all, if 
$\mu$ is consistent with a set of independent causal mechanisms, then surely 
it is
consistent with a causal picture
in which
only a subset of those mechanisms 
are
present and independent.  
{%
There is a sense in which 
    BNs and MRFs are also monotonic,
    but in the opposite direction:
    adding edges to a graph results in a weaker
    independence statement.
    We will soon see why.
}%

\begin{wrapfigure}[5]{o}{1.6cm}
    \begin{tikzcd}[row sep=3ex,column sep=0.2em]
        A \ar[rr,"f"] &\ar[d,squiggly]& B\ar[d,hook,gray]  \\
        A'  \ar[u,hook,gray]\ar[rr, dashed]&{\vphantom{a}}& B'
    \end{tikzcd}
\end{wrapfigure}
Since we use \emph{directed} hypergraphs,
     there is actually a finer notion of monotonicity at play. 
Inputs and ouputs play opposite roles, 
    and they are naturally monotonic in opposite directions. 
If there is an obvious way to regard an element of $B$ as an element of $B'$ (abbreviated $B \,{\color{gray}\hookrightarrow}\, B'$),
and $A' \,{\color{gray}\hookrightarrow}\, A$, 
then
a function $f : A \to B$ can be regarded as one of type $A'\to B'$.
This is depicted to the right.
The same principile applies in our setting.
If $\mat X$ and $\mat Z$ are sets of variables and $\mat X \subseteq \mat Z$, then 
$\V(\mat Z) {\color{gray}\hookrightarrow} \V(\mat X)$, by restriction. 
It follows, for example, that
    any mechanism by which $X$ determines $(Y,Y')$ can be viewed
        as a mechanism by which $(X,X')$ determines $Y$. 
The general phenomenon is captured by the following
    \unskip.

\begin{defn}
        \label{defn:weakening}
    \commentout{%
        Consider a directed hypergraph
        $\Ar = \{ \Src a \to \Tgt a \}_{a \in \Ar}$.
        If another graph $\Ar'$ can be obtained by adding sources to and removing targets from the \arc s of $\Ar$---that is, if $\Ar' = \{ \Src a{\!}' \to \Tgt a{\!}' \}_{a \in \Ar}$, with $\Tgt a{\!}' \subseteq \Tgt a$ and
         $\Src a{\!}' \supseteq \Src a$ for all $a \in \Ar$---then we
         say $\Ar'$ is a \emph{weakening} of $\Ar$ and
         write $\Ar \rightsquigarrow \Ar'$.
     }%
     If $\Ar = \{ \ed aST  \}_{a}$, 
     $\Ar' = \{ \ed{a'}{S'}{T'} \}_{a'}$,
     and there is an injective map $\iota : \Ar' \!\to\! \Ar$
     such that
     $\Tgt {a}{\!}' \subseteq \Tgt {\iota(a)}$ and
     $\Src {a}{\!}' \supseteq \Src {\iota(a)}$
     for all $a \in \Ar'$,
     then $\Ar'$ is a \emph{weakening} of $\Ar$
     (written $\Ar {\rightsquigarrow} \Ar'$).
\end{defn}

\vfull{
\scibility\ is monotonic with respect to weakening  ($\rightsquigarrow$).}
    
\begin{prop}
        \label{prop:strong-mono}
    If
    $\Ar \rightsquigarrow \Ar'$ 
    and $\mu \models \Diamond \Ar$, then  $\mu \models \Diamond \Ar'$.
    \commentout{%
        and $\mu$ is 
        \scible\ with $\Ar$,
        then $\mu$ is also 
        \scible\ with $\Ar'$. 
    }%
\end{prop}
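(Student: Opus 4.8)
The plan is to take a witness for $\mu \models \Diamond \Ar$ and recycle it, essentially unchanged, as a witness for $\mu \models \Diamond \Ar'$, using the injection $\iota : \Ar' \to \Ar$ furnished by the weakening $\Ar \rightsquigarrow \Ar'$ (\cref{defn:weakening}). Concretely, suppose $\bar\mu(\mathcal Y \cup \X \cup \U_{\Ar})$ is a witness, with noise variables $\U_{\Ar} = \{U_b\}_{b \in \Ar}$. For each $a \in \Ar'$ I would reuse $U_{\iota(a)}$ as the noise variable for $a$, setting $U_a := U_{\iota(a)}$, and define $\bar\mu'$ to be the marginal of $\bar\mu$ on $\mathcal Y$, $\X$, and $\{U_{\iota(a)} : a \in \Ar'\}$. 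Since $\iota$ is injective, distinct arcs of $\Ar'$ are assigned distinct noise variables, so $\bar\mu'$ has exactly one noise variable per arc of $\Ar'$; any noise variable of $\Ar$ outside the image of $\iota$ is simply marginalized away.

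Conditions (a) and (b) of \cref{defn:scompat} are then immediate. Marginalization preserves the law of $\mathcal Y$, so $\bar\mu'(\mathcal Y) = \bar\mu(\mathcal Y) = \mu(\mathcal Y)$, giving (a). For (b), the family $\{U_a\}_{a \in \Ar'}$ is (after relabeling) a subfamily of $\{U_b\}_{b \in \Ar}$, which is mutually independent under $\bar\mu$; any subfamily of a mutually independent family is mutually independent, and independence is unaffected by marginalizing out the remaining variables, so (b) holds under $\bar\mu'$.

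The substance is in condition (c): for each $a \in \Ar'$ I must establish $\bar\mu' \models (\Src a, U_a) \tto \Tgt a$. From the original witness I have $\bar\mu \models (\Src{\iota(a)}, U_{\iota(a)}) \tto \Tgt{\iota(a)}$, and the weakening relation gives $\Src{\iota(a)} \subseteq \Src a$ and $\Tgt a \subseteq \Tgt{\iota(a)}$, together with $U_a = U_{\iota(a)}$. These two containments are exactly the two monotonicity directions of functional determination, and I would realize each as a projection composed via \cref{lem:detprop}(2). Enlarging the source: $(\Src a, U_a) \tto (\Src{\iota(a)}, U_{\iota(a)})$ holds by projection, so composing with $(\Src{\iota(a)}, U_{\iota(a)}) \tto \Tgt{\iota(a)}$ yields $(\Src a, U_a) \tto \Tgt{\iota(a)}$; shrinking the target: $\Tgt{\iota(a)} \tto \Tgt a$ holds by projection, so a second composition yields $(\Src a, U_a) \tto \Tgt a$. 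Because every variable named here lies in $\X \cup \{U_{\iota(a)}\}$, all of which $\bar\mu'$ retains, and marginalization does not alter the joint law of retained variables, each determination transfers verbatim from $\bar\mu$ to $\bar\mu'$.

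The main obstacle is nothing more than the careful bookkeeping of condition (c)---confirming that both enlarging sources and shrinking targets preserve determination, and that dropping the out-of-image noise variables disturbs neither the independence in (b) nor the determinations in (c). Once the projection-and-compose argument above is in place, together with the observation that subfamilies of independent variables remain independent, the three witness conditions for $\Ar'$ all hold, so $\bar\mu'$ witnesses $\mu \models \Diamond \Ar'$.
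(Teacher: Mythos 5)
Your proof is correct and takes the approach the paper intends: reuse the witness for $\Ar$, reindex the noise variables along the injection $\iota$, and verify condition (c) by enlarging sources and shrinking targets via projections composed through \cref{lem:detprop}. (The appendix states \cref{prop:strong-mono} without spelling out its proof, but the reasoning sketched around \cref{example:xy-cycle}---``a mechanism with no inputs is a special case of one that can depend on $Y$''---is exactly your argument, so there is nothing to compare beyond noting that your write-up supplies the missing details.)
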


\Cref{prop:strong-mono} is strictly stronger than the simple monotonicity mentioned at the beginning of the section,
because a \arc\ with no targets is vacuous, so
removing all targets of a \arc\ is equivalent to deleting it.
It also
explains why BNs and MRFs are arguably \emph{anti}-monotonic: adding $X \to Y$ to a graph $G$ means adding $X$ to the \emph{sources} the \arc\ whose target is $Y$, 
in $\Ar_G$.

As mentioned in the main body of the paper, 
the far more important consequence of this result is that it
helps us begin to understand what \scibility\ means for cyclic \hgraph s.
For the reader's convenience, we now restate the examples in the main text,
which are really about monotonicity..

\medskip
\textbf{\cref{example:xy-cycle}.}
    Every $\mu(X,Y)$ is
    \cible\ with
    \begin{tikzpicture}[center base]
        \node[dpadinline] (X) {$X$};
        \node[dpadinline,right=1.2em of X] (Y) {$Y$};
        \draw[arr1] (X) to[bend left=13] (Y);
        \draw[arr1] (Y) to[bend left=13] (X);
    \end{tikzpicture}.
    This is because this cycle is weaker than
    a \hgraph\ 
    that can already represent any distribution, i.e.,
    \begin{tikzpicture}[center base]
        \node[dpadinline] (X) {$X$};
        \node[dpadinline,right=1.0em of X] (Y) {$Y$};
        \draw[arr1] (X) to (Y);
        \draw[arr1,<-] (X) to +(-0.68,0);
    \end{tikzpicture}
    $~\rightsquigarrow~$
    \begin{tikzpicture}[center base]
        \node[dpadinline] (X) {$X$};
        \node[dpadinline,right=1.2em of X] (Y) {$Y$};
        \draw[arr1] (X) to[bend left=13] (Y);
        \draw[arr1] (Y) to[bend left=13] (X);
    \end{tikzpicture}~. \qedhere
\hfill$\triangle$.
\medskip

\settowidth{\cycleboxlen}{\usebox{\cyclebox}}
\begin{wrapfigure}[5]{o}{0.8\cycleboxlen}
    \vspace{-0.8em}
    \begin{tikzpicture}[center base, scale=0.8]
        \node[dpad0] (X) at (0:.8) {$X$};
        \node[dpad0] (Y) at (120:.8) {$Y$};
        \node[dpad0] (Z) at (-120:.8) {$Z$};
        \draw[arr2] (X) to 
            (Y);
        \draw[arr2] (Y) to
            (Z);
        \draw[arr2] (Z) to 
            (X);
    \end{tikzpicture}
\end{wrapfigure}
\textbf{\cref{example:xyz-cycle-1}.~} 
    What
    $\mu(X,Y,Z)$
    are \cible\ 
    with 
    the 3-cycle shown, on the right?
    By monotonicity,
    among them must be all distributions consistent with a linear chain ${\to}X{\to}Y{\to}Z$. Thus,  
        any distribution
    in which two variables are conditionally independent given the third
    is compatible with 
    the 3-cycle.
    Are there any distributions that are \emph{not} compatible with 
    this hypergraph? It is not obvious.
    We return to this
      in \cref{sec:pdgs}. 
\hfill$\triangle$

\begin{wrapfigure}[5]{i}
        {1.7cm}
    \vspace{-1em}
    \centering
    \begin{tikzpicture}[center base]
        \node[dpad0] (A) at (-0.6,0) {$A$};
        \node[dpad0] (B) at (0,1) {$B$};
        \node[dpad0] (C) at (0.6,0) {$C$};

        \draw[arr2,<-] (A) -- (B);
        \draw[arr2,<-] (C) -- (B);
        \mergearr[arr2] ACB
    \end{tikzpicture}
\end{wrapfigure}
Because \scibility\ applies to cyclic structures,  one might wonder if
    it also captures the independencies of undirected models 
    \unskip.
Undirected edges $A {-} B$ are commonly identified
with a (cylic) pair of directed edges $\{ A{\to}B, B{\to}A\}$,
as we have implicitly done in
defining $\Ar_G$. 
In this way, undirected graphs, too, naturally correspond to \dhygraph s.
For example, 
$G = A{-}B{-}C$
corresponds to the \hgraph\ 
\commentout{
\[
    \Ar_{G} = \Bigg\{
            \begin{array}{r@{}l} \{B\}&{\to}\{A\},\\ \{A,C\}&{\to}\{B\},\\ \{B\}&{\to}\{C\}
            \end{array}\Bigg\}
    =     \begin{tikzpicture}[center base]
        \node[dpad1] (A) at (-0.6,0) {$A$};
        \node[dpad1] (B) at (0,1) {$B$};
        \node[dpad1] (C) at (0.6,0) {$C$};

        \draw[arr2,<-] (A) -- (B);
        \draw[arr2,<-] (C) -- (B);
        \mergearr[arr2] ACB
    \end{tikzpicture}
    ~.
\]
}
$\Ar_G$ shown on the left.
Compatibility
with $\Ar_G$, however, does not coincide with any of the standard Markov properties
corresponding to $G$ \citep{koller2009probabilistic}.
This may appear to be a flaw in \cref{defn:scompat} (\scibility), but it is unavoidable.  
While both BNs and MRFs are monotonic, it is impossible to capture both classes with a monotonic definition.

\begin{theorem}\label{theorem:mrf-bn-monotone-impossible}
    It is possible to define a  relation  $\dotmodels$ 
        between distributions $\mu$ and \dhygraph s $\Ar$
    satisfying 
    any two, but not all three, of the
    following.
    \begin{description}[itemsep=0pt,parsep=0.3ex,topsep=0pt]
        \item [\rm(monotonicity)]
            If $\mu \dotmodels \Ar$
            and $\Ar \rightsquigarrow \Ar'$,
            then
            $\mu \dotmodels \Ar'$.
        \item [\rm(positive BN capture)]
            If $\mu$ satisfies the independencies
            $\mathcal I(G)$ of a dag $G$,
            then
            $\mu \dotmodels \Ar_G$.
        \item [\rm(negative MRF capture)]
            If $\mu \dotmodels \Ar_G$ for an undirected directed graph $G$,
            then $\mu$ has one of the Markov properties 
                with respect to $G$.
    \end{description}
\end{theorem}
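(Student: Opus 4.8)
The substance of the theorem is the impossibility of all three properties holding simultaneously; the three pairwise constructions are comparatively routine, so I would organize the proof as ``impossible together, each pair realizable'' and spend most of the effort on the former.

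For the impossibility, the plan is to pin down a single distribution that is driven into a contradiction by chaining the three properties. The natural witness is the parity distribution $\muxor$ over $\{X,Y,Z\}$ (with $X,Y$ independent fair bits and $Z = X\oplus Y$), paired with the collider dag $G_1 = (X \to Z \leftarrow Y)$ and the undirected path $G_2 = (X - Z - Y)$. First I would check that $\muxor$ satisfies $\mathcal I(G_1)$: the only nontrivial independence forced by the collider is the marginal $X \CI Y$, which holds since $X,Y$ are independent; hence positive BN capture gives $\muxor \dotmodels \Ar_{G_1}$. Next I would verify $\Ar_{G_1} \rightsquigarrow \Ar_{G_2}$. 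The two hypergraphs share the arc $\{X,Y\}\to\{Z\}$, while the path's arcs $\{Z\}\to\{X\}$ and $\{Z\}\to\{Y\}$ arise from the collider's source-free arcs $\emptyset\to\{X\}$ and $\emptyset\to\{Y\}$ by \emph{adding} the source $Z$; with $\Tgt{a'} = \Tgt{\iota(a')}$ and $\Src{a'}\supseteq\Src{\iota(a')}$ under the evident bijection $\iota$, this is exactly a weakening in the sense of \cref{defn:weakening}. Monotonicity then yields $\muxor \dotmodels \Ar_{G_2}$. Finally, negative MRF capture would force $\muxor$ to satisfy one of the Markov properties for $G_2$; but each of the pairwise, local, and global Markov properties for the path $X - Z - Y$ implies $X \CI Y \mid Z$, which $\muxor$ flatly violates (conditioned on $Z$, the variables $X$ and $Y$ are perfectly (anti)correlated). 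This contradiction shows no relation can satisfy all three.

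For the three pairwise realizations I would exhibit explicit relations. For \textbf{monotonicity} plus \textbf{positive BN capture}, take $\dotmodels\, :=\, \models\Diamond$ (QIM-compatibility itself): monotonicity is \cref{prop:strong-mono}, and positive BN capture is the easy ($\impliedby$) direction of \cref{theorem:bns}. For \textbf{monotonicity} plus \textbf{negative MRF capture}, take the empty relation ($\mu \not\dotmodels \Ar$ for all $\mu,\Ar$); both properties hold vacuously, since their hypotheses are never met. For \textbf{positive BN capture} plus \textbf{negative MRF capture}, define $\mu \dotmodels \Ar$ to hold iff either $\Ar = \Ar_G$ for some dag $G$ with $\mu \models \mathcal I(G)$, or $\Ar = \Ar_H$ for some undirected graph $H$ with $\mu$ globally Markov with respect to $H$. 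Positive BN capture is immediate from the first clause.

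The only real subtlety in the last construction is ruling out interference between the two clauses, and this is where I would place a short lemma: if $\Ar_H = \Ar_G$ for an undirected $H$ and a dag $G$, then $H$ is edgeless. Indeed, any edge $A - B$ of $H$ puts $B \in \Src$ of the arc targeting $A$ and $A \in \Src$ of the arc targeting $B$, so matching parent sets would force $B \in \Pa_G(A)$ and $A \in \Pa_G(B)$, a $2$-cycle impossible in a dag. For edgeless graphs both clauses reduce to full mutual independence of the variables, so they agree; hence whenever $\mu \dotmodels \Ar_H$ for undirected $H$, $\mu$ is globally Markov with respect to $H$, giving negative MRF capture. I expect the main obstacle to be the impossibility half, and within it the creative step of locating the collider/path pair together with $\muxor$ — the textbook mismatch between v-structure marginal independence and path conditional independence — after which every verification is mechanical; the overlap lemma above is the only fiddly point in the ``any two'' half.
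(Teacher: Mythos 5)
Your impossibility argument is essentially the paper's own proof: the same witness $\muxor$, the same collider dag whose hypergraph weakens (by adding the middle variable as a source to the two formerly source-free arcs) to the hypergraph of the undirected path, and the same contradiction with the conditional independence $X \CI Y \mid Z$ that every Markov property for the path entails. Both the choice of witness and each verification step match; only the variable labels differ.

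Where you go beyond the paper is the ``any two'' half: the paper's displayed proof addresses only the ``not all three'' direction and leaves the pairwise realizations implicit, whereas you exhibit all three explicitly. Your constructions are sound --- $\models\Diamond$ itself for monotonicity plus positive BN capture (via \cref{prop:strong-mono} and \cref{theorem:bns}), the empty relation for monotonicity plus negative MRF capture, and the two-clause disjunction for the remaining pair. The overlap lemma you isolate (that $\Ar_H = \Ar_G$ for undirected $H$ and dag $G$ forces $H$ edgeless, since any undirected edge would create a $2$-cycle in $G$ through the matched parent sets) is exactly the right point to check and is correctly argued; it guarantees the two clauses agree on their common domain, so negative MRF capture is not broken by distributions admitted through the BN clause. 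This added completeness is a genuine improvement in rigor over what the paper prints, at no cost in length.
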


\label{proof:mrf-bn-monotone-impossible}
The proof is
a direct and easy-to-visualize application
of monotonicity (\cref{prop:strong-mono}).
Assume montonicity and positive BN capture. 
Let $\muxor(A,B,C)$ be the joint distribution in which 
$A$ and $C$ are independent fair coins, and
$B = A \oplus C$ is their parity.
We then have:
\\$
\muxor
\models
\begin{tikzpicture}[center base,scale=0.8]
    \node[dpadinline] (A) at (-0.6,0) {$A$};
    \node[dpadinline] (B) at (0,1) {$B$};
    \node[dpadinline] (C) at (0.6,0) {$C$};

    \draw[arr2,<-] (A) -- +(0,1);
    \draw[arr2,<-] (C) -- +(0,1);
    \mergearr[arr2] ACB
\end{tikzpicture}
~~\rightsquigarrow~~
\begin{tikzpicture}[center base,scale=0.8]
    \node[dpadinline] (A) at (-0.6,0) {$A$};
    \node[dpadinline] (B) at (0,1) {$B$};
    \node[dpadinline] (C) at (0.6,0) {$C$};

    \draw[arr2,<-] (A) -- (B);
    \draw[arr2,<-] (C) -- (B);
    \mergearr[arr2] ACB
\end{tikzpicture}
~=~ \Ar_{A{-}B{-}C}.
$
~~
But
$\muxor \not\models A \CI C \mid B$.
\hfill\qedsymbol

We emphasize that \cref{theorem:mrf-bn-monotone-impossible} has implications for the qualitative semantics of \emph{any} graphical model (even if one were to reject the definition \scibility). 
We now look into the implications for some lesser-known graphical models, which may appear not to comply with \cref{theorem:mrf-bn-monotone-impossible}.

\paragraph{Dependecny Networks}
To readers familiar with \emph{dependency networks (DNs)} \citep{heckerman2000dependency},
\cref{theorem:mrf-bn-monotone-impossible} may raise some conceptual issues.
When $G$ is an undirected graph, $\Ar_G$ is the structure of a consistent DN.
The semantics of such a DN,
which intuitively describe an independent mechanism on each \arc,
coincide with the MRFs for $G$ (at least for positive distributions). 
In more detail, DN semantics are given by the fixed point of a markov chain that repeatedly generates independent samples along the \arc s of $\Ar_G$ for some (typically cyclic) directed graph $G$. The precise definition requires an order in which to do sampling. Although this choice doesn't matter for the ``consistent DNs'' that represent MRFs, it does in general. With a fixed sampling order, the DN is monotonic and captures MRFs, but can represent only BNs for which that order is a topological sort.

\vfull{
\begin{vnew}
\Cref{theorem:mrf-bn-monotone-impossible} shows that
    \scibility\ does not capture MRFs (at least, in the obvious way) at a purely observational level. 
Nevertheless,
    there is still a sense in which \scibility\ 
    captures MRFs \emph{causally}---that is, 
    if we \emph{intervene} instead of conditioning.

\begin{linked}{prop}{causal-mrf}
    Let $G$ be an undirected graph whose vertices correspond to variables $\X$. 
    \begin{enumerate}
        \item Let $\mu(\X)$ be a positive distribution (i.e., $\forall \mat x \in \V(\X).~\mu(\X{=}\mat x) > 0$).
        If $\mu \models \Diamond \Ar_G$, then 
        for every witness $\bar\mu$ and causal model $\mathcal M \in \PSEMs_{\Ar_G}(\bar\mu)$,
        whenever $\mat X, \mat Y, \mat Z \subseteq \X$ are such that  $\mat X \CI_G \mat Y \mid \mat Z$, 
        it is the case that
        $\bar\mu \models \mat X \CI \mat Y \mid \mathrm{do}_{\mathcal M}(\mat Z = \mat z)$.
        
        \item
        Convesely, there exists some distribution $\mu(\X)$
        If $\mu \not \models \Diamond \Ar_G$, then 
        
    \end{enumerate}
\end{linked}
\end{vnew}
}

\section{Information Theory, PDGs, and \SCibility}
\subsection{More Detailed Primer on Information Theory}
    \label{appendix:info-theory-primer}
We now expand on the fundemental information quantities introduced at the beginning of \cref{sec:info}.
Let $\mu$ be a probability distribution, and be $X,Y,Z$
    be (sets of) discrete random variables.
The \emph{entropy} of $X$ is the uncertainty in $X$, when it is distributed according to $\mu$, as measured by the number of bits of information needed (in expectation) needed to determine it, if the distribution $\mu$ is known.  It is given by 
\[
    \H_\mu(X) := \sum_{x \in \V(X)} \mu(X{=}x) \log \frac{1}{\mu(X{=}x)} \qquad= -\Ex_{\mu}[\log \mu(X)],
\]
and a few very important properties; chief among them, $\H_\mu(X)$ is non-negative, and equal to zero iff $X$ is a constant according to $\mu$. 
The ``joint entropy'' $\H(X,Y)$ is just the entropy of the combined variable $(X,Y)$ whose values are pairs $(x,y)$ for $x \in \V(X),y \in \V(Y)$; this is the same as the entropy of the variable $X \cup Y$ when $X$ and $Y$ are themselves sets of variables. 

The \emph{conditional entropy} of $Y$ given $X$
measures the uncertainty present in $Y$ if one knows the value of $X$
(think: the information in $Y$ but not $X$),
and is equivalently defined as any of the following three quantities:
\[
\H_\mu( Y | X) :=
        \quad
    \Ex_{\mu} [~\log \nicefrac1{\mu(Y | X)}~]
        \quad
    =\H_\mu(X,Y) - \H_\mu(X)
        \quad
    =\Ex_{x \sim \mu(X)} [~\H_{\mu \mid X{=}x}(Y)~]    
.
\]
The \emph{mutual information} $\I(X;Y)$,
and its conditional variant $\I(X;Y|Z)$, 
are given, respectively, by
\[
    \I_\mu(X;Y) :=
        \Ex_{\mu} \Big[ \log \frac{\mu(X,Y)}{\mu(X) \mu(Y)}\Big],
    \quad\text{and}\quad
    \I(X;Y|Z):= 
        \Ex_{\mu} \Big[ \log \frac{\mu(X,Y,Z)\mu(Z)}{\mu(X,Z) \mu(Y,Z)}\Big].
\]
The former is non-negative and equal to zero iff $\mu \models X \CI Y$, and the latter is non-negative and equal to zero iff $\mu \models X \CI Y \mid Z$. 
All of these quantities are purely ``structural'' or ``qualitative'' in the sense that they are invariant to relabelings of values, and 

Just as conditional entropy can be written as a linear combination of unconditional entropies, so too can conditional mutual information be written as a linear combination of unconditional mutual informations: $\I(X;Y|Z) = \I(X;(Y,Z)) - \I(X;Z)$.  
Thus conditional quantities are easily derived from the unconditional ones. But at the same time, the unconditional versions are clearly special cases of the conditional ones; for example, $\H_\mu(X)$ is clearly the special case of $\H(X|Z)$ when $Z$ is a constant (e.g., $Z = \emptyset$). 
Furthermore, entropy and mutual information are also interdefinable and generated by linear combinations of one another. 
It is easy to verify that 
$\I_\mu(X;Y) 
    = \H_\mu(X) + \H_\mu(Y) - \H(X,Y)
$
and 
$\I_\mu(X;Y|Z)
    = \H_\mu(X|Z) + \H_\mu(Y|Z) - \H(X,Y|Z)
$,
and thus mutual information is derived from entropy. 
Yet on the other hand, $\I_\mu(Y;Y) = \H_\mu(Y)$ and $\I_\mu(Y;Y|X) = \H_\mu(Y|X)$---thus entropy is a special case of mutual information. 
    
\subsection{Structural Deficiency: More Motivation, and Examples}
To build intuition for $\IDef$, which characterizes our bounds in \cref{sec:info},
	we now visualize the vector $\mat v_{\!\Ar}$ for various example hypergraphs. 

\relax%
	\definecolor{subfiglabelcolor}{RGB}{0,0,0}
	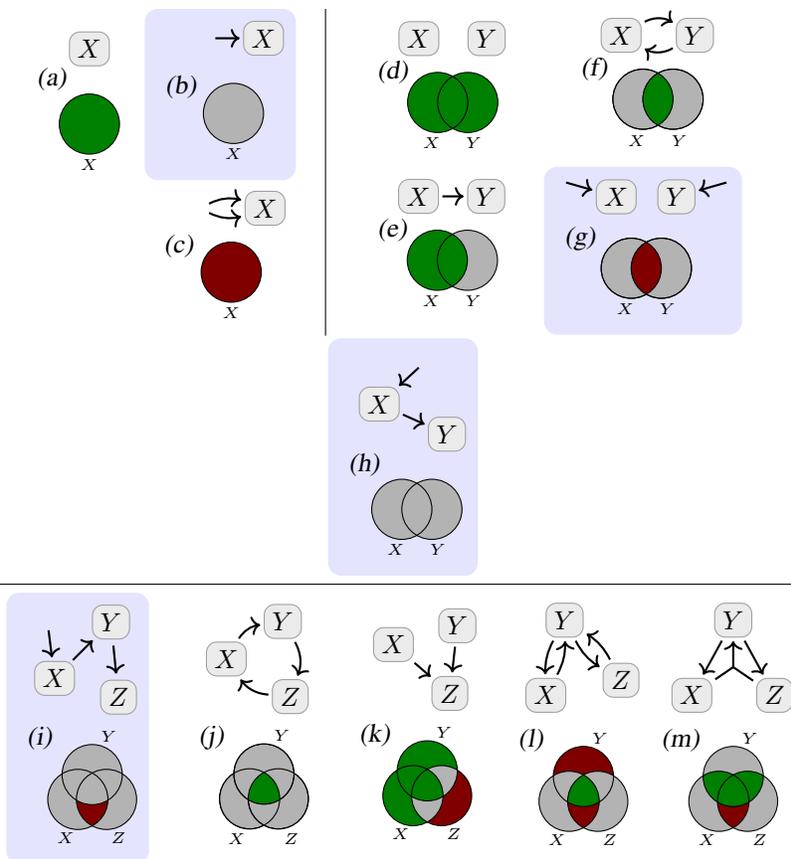
\begin{figure}
		\centering
	\def\vsize{0.4}
	\def\spacerlength{0.5em}
	\stepcounter{figure}
		\refstepcounter{subfigure}
		\begin{tikzpicture}\label{subfig:justX-0}
			\node[dpad0] (X) at (0,1){$X$};
			\draw[fill=green!50!black]  (0,0) circle (\vsize)  ++(-90:.22) node[label=below:\tiny$X$]{};
			\node at (-0.5, 0.6){\slshape\color{subfiglabelcolor}(\thesubfigure)};
		\end{tikzpicture}\!
	\begin{tabular}{c}
		\refstepcounter{subfigure}\label{subfig:justX-1}
		\begin{tikzpicture}[is bn]
			\node[] (1) at (-0.4, 1){};
			\node[dpad0] (X) at (0.4, 1){$X$};
			\draw[arr1] (1)  -- (X);
			\draw[fill=white!70!black]  (0,0) circle (\vsize) ++(-90:.22) node[label=below:\tiny$X$]{};
			\node at (-0.6,0.35){};
			\node at (-0.7, 0.35){\slshape\color{subfiglabelcolor}(\thesubfigure)};
		\end{tikzpicture}
		\\
		\refstepcounter{subfigure}\label{subfig:justX-2}
		\begin{tikzpicture}
			\node[] (1) at  (-0.45,.85){};
			\node[dpad0] (X) at  (0.45,.85){$X$};
			\draw[arr1] (1) to[bend left=20] (X);
			\draw[arr1] (1) to[bend right=20] (X);
			\draw[fill=red!50!black] (0,0) circle (\vsize) ++(-90:.22) node[label=below:\tiny$X$]{};
			\node at (-0.7, 0.35){\slshape\color{subfiglabelcolor}(\thesubfigure)};
		\end{tikzpicture}
	\end{tabular}%
	\hspace{\spacerlength}\vrule\hspace{\spacerlength}
		\begin{tabular}{c}
		\refstepcounter{subfigure}\label{subfig:justXY}
		\begin{tikzpicture}[]
			\node[dpad0] (X) at (-0.45,.85){$X$};
			\node[dpad0] (Y) at (0.45,.85){$Y$};
			\path[fill=green!50!black] (-0.2,0) circle (\vsize) ++(-110:.23) node[label=below:\tiny$X$]{};
			\path[fill=green!50!black] (0.2,0) circle (\vsize) ++(-70:.23) node[label=below:\tiny$Y$]{};
			\begin{scope}
				\clip (-0.2,0) circle (\vsize);
				\clip (0.2,0) circle (\vsize);
				\fill[green!50!black] (-1,-1) rectangle (3,3);
			\end{scope}
			\draw (-0.2,0) circle (\vsize);
			\draw (0.2,0) circle (\vsize);
			\node at (-0.8, 0.4){\slshape\color{subfiglabelcolor}(\thesubfigure)};
		\end{tikzpicture}\\[0.5em]
		\refstepcounter{subfigure}\label{subfig:XtoY}
		\begin{tikzpicture}[]
			\node[dpad0] (X) at (-0.45,0.85){$X$};
			\node[dpad0] (Y) at (0.45,0.85){$Y$};
			\draw[arr1] (X) to[] (Y);
			\path[fill=green!50!black] (-0.2,0) circle (\vsize) ++(-110:.23) node[label=below:\tiny$X$]{};
			\path[fill=white!70!black] (0.2,0) circle (\vsize) ++(-70:.23) node[label=below:\tiny$Y$]{};
			\begin{scope}
				\clip (-0.2,0) circle (\vsize);
				\clip (0.2,0) circle (\vsize);
				\fill[green!50!black] (-1,-1) rectangle (3,3);
			\end{scope}
			\draw (-0.2,0) circle (\vsize);
			\draw (0.2,0) circle (\vsize);
			\node at (-0.8, 0.4){\slshape\color{subfiglabelcolor}(\thesubfigure)};
		\end{tikzpicture}
	\end{tabular}%
	\begin{tabular}{c}
		\refstepcounter{subfigure}\label{subfig:XY-cycle}
		\begin{tikzpicture}[center base]
			\node[dpad0] (X) at (-0.49,0.85){$X$};
			\node[dpad0] (Y) at (0.49,0.85){$Y$};
			\draw[arr1] (X) to[bend left] (Y);
			\draw[arr1] (Y) to[bend left] (X);
			\draw[fill=white!70!black] (-0.2,0) circle (\vsize) ++(-110:.25) node[label=below:\tiny$X$]{};
			\draw[fill=white!70!black] (0.2,0) circle (\vsize) ++(-70:.25) node[label=below:\tiny$Y$]{};
			\begin{scope}
				\clip (-0.2,0) circle (\vsize);
				\clip (0.2,0) circle (\vsize);
				\fill[green!50!black] (-1,-1) rectangle (3,3);
			\end{scope}
			\draw (-0.2,0) circle (\vsize);
			\draw (0.2,0) circle (\vsize);
			\node at (-0.85, 0.4){\slshape\color{subfiglabelcolor}(\thesubfigure)};
		\end{tikzpicture}\\[2.5em]
	\refstepcounter{subfigure}\label{subfig:XYindep}
		\begin{tikzpicture}[center base, is bn]
			\node[dpad0] (X) at (-0.4,0.95){$X$};
			\node[dpad0] (Y) at (0.4,0.95){$Y$};
            \draw[arr1,<-] (X) -- +(-0.7,0.2);
            \draw[arr1,<-] (Y) -- +(0.7,0.2);
			\draw[fill=white!70!black] (-0.2,0) circle (\vsize) ++(-110:.23) node[label=below:\tiny$X$]{};
			\draw[fill=white!70!black] (0.2,0) circle (\vsize) ++(-70:.23) node[label=below:\tiny$Y$]{};
			\begin{scope}
				\clip (-0.2,0) circle (\vsize);
				\clip (0.2,0) circle (\vsize);
				\fill[red!50!black] (-1,-1) rectangle (3,3);
			\end{scope}
			\draw (-0.2,0) circle (\vsize);
			\draw (0.2,0) circle (\vsize);
			\node at (-0.88, 0.4){\slshape\color{subfiglabelcolor}(\thesubfigure)};
		\end{tikzpicture}
	\end{tabular}
	\hspace{\spacerlength}

		 \refstepcounter{subfigure}\label{subfig:1XY}
		\begin{tikzpicture}[center base, is bn]
			\node[] (1) at (0.15,2){};
			\node[dpad0] (X) at (-0.5,1.4){$X$};
			\node[dpad0] (Y) at (0.4,1){$Y$};
			\draw[arr0] (1) to[] (X);
			\draw[arr1] (X) to[] (Y);
			\path[fill=white!70!black] (-0.2,0) circle (\vsize) ++(-110:.23) node[label=below:\tiny$X$]{};
			\path[fill=white!70!black] (0.2,0) circle (\vsize) ++(-70:.23) node[label=below:\tiny$Y$]{};
			\begin{scope}
				\clip (-0.2,0) circle (\vsize);
				\clip (0.2,0) circle (\vsize);
			\end{scope}
			\draw (-0.2,0) circle (\vsize);
			\draw (0.2,0) circle (\vsize);
			\node at (-0.7, 0.6){\slshape\color{subfiglabelcolor}(\thesubfigure)};
		\end{tikzpicture}
    \\\smallskip\hrule\smallskip
		 \refstepcounter{subfigure}\label{subfig:1XYZ}
		\begin{tikzpicture}[center base,is bn]
			\node[] (1) at (-0.6,2.3){};
			\node[dpad0] (X) at (-0.5,1.5){$X$};
			\node[dpad0] (Y) at (0.25,2.25){$Y$};
			\node[dpad0] (Z) at (0.35,1.25){$Z$};
			\draw[arr1] (1) to (X);
			\draw[arr0] (X) to[] (Y);
			\draw[arr2] (Y) to[] (Z);
			\path[fill=white!70!black] (210:0.22) circle (\vsize) ++(-130:.25) node[label=below:\tiny$X$]{};
			\path[fill=white!70!black] (-30:0.22) circle (\vsize) ++(-50:.25) node[label=below:\tiny$Z$]{};
			\path[fill=white!70!black] (90:0.22) circle (\vsize) ++(40:.29) node[label=above:\tiny$Y$]{};
			\begin{scope}
                \clip (-30:0.22) circle (\vsize);
				\clip (210:0.22) circle (\vsize);
				\fill[red!50!black] (-1,-1) rectangle (3,3);
                \clip (90:0.22) circle (\vsize);
				\fill[white!70!black] (-1,-1) rectangle (3,3);
			\end{scope}
			\begin{scope}
				\draw[] (-30:0.22) circle (\vsize);
				\draw[] (210:0.22) circle (\vsize);
				\draw[] (90:0.22) circle (\vsize);
			\end{scope}
			\node at (-0.7, 0.7){\slshape\color{subfiglabelcolor}(\thesubfigure)};
		\end{tikzpicture}
		\hspace{3pt}
	\hspace{\spacerlength}%
		\refstepcounter{subfigure}\label{subfig:XYZ-cycle}
		\begin{tikzpicture}[center base]
			\node[dpad0] (X) at (-0.5,1.75){$X$};
			\node[dpad0] (Y) at (0.25,2.25){$Y$};
			\node[dpad0] (Z) at (0.35,1.25){$Z$};
			\draw[arr1] (X) to[bend left=25] (Y);
			\draw[arr1] (Y) to[bend left=25] (Z);
			\draw[arr1] (Z) to[bend left=25] (X);

			\draw[fill=white!70!black] (210:0.22) circle (\vsize) ++(-130:.27) node[label=below:\tiny$X$]{};
			\draw[fill=white!70!black] (-30:0.22) circle (\vsize) ++(-50:.27) node[label=below:\tiny$Z$]{};
			\draw[fill=white!70!black] (90:0.22) circle (\vsize) ++(40:.31) node[label=above:\tiny$Y$]{};

			\begin{scope}
				\clip (-30:0.22) circle (\vsize);
				\clip (210:0.22) circle (\vsize);
				\clip (90:0.22) circle (\vsize);
				\fill[green!50!black] (-1,-1) rectangle (3,3);
			\end{scope}
			\begin{scope}
				\draw[] (-30:0.22) circle (\vsize);
				\draw[] (210:0.22) circle (\vsize);
				\draw[] (90:0.22) circle (\vsize);
			\end{scope}
			\node at (-0.7, 0.7){\slshape\color{subfiglabelcolor}(\thesubfigure)};
		\end{tikzpicture}
	\hspace{3pt}
	\hspace{\spacerlength}%
		\refstepcounter{subfigure}\label{subfig:XZtoY}
		\begin{tikzpicture}[center base]
			\node[dpad0] (X) at (-0.45,1.9){$X$};
			\node[dpad0] (Z) at (0.3,1.25){$Z$};
			\node[dpad0] (Y) at (0.4,2.15){$Y$};
			\draw[arr0] (X) to[] (Z);
			\draw[arr1] (Y) to[] (Z);
			\path[fill=green!50!black] (210:0.22) circle (\vsize) ++(-130:.25) node[label=below:\tiny$X$]{};
			\path[fill=red!50!black] (-30:0.22) circle (\vsize) ++(-50:.25) node[label=below:\tiny$Z$]{};
			\path[fill=green!50!black] (90:0.22) circle (\vsize) ++(40:.29) node[label=above:\tiny$Y$]{};
			\begin{scope}
				\clip (-30:0.22) circle (\vsize);
				\clip (90:0.22) circle (\vsize);
				\fill[white!70!black] (-1,-1) rectangle (3,3);
			\end{scope}
			\begin{scope}
				\clip (-30:0.22) circle (\vsize);
				\clip (210:0.22) circle (\vsize);
				\fill[white!70!black] (-1,-1) rectangle (3,3);

				\clip (90:0.22) circle (\vsize);
				\fill[green!50!black] (-1,-1) rectangle (3,3);
			\end{scope}
			\draw[] (-30:0.22) circle (\vsize);
			\draw[] (210:0.22) circle (\vsize);
			\draw[] (90:0.22) circle (\vsize);
			\node at (-0.7, 0.7){\slshape\color{subfiglabelcolor}(\thesubfigure)};
		\end{tikzpicture}~
		\hspace{\spacerlength}%
			\refstepcounter{subfigure}\label{subfig:XYZ-bichain}
			\begin{tikzpicture}[center base]
				\node[dpad0] (X) at (-0.5,1.3){$X$};
				\node[dpad0] (Y) at (-0.25,2.3){$Y$};
				\node[dpad0] (Z) at (0.5,1.5){$Z$};
				\draw[arr1] (X) to[bend right=15] (Y);
				\draw[arr1] (Y) to[bend right=15] (X);
				\draw[arr1] (Y) to[bend right=15] (Z);
				\draw[arr1] (Z) to[bend right=15] (Y);
				\path[fill=white!70!black] (210:0.22) circle (\vsize) ++(-130:.25) node[label=below:\tiny$X$]{};
				\path[fill=white!70!black] (-30:0.22) circle (\vsize) ++(-50:.25) node[label=below:\tiny$Z$]{};
				\path[fill=red!50!black] (90:0.22) circle (\vsize) ++(40:.29) node[label=above:\tiny$Y$]{};
				\begin{scope}
					\clip (-30:0.22) circle (\vsize);
					\clip (90:0.22) circle (\vsize);
					\fill[white!70!black] (-1,-1) rectangle (3,3);
				\end{scope}
				\begin{scope}
					\clip (-30:0.22) circle (\vsize);
					\clip (210:0.22) circle (\vsize);
					\fill[red!50!black] (-1,-1) rectangle (3,3);
				\end{scope}
				\begin{scope}
					\clip (90:0.22) circle (\vsize);
					\clip (210:0.22) circle (\vsize);
					\fill[white!70!black] (-1,-1) rectangle (3,3);

					\clip (-30:0.22) circle (\vsize);
					\fill[green!50!black] (-1,-1) rectangle (3,3);
				\end{scope}
				\draw[] (-30:0.22) circle (\vsize);
				\draw[] (210:0.22) circle (\vsize);
				\draw[] (90:0.22) circle (\vsize);
				\node at (-0.7, 0.7){\slshape\color{subfiglabelcolor}(\thesubfigure)};
			\end{tikzpicture}
        \refstepcounter{subfigure}\label{subfig:X-Y-Z-undir}
        \begin{tikzpicture}[center base]
            \node[dpad0] (X) at (-0.55,1.3){$X$};
            \node[dpad0] (Y) at (0,2.3){$Y$};
            \node[dpad0] (Z) at (0.55,1.3){$Z$};
            \draw[arr1] (Y) to (X);
            \draw[arr1] (Y) to (Z);
            \mergearr[arr1] XZY
            \path[fill=white!70!black] (210:0.22) circle (\vsize) ++(-130:.25) node[label=below:\tiny$X$]{};
            \path[fill=white!70!black] (-30:0.22) circle (\vsize) ++(-50:.25) node[label=below:\tiny$Z$]{};
            \path[fill=white!70!black] (90:0.22) circle (\vsize) ++(40:.29) node[label=above:\tiny$Y$]{};
            \begin{scope}
                \clip (-30:0.22) circle (\vsize);
                \clip (90:0.22) circle (\vsize);
                \fill[white!70!black] (-1,-1) rectangle (3,3);
            \end{scope}
            \begin{scope}
                \clip (-30:0.22) circle (\vsize);
                \clip (210:0.22) circle (\vsize);
                \fill[red!50!black] (-1,-1) rectangle (3,3);
            \end{scope}
            \begin{scope}
                \clip (90:0.22) circle (\vsize);
                \clip (210:0.22) circle (\vsize);
                \fill[green!50!black] (-1,-1) rectangle (3,3);
            \end{scope}
            \begin{scope}
                \clip (-30:0.22) circle (\vsize);
                \clip (90:0.22) circle (\vsize);
                \fill[green!50!black] (-1,-1) rectangle (3,3);
            \end{scope}
            \draw[] (-30:0.22) circle (\vsize);
            \draw[] (210:0.22) circle (\vsize);
            \draw[] (90:0.22) circle (\vsize);
            \node at (-0.7, 0.7){\slshape\color{subfiglabelcolor}(\thesubfigure)};
        \end{tikzpicture}
	\addtocounter{figure}{-1} %
	\caption{
        Illustrations of the structural deficiency $\IDef_{\Ar}$ underneath
          drawn underneath their
		  associated hypergraphs $\{ G_i\}$. Each circle represents a
		  variable; an area in the intersection of circles $\{C_j\}$
		  but outside of circles $\{D_k\}$ corresponds to information
		  that is shared between all $C_j$'s, but not in any
		  $D_k$. Variation of a candidate distribution $\mu$ in a
		  green area makes its qualitative fit better (according to
		  $\IDef{}$), while variation in a red area makes its
		  qualitative fit worse; grey is neutral.
		  Only the boxed structures in blue,
		  whose $\IDef$ can be seen as measuring distance to a particular set of (conditional) independencies, are expressible as BNs.}
	\label{fig:info-diagram}
	\end{figure}
\begin{itemize}
	\item
Subfigures \ref{subfig:justX-0}, \ref{subfig:justX-1}, 
    and \ref{subfig:justX-2} show
    how adding \arc s makes distriutions more deterministic.
When $\Ar$ is the empty hypergraph, $\IDef$ reduces to negative entropy,
and so prefers distributions that are ``maximally uncertain''
 	(e.g., Subfigures \ref{subfig:justX-0} and \ref{subfig:justXY}).
For this empty
    but all distributions $\mu$ have negative $\IDef_{\Ar}(\mu) \le 0$.
In the definition of $\IDef$, each hyperarc $X\to Y$ is compiled to a ``cost'' $H(Y|X)$ 
    for uncertainty in $Y$ given $X$.
One can see this visually in \cref{fig:info-diagram}
    as a red crescent that's added to the information profile as we move 
    from \ref{subfig:justXY} to \ref{subfig:XtoY} to \ref{subfig:XY-cycle}.

\item
Some hypergraphs (see \Cref{subfig:justX-1,subfig:1XY}) are \emph{indiscriminate}, in the sense that every distribution gets the same score
(of zero, because a point mass $\delta$ always has $\SDef_{\Ar}(\delta) = 0$).
Such a graph has a structure such that \emph{any} distribution can be precisely encoded by the process in (b).
As shown here and also in \citet{pdg-aaai}, $\IDef$ can also indicate independencies and conditional independencies, illustrated respectively in Subfigures \ref{subfig:XYindep} and \ref{subfig:1XYZ}.

\item
For more complex structures, structural information deficiency $\IDef$ 
can represent more than independence and dependence. 
The cyclic structures in \cref{example:xy-cycle,example:xyz-cycle-1}, 
    correspond to the structural deficiencies pictured
    in Subfigures \ref{subfig:XY-cycle} and \ref{subfig:XYZ-cycle}, respectively,
    which are functions that encourage shared information between the three variables. 

\end{itemize}

\vfull{%
\subsection{Weights for SIM-Inc}
Given $\Ar$ and $|\Ar|+2$ positive weights $\boldsymbol\lambda = (\lambda^{\text{(a)}}, \lambda^{\text{(b)}}, \{\lambda^{\text{(c)}}_a\}_{a \in \Ar})$, define the function

\begin{align*}
    \SIMInc_{\Ar, \boldsymbol\lambda}(\mu) :=
    & \inf_{\nu(\U,\X)}
    \left\{
    \begin{array}{rl}
    &\lambda^{\text{(a)}}
     \kldiv[\big]{ \nu(\X) }{\mu(\X)}\\[1ex]
    + & \displaystyle
        \lambda^{\text{(b)}} \Big( - \H_\nu (\mathcal U) + 
    \sum_{a \in \Ar} \H_\nu (U_a)
         \Big) \\
    + & \displaystyle
        \sum_{\smash{a \in \Ar}}
        \lambda_a^{\text{(c)}} \H_\nu(\Tgt a | \Src a , U_a)
    \end{array}\right.
    .  \numberthis\label{eq:siminc-weighted}
\end{align*}

When $\boldsymbol\lambda = (\infty, 1, \mat 1)$, we get the analogous 
quantity defined in \eqref{eq:siminc} in the main text.

Here are some analogous results for this generalized version with weights. For a weighted hypergraph $(\Ar, \balpha)$, here is a strengthening of \cref{theorem:siminc-idef-bounds}, and the appropriate translateion of the hypergraph.
Given $(\Ar, \balpha)$ translate it to a new derandomized hypergraph 
$(\Ar, \balpha)^\dagger$ by replacing each weighted hyperarc 
\[
\begin{tikzpicture}[center base]
    \node[dpad0] (S) at (0,0) {$\Src a$};
    \node[dpad0] (T) at (1.6,0) {$\Tgt a$};
    \draw[arr2] (S) to node[above,pos=0.4]{$a$}
        node[below,pos=0.4,inner sep=1pt]{\color{gray}$\scriptstyle(\alpha_a)$}
        (T);
\end{tikzpicture}
\quad 
\text{with the pair of weighted \hyperarc s}
\quad
\begin{tikzpicture}
        [center base]
    \node[dpad0] (S) at (0,0) {$\Src a$};
    \node[dpad0] (T) at (1.4,0) {$\Tgt a$};
    \node[dpad0] (U) at (0.6,0.7) {$U_a$};
    \draw[arr,<-] (U) to node[above,pos=0.6]
        {$a_0$}
        node[below,pos=0.6,inner sep=1pt]{\color{gray}$\scriptstyle(\alpha_a)$}
         +(-1.2,0);
    \mergearr[arr1] SUT
    \node[below=2pt of center-SUT,xshift=-0.2em] (a1)
        {$a_1$};
    \node[below=0pt of a1, inner sep=1pt] {\color{gray}$\scriptstyle(\alpha_a)$};
\end{tikzpicture}~.
\]

\TODO
}%

\subsection{Counter-Examples to the Converse of Theorem \ref{theorem:sdef-le0}}
    \label{appendix:converse-sdef-le0}

In light of 
\cref{example:ditrichotomy}
and its connections to 
$\IDef$
through \cref{theorem:sdef-le0},
one might hope this criterion is not just a bound, but
    a precise characterization of the distributions that are \scible\ 
    with the 3-cycle. 
Unfortunately, it does not, and the converse of \cref{theorem:sdef-le0} is false.

\begin{example}
    Suppose $\mu(X,Y,Z) = \mathrm{Unif}(X,Z) \delta \mathrm{id} (Y|X)$ and $\Ar = \{\to X, \to Y\}$,
    where all variables are binary.
    Then $\IDef_{\!\Ar}(\mu) \,{=}\, 0$, but $X$ and $Y$ are not independent.
    \qedhere
\end{example}
Here is another counter-example, of a very different kind.
\begin{example}
    Suppose  $A, B, C$ are binary variables.
    It can be shown by enumeration (see appendix) that
    no distribution supported on seven of the eight
    possible joint settings of of $\V(A,B,C)$ can be
    \scible\ with the 3-cycle $\Ar_{3\circ}$. Yet it is easy
    to find examples of such distributions $\mu$ that have positive
    interaction information $\I(A;B;C)$,
    and thus $\IDef_{\mu}(\Ar_{3\circ}) \le 0$ for such distributions.
\end{example}

\section{\SCibility\ Constructions and Counterexamples}
    \label{sec:func-counterexamples}

We now give a counterexample to a 
simpler previously conjectured strengthening of \cref{theorem:func},
in which part (a) is an if-and-only-if. 
This may be surprising.
In the unconditional case, it is true that, two arcs $\{ \ed1{}X, \ed2{}X \}$ precisely encode that $X$ is a constant, as illustrated by \cref{example:two-edge-det}.
The following, slightly more general result, 
    is an immediate correlary of \cref{theorem:func}(c).

\begin{prop}
    $\mu \models \Diamond \Ar \sqcup \{ \ed1{}X, \ed2{}X \}$ if and only if $\mu \models \Diamond \Ar$ and
    $\mu \models \emptyset \tto X$. 
\end{prop}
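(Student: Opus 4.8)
The plan is to recognize this as the special case of \cref{theorem:func}(c) in which the two parallel \hyperarc s have empty source and common target $X$: taking the ``source'' of the added arcs to be $\emptyset$ and the ``target'' to be the single variable $X$, the doubly-augmented hypergraph becomes exactly $\Ar \sqcup \{\ed{1}{}{X}, \ed{2}{}{X}\}$, and the condition ``$X \tto Y$'' becomes ``$\emptyset \tto X$'', i.e.\ the assertion that $X$ is almost surely constant. The side hypothesis of \cref{theorem:func}(c)---that the source of the new arcs be unconditionally generated---holds vacuously, since the empty source carries no variables to determine. Because the underlying argument is short and makes the role of each clause of \cref{defn:scompat} transparent, I would reprove it directly rather than only invoking func(c).

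For the ($\Leftarrow$) direction, I would assume $\mu \models \Diamond \Ar$ and $\mu \models \emptyset \tto X$, fix any witness $\bar\mu$ for $\mu \models \Diamond \Ar$, and extend it by two fresh noise variables $U_1, U_2$, each taking a single value. Such trivial variables are independent of everything, so condition (b) of \cref{defn:scompat} is preserved, and condition (a) and the clause (c) for the arcs of $\Ar$ are untouched. For each new arc the source is empty, so (c) asks only that $U_i \tto X$; but $\mu \models \emptyset \tto X$ makes $X$ a.s.\ constant, and a constant is determined by any variable, so $U_i \tto X$ holds trivially. Hence the extended $\bar\mu$ witnesses $\mu \models \Diamond(\Ar \sqcup \{\ed{1}{}{X}, \ed{2}{}{X}\})$.

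The ($\Rightarrow$) direction is the crux. Assuming $\mu \models \Diamond(\Ar \sqcup \{\ed{1}{}{X}, \ed{2}{}{X}\})$ with witness $\bar\mu$, I would first get $\mu \models \Diamond \Ar$ immediately from monotonicity (\cref{prop:strong-mono}), since deleting the two arcs is a weakening. For $\mu \models \emptyset \tto X$, I would apply clause (c) to the two new (empty-source) arcs to obtain functions with $X = f_1(U_1)$ and $X = f_2(U_2)$ almost surely, and clause (b) to obtain $U_1 \CI U_2$. By \cref{lem:indep-fun}, $f_1(U_1)$ and $f_2(U_2)$ are then independent; since both equal $X$ a.s., this yields $X \CI X$, whence $\H_\mu(X) = \I_\mu(X;X) = 0$ and $X$ is constant. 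This is precisely the reasoning behind \cref{example:two-edge-det}, now run against an arbitrary background $\Ar$.

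The only real content lies in this constancy step; everything else is bookkeeping with \cref{defn:scompat}. The main obstacle is ensuring that the two mechanisms genuinely collapse $X$, which is exactly where the mutual independence of the noise variables (condition (b)) is indispensable and is exploited through \cref{lem:indep-fun}. I would stress that plain \emph{marginal} independence of $U_1$ and $U_2$ already suffices---no conditional independence given $X$ (or anything else) is needed---which is why two arcs are enough in this unconditional setting, in contrast to the cyclic counterexamples of \cref{sec:func-counterexamples} where no fixed finite number of arcs forces a functional dependence.
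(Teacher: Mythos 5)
Your proof is correct, but it reaches the crux by a different route than the paper. The paper derives this proposition as a corollary of Theorem~\ref{theorem:func}(c), whose reverse direction is proved information-theoretically: weaken the augmented hypergraph down to the two parallel arcs plus an unconditional mechanism for their source, compute $\IDef$ of the result, and invoke Theorem~\ref{theorem:sdef-le0} to force the relevant conditional entropy to vanish. You instead prove the constancy of $X$ directly and probabilistically: conditions (b) and (c) give $X = f_1(U_1) = f_2(U_2)$ a.s.\ with $U_1 \CI U_2$, Lemma~\ref{lem:indep-fun} gives $f_1(U_1) \CI f_2(U_2)$, hence $X \CI X$ and $\Pr(X{=}x) = \Pr(X{=}x)^2$ for every $x$. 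This is exactly the "two independent coins" intuition the paper offers for Example~\ref{example:two-edge-det}, formalized and run against an arbitrary background $\Ar$; it is more elementary and self-contained, while the paper's route buys uniformity with the rest of Theorem~\ref{theorem:func} (parts (a) and (b) genuinely need the $\IDef$ machinery). One small imprecision: the side hypothesis of Theorem~\ref{theorem:func}(c) does not "hold vacuously" when the new arcs have empty source --- read literally it still demands some arc of $\Ar$ with empty source --- rather, it becomes \emph{unnecessary}, because the weakened hypergraph $\{\,{\to}X, {\to}X\,\}$ has node set $\{X\}$ and so $\IDef = -\H(X) + 2\H(X) = \H(X)$ without any help from $\Ar$. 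Since you reprove the statement directly, this does not affect your argument.
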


One can be forgiven for imagining that the conditional case would be analogous---that \scibility\ with a \hgraph\ that has two parallel arcs from $X$ to $Y$ would imply that $Y$ is a function of $X$. But this is not the case.
Furthermore, our counterexample also shows that 
neither of the two properties we consider in the main text
    (requiring that $\Ar$ is \partl, or that the \scibility\ with $\mu$ is even) 
    are enough to ensure this. 
That is, there are \partl\ graphs $\Ar$ such that $\mu \emodels \Ar$ but $\mu \not\models \Diamond \Ar \sqcup \{ \ed1XY, \ed2XY \}$.

\begin{example}
        \label{ex:counterexample-func-simple}
    We will construct a witness of SIM-compatibility
    for the \hgraph\ 
    \[
        \Ar := 
        \begin{tikzpicture}[center base]
            \node[dpad0] (X) {$X$};
            \node[dpad0] (Y) at (2,0) {$Y$};
            \draw[arr1] (X) to[bend left=60,looseness=1.2] node[below]{$\vphantom{|}\smash{\scriptstyle\vdots}$} 
                node[above]{\small 1} (Y);
            \draw[arr1] (X) to[bend left=20,looseness=1.2] node[below]{\small $n$} (Y);
            \draw[arr1] (Y) to [bend left=40] node[below]{\small 0}(X);
        \end{tikzpicture},
    \]  
    in which $Y$ is \emph{not} a function of $X$, 
    which for $n=3$ will disprove the analogue of \cref{theorem:func} for the \partl\ context $\Ar'$ equal to the 2-cycle. 
    
    Let  $\U = (U_0, U_1, \ldots, U_n)$ be a vector of $n$ mutually independent random coins, and $A$ is one more independent random coin. 
    For notational convenience, define the random vector $\mat U := (U_0, \ldots, U_n)$ consisting of all variables $U_i$ except for $U_0$. 
    Then, define variables $X$ and $Y$ according to:
    \begin{align*}
        X &:= 
            ( A \oplus U_1 , \ldots,  A \oplus U_n ,  ~~ U_0 \oplus U_1, U_0 \oplus U_2, \ldots,  U_0 \oplus U_n ) 
            \\
            &= (A \oplus \mat U,~~ U_0 \oplus \mat U)
            \\
        Y &:= (A, U_0 \oplus \mat U ) = (A, ~U_0 \oplus U_1, U_0 \oplus U_2, \ldots,  U_0 \oplus U_n ) 
            ,
    \end{align*}    
    where and the operation $Z \oplus \mat V$ is element-wise xor (or addition in $\mathbb F_2^n$), after implicitly converting the scalar $Z$ to a vector by taking $n$ copies of it. Call the resulting distribution 
    $\nu(X, Y, \U)$. 
    
    It we now show that $\nu$ witnesses that its marginal on $X,Y$ is 
        \scible\ with $\Ar$, which is straightforward.
    \begin{enumerate}[label=(\alph*), start=2]
        \item $\U$ are mutually independent by assumption;
        \item[(c.0)] $Y = (A, \mat B)$ and $U_0$ determine $X$ according to:
        \begin{align*}
            g(A, \mat B, U_0) &= (A \oplus U_0 \oplus \mat B,\; \mat B) \\
                &= (A \oplus U_0 \oplus U_0 \oplus \mat U,\;  U_0 \oplus \mat U) & \text{since } \mat B =U_0 \oplus \mat U \\
                &= (A \oplus \mat U, U_0 \oplus \mat U) = X
        \end{align*}
        \item[(c.1--$n$)] for $i \in \{1, \ldots, n\}$, $U_i$ and $X = (\mat V, \mat B)$ together determine $Y$ according to
        \begin{align*}
            f_i(\mat V, \mat B, U_i) := (V_i \oplus U_i,\; \mat B) 
                = (A \oplus U_i \oplus U_i,\; U_0 \oplus \mat U) = Y.
        \end{align*}
    \end{enumerate}
    In addition, this distribution $\nu(\U, X,Y)$ satisfies condition
    \begin{enumerate}[label=(\alph*), start=4]
    \item $\nu(X,Y \mid \U) = \frac{1}{2} \mathbbm1[ g(Y, U_0) = X] \prod_{i = 1}^n \mathbbm1[ f_i(X, U_i) = Y]$,
    since, for all joint settings of $\U$, there are two possible values of $(X,Y)$, corresponding to the two values of $A$, and both happen with probability $\frac12$. 
    \end{enumerate}
    
    Thus, we have constructed a distribution that witnessing the fact that $\mu(X,Y) \emodels \Ar$.  
    
    Yet, observe that $X$ alone does not determine $Y$ in this distribution, because $X$ alone is not enough to determine $A$ (without also knowing some $U_i$). 
        
    For those who are interested, 
    observe that the bound of \cref{theorem:sdef-le0} tells that
    we must satisfy 
    \begin{align*}
    0 \ge
        \IDef_{\!\Ar}(\mu)
        &= - \H_\mu(X,Y) + n \H_\mu(Y \mid  X) + \H_\mu(X \mid Y) \\
        &= - \I_\mu(X;Y) + (n-1) \H_\mu(Y \mid X)  
    \end{align*}
    Indeed, this distribution has information profile 
    \[
        \H(X \mid Y) = 1\,\text{bit},\qquad
        \I(X ; Y) = n\,\text{bits},\qquad
        \H(Y \mid X) = 1\,\text{bit},
    \]
    and so $\IDef_{\!\Ar}(\mu) = -1\,\text{bit}$. 
    Intuitively, this one missing bit corresponds to the value of $A$ that is not determined by the structure of $\Ar$.
\end{example}

\section{From Causal Models to Witnesses}
    \label{appendix:sem2witness}
    
We now return to the ``easy'' direction of the correspondence between \scibility\ witnesses and causal models, mentioned at the beginning of \cref{sec:witness-to-causal-model}.
Given a (generalized) randomized PSEM $\cal M$, we now show that distributions $\nu \in \SD{\cal M}$, are \scibility\ witness showing that the marginals of $\nu$ are \scible\ with the hypergraph $\Ar_{\cal M}$. 
More formally:

\commentout{
\begin{prop}{causal-model-structure-capture}
    If $M$ is a SEM, then
    $\mu(\enV,\U) \models \Diamond \Ar_{M}$ 
        iff
        there exists $M'$ such that $\Pa_M = \Pa_{M'}$ and
        $\mu \in \SD{M'}$.
\end{linked}
}

\begin{prop}
    If $\mathcal M = (M
        {=} (\U,\enV,\mathcal F)
        , P)$ is a randomized PSEM, then
    every $\nu \in \SD{\mathcal M}$ 
    witnesses
    the \scibility\ of its marginal on its exogenous variables, with the dependency structure of $\mathcal M$.
    That is,
    for all $\nu \in \SD{\cal M}$ 
    and $\mathcal Y \subseteq \U \cup \V$, 
    $\nu(\mathcal Y) \models \Diamond \Ar_{\cal M}$. 
\end{prop}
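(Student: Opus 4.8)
The plan is to verify that $\nu$ itself, restricted to the relevant variables, is already a witness in the sense of \cref{defn:scompat}. Recall that the dependency structure $\Ar_{\mathcal M}$ of a randomized PSEM has exactly one \hyperarc\ $\ed{Y}{\Src Y}{\{Y\}}$ per endogenous variable $Y \in \enV$, whose per-arc noise variable is the dedicated $U_Y$, and that by definition $\Pa_{\mathcal M}(Y) \subseteq \Src Y \cup \{U_Y\}$. Consequently the node set $\X := \enV \cup \bigcup_{Y \in \enV} \Src Y$ of $\Ar_{\mathcal M}$ and its noise variables $\U_{\Ar_{\mathcal M}} = \{U_Y\}_{Y \in \enV}$ both lie inside $\U \cup \enV$, the domain of $\nu$. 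So, fixing any $\nu \in \SD{\mathcal M}$ and any $\mathcal Y \subseteq \U \cup \enV$, I would take the candidate witness $\bar\mu$ to be the marginal of $\nu$ on $\mathcal Y \cup \X \cup \U_{\Ar_{\mathcal M}}$, which is well-defined precisely because this set is a subset of $\U \cup \enV$.

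It then remains to check conditions (a)--(c) of \cref{defn:scompat}. Condition (a), that $\bar\mu(\mathcal Y) = \nu(\mathcal Y)$, is immediate, since $\bar\mu$ is a marginal of $\nu$ whose domain contains $\mathcal Y$. For condition (b), membership $\nu \in \SD{\mathcal M}$ forces $\nu(\U) = P$, and $P$ makes the dedicated noise variables $\{U_Y\}_{Y \in \enV}$ mutually independent by the definition of a randomized PSEM; mutual independence is preserved when we marginalize onto $\{U_Y\}_{Y \in \enV}$, so it holds under $\bar\mu$ as well. For condition (c), membership in $\SD{\mathcal M}$ also forces each equation $f_Y$ to hold $\nu$-almost surely, so $Y = f_Y(\Pa_{\mathcal M}(Y))$ with probability $1$; since $\Pa_{\mathcal M}(Y) \subseteq \Src Y \cup \{U_Y\}$, this exhibits $Y$ as a function of $(\Src Y, U_Y)$ holding with probability $1$, i.e.\ $\nu \models (\Src Y, U_Y) \tto Y$, and a determination relation witnessed by a fixed function is inherited by any marginal containing the variables involved. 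Hence $\bar\mu \models (\Src Y, U_Y) \tto Y$ for every $Y$.

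Together these three checks give $\nu(\mathcal Y) \models \Diamond \Ar_{\mathcal M}$ for all $\nu \in \SD{\mathcal M}$ and all $\mathcal Y \subseteq \U \cup \enV$. The identical argument handles a generalized randomized PSEM: one simply indexes the \hyperarc s and their noise variables by $a \in \Ar$ rather than by $\enV$, and verifies (c) arc-by-arc using that each $f_a$ holds $\nu$-almost surely. There is no substantive mathematical difficulty here; the proof is really a dictionary translating the two defining constraints of $\SD{\mathcal M}$ (equations a.s., exogenous marginal equal to $P$) together with the randomization hypothesis into the three witness conditions. The only point demanding genuine care is the bookkeeping in the first paragraph: confirming that the node set and per-arc noise variables of $\Ar_{\mathcal M}$ all reside in $\U \cup \enV$, so that the marginal of $\nu$ really is a distribution over $\mathcal Y \cup \X \cup \U_{\Ar_{\mathcal M}}$, and that the functional-determination relations survive marginalization.
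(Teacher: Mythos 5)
Your proof is correct and follows essentially the same route as the paper's: take (a marginal of) $\nu$ itself as the witness and check conditions (a)--(c) of \cref{defn:scompat} directly, with (a) immediate, (b) from $\nu(\U)=P$ plus the mutual independence of the dedicated noise variables in a randomized PSEM, and (c) from the equations holding $\nu$-almost surely together with $\Pa_{\mathcal M}(Y)\subseteq \Src Y\cup\{U_Y\}$. The only difference is that you spell out the variable bookkeeping that the paper's three-sentence argument leaves implicit.
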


The proof is straightforward: by definition, if $\nu \in \SD{\cal M}$, 
then it must satisfy the equations, and so automatically fulfills condition (c). 
Condition (a) is also satisfied trivially, by assumption: the distribution we're considering is defined to be a marginal of $\nu$.
Finally, (b) is also satisfied by construction: we assumed that $\U_{\Ar} = \{ U_a \}_{a \in \Ar}$ are independent.

\commentout{
\section{Combining Qualitative Dependency Graphs}
    \label{sec:union}
    
In a BN or an MRF, the choice of whether
or not to place an edge between $A$ and $B$
does not just depend on the relationship between $A$ and $B$---it also
    depends on what other variables are present in the graph.
For example, suppose $X$ is a variable that mediates
    the interaction between $A$ and $B$.

In a ``coarse'' model that does not contain $X$, it would be appropriate
to include an edge $A{\to}B$, but in a ``fine'' model that does, that edge
would be in appropriate, and instead one should include edges $A{\to}X{\to}B$.
Although the situation is similar for our case, the reasons
for this modeling choice are more transparent: if the model already contains $A{\to}X{\to}B$,
one should not also include the \arc\ $A{\to}B$ because it is not independent of the two mechanisms already present.    

[...] It follows that notion of \scibility\ cannot be
defined inductively over unions of independent mechanisms. 
}

\vfull{%

\section{An Algorithm for Finding Witnesses: The Null Value Construction}
    \label{sec:null}

We have now built up a body of examples, but it is still not clear
    how to compute \scibility.
In other words, it is still not clear how to solve the decision problem: given $\mu$ and $\Ar$, determine whether or not $\mu \models \Diamond \Ar$.
In this section, we discuss one approach to this problem. 

If you start with a distribution $\nu(\X)$,
it's not at all obvious how to extend it with a conditional distribution
$\nu(\U|\X)$ such that the variables $\U$ are \emph{unconditionally}
dependent, given that they cannot be independent of $\X$.
It seems that the only way to ensure this unconditional independence
is to start with a distribution $\nu(\U) = \prod_{a} \nu(U_a)$
and then figure out how to extend it to the variables $\X$.

To begin, for each $a \in \Ar$, take $U_a$ to be a response variable,
taking values $\V(U_a) = (\V \Tgt a) ^{(\V \Src a)}$, just 
as in \cref{sec:causal}.
But now we run into a problem: without carefully selecting the
supports of the distributions over $\U$, it is entirely possible
that there will be some joint setting $\mat u \in \V \U$
occurs with positive problability, but represents a collection
    of functions that has no fixed point.  
For example, take the graph

\begin{example}[5, continued]
    By randomly selecting distributions $\Pr(U_1),
    \Pr(U_2)$, and $\Pr(U_3)$ (see \cref{sec:null}), one finds that the set of distributions that are consistent with this 3-cycle has larger dimension than the set of distributions that factorize according to $\Pr(X,Y,Z) \propto \phi_1(X,Y) \phi_2(Y,Z) \phi_3(Z,X)$.%
        \footnote{see appendix for details.}
    Thus, our definition does not coincide with the corresponding factor graph.
\end{example}

\begin{conj}
    If $\mu_0 \models \Diamond \Ar$, and $\mu'$ lies on the path $\mu(t)$ of
    gradient flow minimizing $\IDef_{\!\Ar}(\mu')$, starting at $\mu(0) = \mu_0$, then $\mu' \models \Diamond \Ar$.
\end{conj}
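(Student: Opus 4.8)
The plan is to recast compatibility as the image of a ``witness variety'' under marginalization, and then try to lift the $\IDef$-gradient flow to that variety. Write $\pi\colon \Delta\V(\X,\U) \to \Delta\V(\X)$ for the marginalization $\nu \mapsto \nu(\X)$, where $\U = \{U_a\}_{a\in\Ar}$. By \cref{theorem:siminc-idef-bounds}, $\SIMInc_{\Ar}(\mu) = \min\{\IDef_{\!\Ar^\dagger}(\nu) : \pi(\nu) = \mu\}$, and collapsing the two cancelling entropy terms of $\Ar^\dagger$ gives $\IDef_{\!\Ar^\dagger}(\nu) = \big(\sum_a \H_\nu(U_a) - \H_\nu(\U)\big) + \sum_a \H_\nu(\Tgt a \mid \Src a, U_a)$, a sum of a total correlation and conditional entropies, hence $\ge 0$, vanishing exactly at the points satisfying conditions (b)--(c) of \cref{defn:scompat}. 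Thus the zero set $W := \{\nu : \IDef_{\!\Ar^\dagger}(\nu) = 0\}$ is precisely the set of \scibility\ witnesses, and (via \cref{prop:sinc-nonneg-s2}) the compatible set is exactly $C := \{\mu : \mu \models \Diamond\Ar\} = \pi(W)$.

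Step 2. Fix a smooth Riemannian metric on the simplex (Euclidean or Fisher--Rao) and let $\mu(t)$ solve $\dot\mu = -\nabla\IDef_{\!\Ar}(\mu)$ with $\mu(0) = \mu_0 \in C$. Since $\mu_0 \in C$, choose a witness $\nu_0 \in W$ with $\pi(\nu_0) = \mu_0$. The goal is a tangent vector field $V$ on $W$ with $d\pi_\nu(V(\nu)) = -\nabla\IDef_{\!\Ar}(\pi(\nu))$; then the integral curve $\nu(t)$ of $V$ starting at $\nu_0$ stays in $W$ and projects to $\mu(t)$, so $\mu(t) = \pi(\nu(t)) \in \pi(W) = C$ for all $t$, which is exactly the conjecture. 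Existence of $V$ is the linear solvability question: is the descent direction $-\nabla\IDef_{\!\Ar}(\mu)$ in the image $d\pi_\nu(T_\nu W)$ of the witness tangent space?

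Step 3. Establish solvability at regular points. At a witness $\nu$ with full support on each slice (i.e. $\nu(U_a{=}u, \Src a{=}s) > 0$ for all relevant $u,s$), $W$ is locally a smooth manifold: the determinism constraints (c) are realized by a unique response function (as in \cref{constr:PSEMs}) and persist under small perturbation, while the independence constraint (b) is a smooth submersion. In this regime $\mu = \pi(\nu)$ is the pushforward of the independent noise law $\prod_a\nu(U_a)$ through the response map, and one checks that varying the noise marginals $\nu(U_a)$ sweeps out a neighborhood of $\mu$ in $C$ --- so $\pi|_W$ is a submersion onto $C$ --- whence the implicit function theorem supplies the lift $V$ locally and the flow is confined to $C$ for as long as it remains in this regular stratum.

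Step 4 (the hard part). The obstacle is the singular locus of $W$ and the rank drops of $\pi|_W$: precisely the points where some conditional mass $\nu(U_a{=}u, \Src a{=}s)$ vanishes (the non-uniqueness locus of \cref{constr:PSEMs}) and any time $\mu(t)$ reaches the boundary of the positive simplex. There $W$ need not be a manifold and $d\pi_\nu(T_\nu W)$ can be a proper subspace failing to contain $-\nabla\IDef_{\!\Ar}$, so the smooth lift may break down; this is also why level-set reasoning is hopeless, since the converse of \cref{theorem:sdef-le0} fails (see \cref{appendix:converse-sdef-le0}), so $C$ is neither a sublevel set of $\IDef_{\!\Ar}$ nor even convex (e.g. for $\Ar = \{\emptyset{\to}X, \emptyset{\to}Y\}$, $C$ is the non-convex set of product distributions). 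Two routes past this: (i) a \emph{direction-feasibility} lemma showing the specific vector $-\nabla\IDef_{\!\Ar}(\mu)$ never demands an infeasible marginal velocity --- a compatibility between $\nabla\IDef_{\!\Ar}$ and the combinatorics of $\Ar$ that I would try to extract from the chain-rule identities driving \cref{theorem:sdef-le0}; or (ii) replacing the smooth lift by a continuous selection of witnesses over $C$ through a semialgebraic trivialization of the proper map $\pi|_W$ (a stratified, Michael-selection-type argument), which only needs $\pi|_W$ open onto $C$ rather than a submersion. A fallback Plan B avoids witnesses altogether: show $\SIMInc_{\Ar}$ is non-increasing along the flow, for then $\SIMInc_{\Ar} \ge 0$ with $\SIMInc_{\Ar}(\mu_0) = 0$ forces $\SIMInc_{\Ar}(\mu(t)) \equiv 0$ and hence $\mu(t) \models \Diamond\Ar$; this reduces to the alignment inequality $\nabla\SIMInc_{\Ar}\cdot\nabla\IDef_{\!\Ar} \ge 0$, whose difficulty is that $\SIMInc_{\Ar}$ is a nonsmooth, nonconvex value function of $\mu$. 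I expect the singular/boundary analysis of Step 4 --- equivalently, the alignment inequality of Plan B --- to be the genuine crux, and the reason the statement is posed only as a conjecture.
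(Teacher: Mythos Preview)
The paper does not prove this statement: it is stated as an open \emph{conjecture} (in material gated behind the full-version flag) with no accompanying proof or even proof sketch. So there is no ``paper's own proof'' to compare against; your proposal stands on its own as an attempted attack on an open problem.

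As a strategy, your lift-the-flow-to-the-witness-variety plan is coherent and the reduction in Steps~1--2 is correct. But you yourself have correctly diagnosed that Step~4 is the entire content of the conjecture: the regular-locus argument of Step~3 is soft, and the singular/boundary analysis---where conditional masses vanish and $\pi|_W$ drops rank---is exactly where the difficulty lives. Neither of your proposed routes (the direction-feasibility lemma or the semialgebraic selection) is substantiated, and Plan~B's alignment inequality $\nabla\SIMInc_{\Ar}\cdot\nabla\IDef_{\!\Ar}\ge 0$ is at least as hard as the original statement given that $\SIMInc_{\Ar}$ is a nonsmooth value function. So what you have is a reasonable research outline, not a proof; this matches the paper's own assessment that the claim remains conjectural.
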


The following has emperical support.

\begin{conj}
    The distribution (s?)
    $\hat \mu := \arg\min_{\mu : \mu \models \Diamond \Ar} \kldiv{\mu}{\hat\mu}$
    have the same information profile as $\mu$. 
\end{conj}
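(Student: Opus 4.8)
The plan is to read the statement as follows: fix a distribution $\mu(\X)$ and a hypergraph $\Ar$, let $\hat\mu$ be a minimizer of $\kldiv{\mu}{\nu}$ over all $\nu$ with $\nu\models\Diamond\Ar$, and show that every such $\hat\mu$ shares $\mu$'s information profile $\mat I_\mu$. The first reduction is to replace the profile by joint entropies: since each coordinate of $\mat I_\mu$ is, by inclusion--exclusion, a fixed integer combination of the joint entropies $\{\H_\nu(\mat S)\}_{\mat S\subseteq\X}$ and this linear map is invertible, the claim ``$\hat\mu$ and $\mu$ have the same profile'' is \emph{equivalent} to ``$\H_{\hat\mu}(\mat S)=\H_\mu(\mat S)$ for every $\mat S\subseteq\X$.'' A convenient sufficient (though strictly stronger) target is that the projection preserves every marginal, i.e.\ $\hat\mu(\mat S)=\mu(\mat S)$ for all $\mat S$.

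Next I would characterize $\hat\mu$ variationally. By \cref{prop:gen-sim-compat-means-arise}, the feasible set is exactly the set of $\X$-marginals of distributions arising from a generalized randomized PSEM with structure $\Ar$, equivalently the image under marginalization of the witness region cut out by conditions (a)--(c) of \cref{defn:scompat}. Writing the projection as a minimization over witnesses $\bar\mu(\X,\U)$ subject to the mutual-independence and determination constraints, I would form the Lagrangian and read off the stationarity (KKT) conditions. The hope is that these conditions force the \emph{observable} marginal to stay fixed: intuitively the projection should act by re-choosing the noise variables $\U$ and the mechanisms --- precisely the degrees of freedom that \cref{theorem:siminc-idef-bounds}(b) optimizes over while holding $\mu(\X)$ fixed --- rather than by moving $\mu(\X)$ itself. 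Concretely, I would try to show that any compatible $\nu$ lying off $\mu$'s profile fiber admits a strictly closer compatible competitor, obtained by correcting $\nu$'s subset-marginals toward $\mu$'s along profile-preserving directions, contradicting optimality of $\hat\mu$.

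The main obstacle is twofold. Geometrically, the feasible set is non-convex and has no clean closed form beyond the necessary inequality $\IDef_{\!\Ar}(\nu)\le 0$ of \cref{theorem:sdef-le0}, whose converse fails (\cref{appendix:converse-sdef-le0}); so the KKT multipliers are not available in closed form and there is no Pythagorean identity to invoke. More seriously, the statement is false without further hypotheses: for $\Ar=\{\emptyset\to\{X\},\,\emptyset\to\{Y\}\}$ the compatible distributions are exactly the products $\nu(X)\nu(Y)$, and projecting a correlated $\mu$ yields $\hat\mu=\mu(X)\mu(Y)$, which has $\I_{\hat\mu}(X;Y)=0\ne\I_\mu(X;Y)$ and hence a different profile. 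Thus the genuinely hard part is to isolate the class of pairs $(\mu,\Ar)$ under which the conjecture can hold --- for instance $\Ar$ ``indiscriminate,'' so that every distribution is already compatible and $\hat\mu=\mu$ trivially, or $\mu$ already meeting every information-theoretic necessary condition imposed by $\Ar$ so that the projection only reallocates mass on probability-zero events in the spirit of \cref{example:ditrichotomy} --- and then to prove that under such a restriction the stationarity conditions pin down all $2^{|\X|}-1$ joint entropies. Absent such a hypothesis, the honest outcome of carrying out this plan is a counterexample rather than a proof, which suggests the conjecture should be restated with the relevant structural assumption made explicit.
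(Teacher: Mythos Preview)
The paper does not prove this statement: it is explicitly labeled a \emph{conjecture}, introduced by the sentence ``The following has empirical support,'' and no argument is offered. So there is no proof in the paper to compare against.

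Your analysis goes further than the paper does, and in the right direction. Your counterexample with $\Ar=\{\emptyset\to\{X\},\,\emptyset\to\{Y\}\}$ is sound: the $\Ar$-compatible distributions are exactly the products, the KL-projection of a correlated $\mu$ onto that set is $\hat\mu=\mu(X)\mu(Y)$, and this has $\I_{\hat\mu}(X;Y)=0\ne\I_\mu(X;Y)$, so the profiles differ. This disproves the conjecture as written. You are also right that the statement in the paper is notationally garbled (the defining equation has $\hat\mu$ on both sides and reuses the bound variable $\mu$), and that any salvageable version would need an explicit structural hypothesis on $(\mu,\Ar)$---for instance, restricting to hypergraphs like the 3-cycle where the ``empirical support'' the authors allude to was presumably gathered. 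Your proposed restrictions (indiscriminate $\Ar$, or $\mu$ already satisfying the necessary information-theoretic constraints) are reasonable starting points, though the first is trivial and the second is not obviously sufficient. In short: you have correctly identified that, as stated, the conjecture is false, which is more than the paper establishes.
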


\section{Even Structural Compatibility}

\subsection{Even \SCibility}

If $\mathcal M$ is a cyclic or \subpartl\ PSEM, then $\SD{\mathcal M}$ may 
contain many distributions. Still, so long as it is non-empty, there is
still a unique distribution that, arguably, best describes the distribution of the PSEM (in the absence of interventions)---namely, the one that, for any given value $\mat u \in \V(\U)$, treats all ``fixed-points''
\ifvfull
    $\mat x \in \mathcal F(\mat u)$
\else
    $\{\mat x \in \V(\X) : \forall a \in \Ar. f_a(\Src a(\mat x), u_a) = \Src a(\mat x) \}$
\fi
of the equations $\mathcal F$ symmetrically.
For example, if $\mathcal M$ has 
no exogenous variables ($\U = \emptyset$),
endogenous variables $\X = [X_1, \ldots, X_n]$ that are all binary, 
and equations 
$f_{X_i}(\X\setminus X_i) = X_{(i+1)\% n}$,
then $\SD{\mathcal M}$ is a 1-dimensional specturm of distributions
    supported on the two points $\{ (0, \ldots, 0), (1, \ldots, 1) \}$. 
The distribution that gives the two an equal weight of $\frac12$ is somehow special, in that it is the unique one that does not break the symmetry by preferring either $(0, \ldots, 0)$ or $(1, \ldots, 1)$. 
This intuition is made precise, and generalized to \scibility\ witnesses (rather than PSEMs), by the following definition.

\begin{defn}
        \label{defn:esim-compat}
    We say a witness $\nu(\U,\X)$ to $\mu \models \Diamond \Ar$ is \emph{even}, iff
    it satisfies properties (a,b) of \cref{defn:scompat}, and also
    the following strengthening of property (c):
    \begin{enumerate}[start=4,label={(\alph*)}, nosep]
    \item $\displaystyle
        \nu(\X \mid \U) \propto \mathbbm{1}\Big[ \bigwedge_{a \in \Ar}
            f_a(\Src a, U_a) = \Tgt a \Big],
    $
    \end{enumerate}
    for some set
    $
    \mathcal F = 
    \{ f_a : \V(\Src a, U_a) \to \V(\Tgt a) \}_{a \in \Ar}
    $
     of equations.
     In this case, we say $\mu$ is \emph{evenly} \scible\ (\escible) with $\Ar$,
     write $\mu \emodels \Ar$, and 
    call the pair $(\nu, \mathcal F)$ a witness of E\scibility.
\end{defn}

E\scibility\ 
clearly implies \scibility, 
and thus is a stricter notion. 
Furthermore, E\scibility\ witnesses have an even sharper relationship to causal models:
A witness $(\bar\mu(\U, \X), \mathcal F)$ to E\scibility, 
can be equivalently viewed as a PSEM $\mathcal M = (\U,\X, \mathcal F, \nu(\U))$, 
because the rest of the distribution $\nu(\X \mid \U)$ is determined by $\mathcal F$ and property (d). 

\begin{prop}
    \begin{itemize}
        \item There is a 1-1 correspondence between E\scibility\ witnesses and GRPSEMs $\cal M$ in which $\SD{\cal M} \ne \emptyset$. 
    \end{itemize}
\end{prop}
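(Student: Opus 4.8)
The plan is to realize the correspondence as an explicit pair of mutually inverse maps, with the ``even'' clause (d) of \cref{defn:esim-compat} doing the work of pinning down a \emph{canonical} witness for each noise law and equation system. In one direction, I would send an E-compatibility witness $(\nu, \mathcal F)$ to the $\GRPSEM$ $\mathcal M = (\X, \U, \mathcal F, P)$ whose noise law is $P = \{P_a\}_{a \in \Ar}$ with $P_a := \nu(U_a)$, carrying $\X,\U$ and the equations $\mathcal F$ over unchanged. In the other direction, I would send a $\GRPSEM$ $\mathcal M = (\X, \U, \mathcal F, P)$ with $\SD{\mathcal M} \neq \emptyset$ to the pair $(\nu, \mathcal F)$, where $\nu(\U) := \prod_{a \in \Ar} P_a(U_a)$ and $\nu(\X \mid \U) \propto \mathbbm 1\big[\bigwedge_{a \in \Ar} f_a(\Src a, U_a) = \Tgt a\big]$.

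First I would check that the forward map lands among $\GRPSEM$s with $\SD{\mathcal M} \neq \emptyset$. By clause (b) of \cref{defn:scompat}, the $U_a$ are independent under $\nu$, so $\nu(\U) = \prod_a \nu(U_a) = \prod_a P_a$; by clause (d), $\nu$ concentrates its conditional mass on joint settings satisfying every equation, so all of $\mathcal F$ holds $\nu$-almost surely. Hence $\nu \in \SD{\mathcal M}$, which is in particular nonempty.

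The hard part will be the well-definedness of the backward map. The even conditional $\nu(\X \mid \U{=}\mat u)$ only makes sense when, for each $\mat u$ with $\prod_a P_a(\mat u) > 0$, the fibre of joint settings $\mat x$ satisfying $\bigwedge_a f_a(\Src a, U_a) = \Tgt a$ is nonempty. I would derive exactly this from $\SD{\mathcal M} \neq \emptyset$: any $\nu' \in \SD{\mathcal M}$ has $\U$-marginal $\prod_a P_a$ and satisfies the equations with probability one, so no $\mat u$ in the support of $\prod_a P_a$ can have an empty solution fibre (else $\nu'$ would assign positive probability to an equation-violating outcome). With the fibres nonempty, the uniform-over-solutions law $\nu$ is well-defined on the support of $\prod_a P_a$, and it satisfies (b) and (d) by construction; taking $\mu$ to be the relevant marginal of $\nu$ supplies (a) for free, so $(\nu, \mathcal F)$ is a genuine E-compatibility witness.

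Finally I would verify that the two composites are the identity. Going forward then back: an E-compatibility witness already satisfies (d), so its conditional $\nu(\X \mid \U)$ \emph{is} the uniform-over-solutions law; reconstructing $\nu(\U)$ from the recorded marginals $P_a = \nu(U_a)$ and re-imposing the even conditional therefore returns $\nu$ verbatim. It is here that the ``$\propto$'' in (d) is essential---without evenness, the many witnesses supported on a given family of solution fibres would all collapse to the same $\GRPSEM$, destroying injectivity. Going back then forward: the even construction has $\U$-marginals $\nu(U_a) = P_a$, which the forward map reads off to recover $\mathcal M$ exactly. The only remaining subtlety is that conditionals are determined solely on the support of the noise law, but since all value sets are finite this is a bookkeeping matter rather than a measure-theoretic one.
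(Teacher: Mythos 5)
Your argument is correct, and it is worth noting up front that the paper does not really have a proof of this statement to compare against: the proof environment following the proposition is an unfinished fragment (it invokes \cref{prop:gen-sim-compat-means-arise} and the set $\PSEMsA(\nu)$ from \cref{constr:PSEMs} and then stops). Your route---two explicit mutually inverse maps, with clause (d) of \cref{defn:esim-compat} pinning down the conditional $\nu(\X\mid\U)$ as the uniform-over-solutions law---is essentially a fleshed-out version of the informal remark the paper makes just before the proposition (``the rest of the distribution $\nu(\X\mid\U)$ is determined by $\mathcal F$ and property (d)''), rather than of the stub proof's apparent plan of routing through $\PSEMsA$. The two nontrivial points are exactly the ones you isolate: (i) $\SD{\cal M}\ne\emptyset$ is \emph{equivalent} to every $\mat u$ in the support of $\prod_a P_a$ having a nonempty solution fibre, which is precisely the condition needed for the even conditional to be normalizable, so the backward map is defined on exactly the right class of GRPSEMs; and (ii) injectivity holds only because the correspondence is between \emph{pairs} $(\nu,\mathcal F)$ and GRPSEMs---both sides carry the equations as explicit data, including their (operationally irrelevant) values off the support---and because evenness collapses the many witnesses sharing a given $(\mathcal F,\{\nu(U_a)\}_a)$ to a single one. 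One cosmetic quibble: the closing remark about finiteness is beside the point; the reason off-support conditionals cause no trouble is simply that two joint distributions agreeing on their common support are equal, so the forward-then-backward composite returns $\nu$ verbatim regardless of how one (fails to) define $\nu(\X\mid\U{=}\mat u)$ at probability-zero $\mat u$.
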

\begin{proof}
    1. Given a E\scibility\ witness $(\nu(\X, \U_\Ar), \mathcal F)$, 
    
    by \cref{prop:gen-sim-compat-means-arise}
    $\PSEMsA(\nu)$
\end{proof}

\commentout{
\begin{linked}{theorem}{esem-properties}
    \label{theorem:esem-properties}
    \begin{enumerate}[wide,label={(\alph*)}]
    \item

    \item
    Conversely, if $\mathcal M = (\X,\U,\mathcal F, P)$ is a (generalized) PSEM, 
    then there is a unique distribution $\nu(\U,\X) \in \SD{\mathcal M}$
    with marginal $\nu(\X) = P$,
    and that also satisfies property (d) of \cref{defn:esim-compat} for the equations $\mathcal F$.
    \item
    If $\nu$ is a witness to $\mu \emodels \Ar$, then 
    $\nu$ and its unique corresponding causal model $\mathcal M$
    ascribe the same probabilities to causal formulas. That is,
    $
        \forall \varphi \in \mathcal L(\X,\U). ~
        \Pr_{\nu}(\varphi) = P(\varphi).
    $
    \end{enumerate}
\end{linked}
}%
Thus, PSEMs are in direct 1-1 correspondence with E\scibility\ witnesses when \hgraph s $\Ar = \Ar_G$ for some graph $G$.

We now verify that various results from the main text extend to E\Scibility. 
\begin{itemize}
    \item[{[\cref{theorem:bns}]}]
    When $G$ is acyclic
    (and, more generally, when $\mathcal M \models \U \tto \X$), 
    the extra condition (d) holds trivially, and so
    E\scibility\ coincides with \scibility, and the analogue of \cref{theorem:bns} in which $\models$ is replaced with $\emodels$ also holds. 
    
    \item [{$\vdots$}]
        
    \item[{[\cref{prop:strong-mono}]}]
    As we show in the \hyperref[proof:strong-mono]{the proof of \cref*{prop:strong-mono}}, E\Scibility\ is also monotonic with respect to weakening ($\rightsquigarrow$). 
\end{itemize}

The original scoring function $\IDef$ is related to even \scibility. 

\begin{conj}
    There is a constant $\kappa = \kappa(\Ar, \V)$ depending on the \hgraph\ $\Ar$ and the possible values $\V$ that the variables can take, such that
    \[
        \IDef_{\!\Ar}(\mu) \le \kappa
        \quad\iff\quad
        \mu \emodels \Ar  
    \] 
\end{conj}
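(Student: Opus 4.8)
The plan is to read the claim as saying that the even-compatible set $\{\mu:\mu\emodels\Ar\}$ is exactly a sublevel set of the linear functional $\IDef_{\!\Ar}(\mu)=\mat I_\mu\cdot\mat v_{\!\Ar}$, and that the only sensible candidate for the threshold is $\kappa:=\sup\{\IDef_{\!\Ar}(\mu):\mu\emodels\Ar\}$. With this definition the forward implication ($\mu\emodels\Ar\Rightarrow\IDef_{\!\Ar}(\mu)\le\kappa$) is immediate, so the entire content lies in the reverse inclusion $\{\IDef_{\!\Ar}\le\kappa\}\subseteq\{\emodels\}$. First I would make $\kappa$ well-defined: arguing as in \cref{theorem:siminc-idef-bounds}(b), an even witness depends on the noise only through the response functions $\mat s\mapsto f_a(\mat s,U_a)$, so the alphabets $\V(U_a)$ may be taken finite with $|\V(U_a)|\le|\V(\Tgt a)|^{|\V(\Src a)|}$; the even-compatible set is then the image of a compact space of generalized randomized PSEMs under the map sending each model to the marginal of its symmetric fixed-point distribution. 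The subtlety to watch is that this map is only defined where the noise avoids settings with \emph{no} solution, so the even-compatible set need not be closed; part of the work is to show the supremum defining $\kappa$ is still attained (or to take $\kappa$ over the closure and argue the boundary is handled correctly).

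The engine for everything downstream is the exact identity, valid for any even witness $\nu$ of $\mu\emodels\Ar$ with solution sets $\mathcal F(\mat u)$, that $\IDef_{\!\Ar}(\mu)=-\,\Ex_{\mat u}\big[\log|\mathcal F(\mat u)|\big]-\big(\I_\nu(\X;\U)-\sum_{a}\H_\nu(\Tgt a\mid\Src a)\big)$, obtained by expanding $\H_\nu(\X)=\I_\nu(\X;\U)+\H_\nu(\X\mid\U)$ and using $\H_\nu(\X\mid\U)=\Ex_{\mat u}[\log|\mathcal F(\mat u)|]$ (condition (d) of \cref{defn:esim-compat}) together with $\H_\nu(\Tgt a\mid\Src a,U_a)=0$. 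Both subtracted quantities are nonnegative — the first because $|\mathcal F(\mat u)|\ge 1$, the second by the inequality chain in the proof of \cref{theorem:sdef-le0} — which re-proves $\IDef_{\!\Ar}\le 0$ for even-compatible $\mu$ and, crucially, shows that $\IDef_{\!\Ar}(\mu)$ is near $\kappa$ exactly when both the expected log-multiplicity of fixed points and the information slack are near zero. This is what makes $\kappa$ genuinely depend on $\V$: when $\Ar$ is partitional every variable is a target, a single deterministic fixed point is achievable (so $\kappa=0$, consistent with \cref{theorem:bns}); but a source-only or otherwise ``free'' variable forces $|\mathcal F(\mat u)|$ to be at least the product of the free cardinalities, pushing $\kappa$ to a strictly negative, $\V$-dependent value (e.g. $\kappa=-\log|\V(Z)|$ for $\Ar=\{Z\to X\}$, whose even-compatible distributions are exactly those with $Z$ uniform and independent of $X$).

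The hard part will be the reverse inclusion: given $\mu$ with $\IDef_{\!\Ar}(\mu)\le\kappa$, build an even witness. The approach I would take is to run the response-variable / null-value construction of \cref{sec:null} — place an independent law on each $U_a$ and set $\nu(\X\mid\U)$ uniform over fixed points — and then use the identity above in reverse: the hypothesis $\IDef_{\!\Ar}(\mu)\le\kappa$ should force the two slack terms to their minima, pinning down both the fixed-point multiplicities and the equality cases of the mutual-information inequalities, from which the required independence (condition (b)) and functional form (condition (d)) of the witness can be read off. The central obstacle is precisely that the analogous statement for \emph{ordinary} compatibility is false: \cref{appendix:converse-sdef-le0} exhibits $\mu$ with $\IDef_{\!\Ar}(\mu)=0$ that are not even $\models\Diamond\Ar$-compatible. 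Any proof must therefore exploit the extra rigidity of condition (d) — the symmetric, uniform-over-fixed-points requirement — to rule out exactly those pathological distributions while retaining the genuine near-minimizers; I expect this to hinge on an equality analysis showing that the uniform (maximum-entropy) choice of $\nu(\X\mid\U)$ is the unique way to saturate the bound, and on controlling how solution multiplicities interact with noise independence when $\Ar$ is non-partitional. If that rigidity cannot be leveraged uniformly, the fallback is to prove the characterization under the hypotheses where the converse is already tractable (partitional or graph-induced $\Ar$), and to isolate the non-closedness of the fixed-point map as the precise obstruction in the general case.
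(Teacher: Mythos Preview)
The statement you are attempting to prove is labeled a \emph{conjecture} in the paper, and the paper's own ``proof'' is an empty skeleton (two direction headers with no content). So there is no paper proof to compare against; the relevant question is whether your strategy could succeed.

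It cannot, because the conjecture appears to be false, and the very counterexamples you cite from \cref{appendix:converse-sdef-le0} refute it. Your candidate threshold $\kappa=\sup\{\IDef_{\!\Ar}(\mu):\mu\emodels\Ar\}$ is the only possible one (any larger $\kappa$ fails the reverse implication by definition; any smaller fails the forward one). Now compute $\kappa$ for the 3-cycle on binary variables. Any point mass $\delta_{(a,b,c)}$ is \emph{evenly} compatible: take each $U_i$ trivial and each $f_i$ the appropriate constant; the fixed-point set in the unique context is the singleton $\{(a,b,c)\}$, so condition~(d) of \cref{defn:esim-compat} holds vacuously. Since $\IDef_{\text{3-cycle}}(\delta)=-\I_\delta(A;B;C)=0$, we get $\kappa\ge 0$; combined with your (correct) observation that $\IDef\le 0$ on the even-compatible set, $\kappa=0$. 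But \cref{appendix:converse-sdef-le0} asserts the existence of distributions supported on exactly seven of the eight atoms that have strictly positive interaction information (hence $\IDef<0\le\kappa$) yet are not even QIM-compatible, let alone evenly compatible. These lie strictly inside the sublevel set $\{\IDef\le\kappa\}$ but outside $\{\mu\emodels\Ar\}$.

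The flaw in your plan is the hope that the ``extra rigidity of condition (d)'' would push $\kappa$ strictly below $0$ and thereby exclude the pathological $\mu$. It does not: deterministic distributions always satisfy (d) trivially (a singleton fixed-point set makes the uniform requirement automatic), so the even-compatible set contains all point masses and $\kappa$ is pinned at $0$ whenever $\Ar$ is partitional. Your identity $\IDef_{\!\Ar}(\mu)=-\Ex_{\mat u}[\log|\mathcal F(\mat u)|]-(\text{slack})$ is correct and useful, and your analysis of the sub-partitional case $\Ar=\{Z\to X\}$ is right---there the free variable forces $\kappa<0$ and the characterization does hold. But for partitional $\Ar$ the even-compatible set is not a sublevel set of $\IDef$, so the conjecture as stated cannot be salvaged without restricting the class of hypergraphs or weakening the equivalence.
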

\begin{proof}
    \textbf{($\impliedby$).}
    Suppose $\mu \emodels \Ar$. Then there is some witness $\bar\mu$ extending $\mu$ to independent variables $\U_\Ar = \{ U_a \}_{a \in \Ar}$. 

    \textbf{($\implies$).}
    Suppose that $\IDef_{\!\Ar}(\mu) \le \kappa$. 
\end{proof}

\subsection{ESIM Compatibility Scoring Rules}
We have now seen that, $\IDef_{\!\Ar^\dagger}$ measures distance from 
    being a witness to \scibility\ (\cref{theorem:siminc-idef-bounds}(b)).
Modulo a constant offset or limit, $\IDef_{\!\Ar}$, i.e., the original scoring function applied to the original \hgraph\ 

Let's now repeat the same approach as the previous section, by explicitly constructing a scoring function for E\scibility.
Extend our previous weight vector by one entry, so that $\boldsymbol\lambda = (\lambda^{\text{(a)}},\lambda^{\text{(b)}},\{\lambda^{\text{(c)}}_a\}_{a \in \Ar}, \lambda^{\text{(d)}})$
\unskip.
\begin{align*}
    \mathrm{E}\mskip-1mu\SIMInc_{\Ar, \boldsymbol\lambda}(\mu) 
        &:= 
        \inf_{\nu(\X\!,\,\U)} 
            \blacksquare
        \\ + \lambda^{\text{(d)}}
        \inf_{\mathcal F}& \Ex_{\mat u \sim \nu(\U)}
            \Big[ 
            \kldiv[\big]{ \nu(\X \mid \U{=}\mat u) }{ \mathrm{Unif}[{\mathcal F(\mat u)}] }
            \Big],
\end{align*}
where $\blacksquare$ consists of everything but the infemum from \cref{eq:siminc},
$\mathcal F$ ranges over sets of equations along $\Ar$ (as in \cref{defn:SEM,defn:esim-compat}), and $\mathrm{Unif}[{\mathcal F(\mat u)}]$ is the uniform distribution over joint settings of $\X$ that are consistent with the equations after fixing context $\U = \mat u$. 
Recall that the key step of constructing $\Ar^\dagger$ was to add the hyperarc $\U \to \X$. But for even compatbility, we want to effectively do the opposite---that is, subject to satisfying the other constraints, we want to incentivize, rather than penalize the conditional entropy $\H(\X|\,\U)$. This is made precise by the following proposition.

\begin{prop}
    \begin{enumerate}[label={\normalfont(\alph*)},wide]
    \item For all $\boldsymbol{\lambda} > \mat 0$, $\mathrm{E}\mskip-1mu\SIMInc_{\Ar, \boldsymbol\lambda}(\mu) \ge 0$ with equality iff $\mu \emodels \Ar$.
    \commentout{
    \item 
    $\mathrm{E}\mskip-1mu\SIMInc_{\Ar}(\mu) = \IDef_{\!\Ar}(\mu) + \kappa_{\Ar}(\mu)$,
    where $\kappa_{\Ar}(\mu)$ 
    }
    \end{enumerate}
\end{prop}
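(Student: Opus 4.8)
The plan is to mirror the proof of \cref{prop:sinc-nonneg-s2}, treating the new $\lambda^{(d)}$ term on the same footing as the other three. I read the definition as a single outer infimum over extensions $\nu(\X,\U)$ of the bracketed sum $\blacksquare + \lambda^{(d)}\inf_{\mathcal F}\Ex_{\mat u\sim\nu(\U)}\kldiv{\nu(\X\mid\U{=}\mat u)}{\mathrm{Unif}[\mathcal F(\mat u)]}$, where $\blacksquare$ collects the three weighted terms of $\SIMInc_{\Ar,\boldsymbol\lambda}$ appearing inside its own infimum. The organizing idea is that each of the four summands is nonnegative and vanishes exactly when one clause of even-compatibility (\cref{defn:esim-compat}) holds: $\lambda^{(a)}$ to (a), $\lambda^{(b)}$ to (b), the $\lambda_a^{(c)}$ terms to (c), and the new $\lambda^{(d)}$ term to the evenness clause (d). Since (d) strengthens (c), these matchings are mutually consistent, and any $\nu$ driving the whole bracket to $0$ is precisely an even witness.

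For nonnegativity I would argue termwise, exactly as in \cref{prop:sinc-nonneg-s2}. The term $\lambda^{(a)}\kldiv{\nu(\X)}{\mu(\X)}$ is nonnegative because relative entropy is; the term $\lambda^{(b)}(-\H_\nu(\U)+\sum_a\H_\nu(U_a))$ rewrites as $\lambda^{(b)}\kldiv{\nu(\U)}{\prod_a\nu(U_a)}\ge0$; and each $\lambda_a^{(c)}\H_\nu(\Tgt a\mid\Src a,U_a)$ is a weighted conditional entropy, hence nonnegative. The fourth term is a $\nu(\U)$-expectation of relative entropies, so it is nonnegative whenever finite, and $+\infty$ otherwise (namely when no $\mathcal F$ makes $\mathrm{Unif}[\mathcal F(\mat u)]$ defined and supported on the support of $\nu(\X\mid\U{=}\mat u)$ for a.e.\ $\mat u$). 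As $\boldsymbol\lambda>0$, the bracket is $\ge 0$ for every $\nu$, so its infimum is $\ge 0$.

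For the equality claim I would treat the two directions separately. If $\mu\emodels\Ar$, take an even witness $(\nu,\mathcal F)$ as in \cref{defn:esim-compat} and substitute this $\nu$: clauses (a) and (b) zero out the first two terms, clause (d) forces $\Tgt a$ to be a function of $(\Src a,U_a)$ so the $\lambda_a^{(c)}$ terms vanish, and choosing this same $\mathcal F$ as the inner witness makes $\nu(\X\mid\U{=}\mat u)=\mathrm{Unif}[\mathcal F(\mat u)]$, giving zero KL; hence $\mathrm{E}\mskip-1mu\SIMInc_{\Ar,\boldsymbol\lambda}(\mu)\le0$, and with nonnegativity, $=0$. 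Conversely, suppose $\mathrm{E}\mskip-1mu\SIMInc_{\Ar,\boldsymbol\lambda}(\mu)=0$. I would first fix the value spaces $\V(U_a)=\V(\Tgt a)^{\V(\Src a)}$ of response variables (as justified for \cref{theorem:siminc-idef-bounds} and in \cref{sec:cpd-derandomize}), making the feasible set of $\nu$ compact; since $\nu\mapsto\inf_{\mathcal F}\Ex_{\nu(\U)}[\kldiv{\nu(\X\mid\U{=}\mat u)}{\mathrm{Unif}[\mathcal F(\mat u)]}]$ is a minimum of finitely many lower-semicontinuous functions and $\blacksquare$ is lower-semicontinuous, the bracket attains its infimum at some $\nu^*$, where all four summands equal $0$. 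Taking $\mathcal F^*$ to be the (attained) inner minimizer over the finite set of equation-tuples, vanishing of the terms yields $\nu^*(\X)=\mu$ (clause (a)), mutual independence of $\U$ (clause (b)), and $\nu^*(\X\mid\U{=}\mat u)=\mathrm{Unif}[\mathcal F^*(\mat u)]$ for $\nu^*(\U)$-a.e.\ $\mat u$ (clause (d)). Thus $(\nu^*,\mathcal F^*)$ is an even witness and $\mu\emodels\Ar$.

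The hard part will be attaining the outer infimum while controlling the empty-fixed-point edge case. Unlike $\SIMInc$, the $\lambda^{(d)}$ term couples the continuous conditional $\nu(\X\mid\U)$ to a discrete choice of equations $\mathcal F$, and $\mathrm{Unif}[\mathcal F(\mat u)]$ is undefined when the equations have no fixed point at context $\mat u$. I would dispatch this by noting that, for each fixed $\mathcal F$, the coupled term is lower-semicontinuous in $\nu$ (with the convention that mass placed off the fixed-point set costs $+\infty$), so the finite minimum over $\mathcal F$ is lower-semicontinuous as well; and that at any zero-valued minimizer the support of $\nu^*(\X\mid\U{=}\mat u)$ must lie inside $\mathcal F^*(\mat u)$ for a.e.\ $\mat u$ (in particular $\mathcal F^*(\mat u)\ne\emptyset$), so the undefined case never arises at the optimum. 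A secondary point needing care is that restricting $\U$ to response variables leaves the value of the infimum unchanged—the same reduction already invoked for \cref{theorem:siminc-idef-bounds}(b).
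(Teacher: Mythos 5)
Your proof is correct and follows essentially the route the paper takes for the unweighted analogue (\cref{prop:sinc-nonneg-s2}): termwise nonnegativity via relative entropies and conditional entropies, attainment of the infimum over a compact feasible set, and reading off the witness clauses of \cref{defn:esim-compat} from the vanishing of each term. The paper does not in fact supply a proof of this particular proposition (the surrounding section is left unfinished), and your explicit treatment of attaining the outer infimum with response-variable value spaces and of the empty-fixed-point case for $\mathrm{Unif}[\mathcal F(\mat u)]$ is, if anything, more careful than the paper's handling of the analogous points in the proof of \cref{prop:sinc-nonneg-s2}.
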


In other words, $\IDef$ itself already measures distance from \emph{even \scibility}, once we find the appropriate constant to make it non-negative. 
Although has an enormous benefit of not requiring an infemum. 

\TODO[ TODO:  There's an issue here I still need to finish working out. ]
\[
    \IDef_{\!\Ar}(\mu)
    \le \SIMInc_{\Ar}(\mu) 
    \le \IDef_{\!\Ar^\dagger}(\nu^*)
    \le \IDef_{\!\Ar}(\mu) + \kappa()
    .
\]

$\kappa_\Ar$ is a (possibly infinite) piecewise constant function of $\mu$ with finitely many pieces,
and finite when $\mu \models \Diamond \Ar$. 

\subsection{Complete Derandomization for Cyclic Models}
    \label{ssec:full-derandomize}
We have seen that a number of properties of causal models are simpler when $\mathcal M \models \U \tto \X$. 
In some sense, the job of a causal model is model $\X$ by adding variables $\U$ that account for any uncertainty. 
When $\mathcal M \not\models \U \tto \X$, this job is in some sense incomplete; cycles can create a new source of uncertainty. 
In this subsection, we explore the effects of adding one more variable $U_0$ to account for the remaining uncertainty, by explaining it as randomenss.
Technically speaking, this means looking into one way of converting a 
    randomized PSEM $\cal M$ to one in which $\U \tto \X$.

Our construction is parameterized by a PSEM $\cal M$ and a choice of $\nu \in \SD{\mathcal M}$. In brief, we use a natural construction explained in the next subsection (\ref{sec:cpd-derandomize}) to obtain a ``maximally independent'' derandomization of $\nu(\X \mid \U)$.
The result is a new generalized randomized PSEM, which we call $\derind{\mathcal M}{\nu}$,
    differing from $\cal M$ in that it 
    has one extra \arc\ $\U \to \X$, and, correspondingly, an extra variable $U_0$, and an extra equation $f_0 : \U \to \X$.
This new causal model has two important properties: 
\begin{enumerate}
    \item 
    Only $\nu$ can arise from $\derind{\cal M}\nu$
    \hfill (i.e., $\SD{\derind{\cal M}\nu} = \{ \nu \}$), 
    and 
    \item 
    the exogenous variables determine the values endogenous ones
    \hfill(i.e.,
    $\derind{\cal M}\nu \models (\U,U_0) \tto \X$
    ). 
\end{enumerate}

\subsection*{Constructing the Causal Model $\derind{\cal M}{\nu}$}
We now apply the general technique above to obtain the causal model 
$\derind{\cal M}{\nu}$ discussed in \cref{ssec:full-derandomize}. 
Concretely, let 
$\V(U_0) := \prod_{\mat u \in \V(\U)} \mathcal F(\mat u)$
consist of all functions from $\V(\U)$ to $\V(\X)$ consistent with the equations $\mathcal F$,
and add an equation
$\X = U_0(\mat u) = f_{0}(U_0, \mat u)$
along the \arc\ $\U \to \X$.

Given $\nu \in \SD{\mathcal M}$,
define
$P(U_0 {=} f) := \prod_{\mat u \in \V(\U)} \nu(f(\mat u) \mid \mat u)$.
As shown more generally in \cref{sec:},

\begin{enumerate}
\item
 this indeed a probability distribution, and
 \item 
 the joint distribution $P(\U, U_0) = P(\U)P(U_0)$ 
 extends uniquely along the function defined by $U_0$, to a distribution $P(\U, U_0, \X)$, and the marginal of that distribution on ($\U, \X$) equals $\nu$.
\end{enumerate}
\commentout{%
Moreover, this distribution has marginal $P(\U, \X)$, because
\begin{align*}
    &\sum_{f \in \V(U_0)} P(\U, \X, U_0{=}f)
    \\&= 
    \sum_{f \in \V(U_0)} P(\U, f) \mathbbm1[\X = f(\U)]
    \\
    &= P(\U) \sum_{f \in \V(U_0)} \mathbbm1[\X = f(\U)] \prod_{\mat u \in \V(\U)} \nu(f(\mat u) \mid \mat u)
    \\
    &=  P(\U) \sum_{f \in \V(U_0)} \mathbbm1[\X = f(\U)] \nu(f(\U) \mid \U) \prod_{\mat u \ne \U} \nu(f(\mat u) \mid \mat u)
    \\
    &=  P(\U) \sum_{f \in \V(U_0)} \mathbbm1[\X = f(\U)] \nu(\X \mid \U) \prod_{\mat u \ne \U} \nu(f(\mat u) \mid \mat u)
    \\
    &=  P(\U) \nu(\X \mid \U) \sum_{f \in \V(U_0)} \mathbbm1[\X = f(\U)] \prod_{\mat u \ne \U} \nu(f(\mat u) \mid \mat u)
    \\
    &=  P(\U) \nu(\X \mid \U) \sum_{f \in \prod_{\mat u'}} \mathbbm1[\X = f(\U)] \prod_{\mat u \ne \U} \nu(f(\mat u) \mid \mat u)
    \\
    &= P(\U) \nu(\X \mid \U)
\end{align*}}%

We must also be careful about how to respect interventions. In the most general form, an intervention of the form $f_a \gets g$, for some function $g : \V(\Src a) \to \V(\Tgt a)$, not only modifies the equation $f_a$ by setting it equal to $g$, but also modifies the equation $f_0$ according to:
\[
    f_0 ( \mat u ) := f_0(\mat u)[ ]
\]

\TODO

For instance, if $\mathcal M$ is a PSEM, then to perform an intervention $\mat X {\gets} \mat x$ on the causal model $\derind {\cal M}\nu$,
we mean not only to replace the equation fo $f_X$, but also modify 
\[
    f_0^{\text{new}}(\mat u) := 
    f_0^{\text{old}}(\mat u)[ \mat X \mapsto \mat x]
\]

\TODO

\begin{prop}
    \begin{enumerate}
    \item $\Pr_{\derind{\mathcal M}{\nu}}(\varphi)  = 
     \Pr_{\cal M}(\varphi)$
        
    \end{enumerate}
\end{prop}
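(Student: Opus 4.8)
The plan is to prove the identity by structural induction on the causal formula $\varphi$, reducing everything to a single measure‑preserving correspondence between the contexts of the two models. First I would record the two facts already established for the derandomization: $\SD{\derind{\mathcal M}{\nu}} = \{\nu\}$ and $\derind{\mathcal M}{\nu} \models (\U, U_0) \tto \X$. The latter is the workhorse: because the extended context $(\mat u, f) \in \V(\U) \times \V(U_0)$ determines a \emph{unique} joint setting of $\X$ — namely $\X = f(\mat u) \in \mathcal F(\mat u)$ — the box and diamond readings coincide in $\derind{\mathcal M}{\nu}$, so that $(\derind{\mathcal M}{\nu}, (\mat u, f)) \models \psi$ iff $f(\mat u) \models \psi$ for every Boolean $\psi$ over $\X$. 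The construction also supplies the product law $P'(\mat u, f) = P(\mat u)\prod_{\mat u'} \nu(f(\mat u') \mid \mat u')$.

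The engine of the argument is the following computation, which I would isolate as a lemma: for any Boolean expression $\psi$ over $\X$,
\[
  \Pr_{\derind{\mathcal M}{\nu}}(\psi) = \sum_{\mat u} P(\mat u) \!\!\sum_{f : f(\mat u) \models \psi}\! \prod_{\mat u'} \nu(f(\mat u')\mid \mat u') = \sum_{\mat u} P(\mat u)\, \nu(\psi \mid \mat u) = \nu(\psi),
\]
where the middle equality uses that each coordinate factor $\sum_{\mat x' \in \mathcal F(\mat u')}\nu(\mat x'\mid \mat u') = 1$ — valid since $\nu \in \SD{\mathcal M}$ is supported on solutions — collapses every coordinate of $f$ except the one at $\mat u$, and the last uses $\nu(\U) = P$. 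In words: marginalizing the response function $U_0$ recovers exactly the observational law $\nu$. This disposes of the intervention‑free fragment.

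Next I would handle a basic formula $[\mat Y \gets \mat y]\psi$ by showing that the derandomization commutes with intervention. Concretely, I would verify that $\bigl(\derind{\mathcal M}{\nu}\bigr)_{\mat Y \gets \mat y}$ — which replaces each $f_Y$ by the constant $\mat y[Y]$ and rewrites the bookkeeping equation to $f_0^{\mathrm{new}}(\mat u) := f_0^{\mathrm{old}}(\mat u)[\mat Y \mapsto \mat y]$ — is itself the derandomization $\derind{(\mathcal M_{\mat Y \gets \mat y})}{\nu'}$ of the intervened original model for an appropriately restricted $\nu'$. Granting this, the clause $(\derind{\mathcal M}{\nu},(\mat u,f)) \models [\mat Y\gets\mat y]\psi \iff \bigl(\derind{\mathcal M}{\nu}\bigr)_{\mat Y\gets\mat y}$ satisfies $\psi$ at $(\mat u,f)$ lets me apply the observational lemma inside the intervened model, closing the base case; Boolean combinations then follow by routine measure bookkeeping.

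The hard part will be reconciling the box (universal‑over‑fixed‑points) semantics of the cyclic original model with the single‑solution semantics forced on $\derind{\mathcal M}{\nu}$ by $(\U,U_0)\tto\X$. Wherever $\mathcal F(\mat u)$ contains several fixed points, $(M,\mat u)\models\psi$ demands that \emph{all} of them satisfy $\psi$, whereas the lemma weights them by $\nu(\cdot\mid\mat u)$; a single observational literal already separates the two, so the claimed identity cannot hold for arbitrary $\varphi$ and cyclic $\mathcal M$ without extra hypotheses. The clean regime is $\mathcal M \models \U \tto \X$ (in particular the acyclic case), where each $\mathcal F(\mat u)$ and each intervened $\mathcal F_{\mat Y\gets\mat y}(\mat u)$ is a singleton, box and diamond collapse, and the bounds of \cref{theorem:condition-intervention-alignment} become equalities; there the induction goes through verbatim and gives $\Pr_{\derind{\mathcal M}{\nu}}(\varphi) = \Pr_{\mathcal M}(\varphi)$. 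Pushing past this regime is exactly the gap flagged immediately before the proposition, and I expect it to require genuinely using the \emph{evenness} of $\nu$ together with a fixed‑point analysis showing that intervention preserves the solution sets coordinate‑by‑coordinate, so that the modified $f_0[\mat Y \mapsto \mat y]$ indeed lands in $\mathcal F_{\mat Y\gets\mat y}(\mat u)$.
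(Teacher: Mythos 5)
The paper does not actually prove this proposition: it sits in an unfinished appendix subsection (wrapped in the suppressed \texttt{vfull} block), immediately preceded by a literal \texttt{TODO}, and the very construction it depends on --- how $f_0$ is to be modified under an intervention --- is itself left as a dangling fragment. So there is no proof to compare yours against; what I can do is assess your proposal on its own terms.

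Your positive steps are sound. The observational lemma is exactly the collapsing computation from the paper's Appendix A.1: since $\nu \in \SD{\mathcal M}$ puts $\nu(\cdot\mid\mat u')$ entirely on $\mathcal F(\mat u')$, every coordinate factor of $\prod_{\mat u'}\nu(f(\mat u')\mid\mat u')$ other than the one at $\mat u$ sums to $1$, and $\nu(\U)=P$ finishes it; so $\Pr_{\derind{\mathcal M}{\nu}}(\psi)=\nu(\psi)$ for Boolean $\psi$. More importantly, your diagnosis in the final paragraph is correct, and it shows the proposition is \emph{false as stated} for cyclic $\mathcal M$. Concretely, take the paper's own two-variable example with $f_X(X')=X'$ and $f_{X'}(X)=X$, no nontrivial exogenous variables, and $\nu$ uniform on the two fixed points $(0,0)$ and $(1,1)$. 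The box semantics gives $\Pr_{\mathcal M}(X{=}0)=0$ (not all solutions satisfy $X{=}0$), while your lemma gives $\Pr_{\derind{\mathcal M}{\nu}}(X{=}0)=\nu(X{=}0)=\tfrac12$. By \cref{theorem:condition-intervention-alignment}, the most one can hope for in general is the sandwich $\Pr_{\mathcal M}([\mat Y{\gets}\mat y]\varphi)\le\Pr_{\derind{\mathcal M}{\nu}}(\varphi\text{-analogue})\le\Pr_{\mathcal M}(\langle\mat Y{\gets}\mat y\rangle\varphi)$, with equality exactly in the regime $\mathcal M\models\U\tto\enV$ that you identify. You are also right that the intervention case has a second, independent gap: nothing guarantees $f_0(\mat u)[\mat Y\mapsto\mat y]\in\mathcal F_{\mat Y\gets\mat y}(\mat u)$, so the intervened derandomized model may have no solution in some contexts; this is precisely the hole the paper's \texttt{TODO} marks. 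In short, your proof goes through verbatim when $\mathcal M\models\U\tto\enV$, and your identification of why it cannot be extended beyond that without restating the proposition (e.g., as the two-sided bound, or with an evenness hypothesis and a repaired intervention rule) is the correct conclusion, not a defect of your argument.
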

}%

\end{document}